\DeclareFontFamily{U}{mathx}{\hyphenchar\font45}
\DeclareFontShape{U}{mathx}{m}{n}{
      <5> <6> <7> <8> <9> <10>
      <10.95> <12> <14.4> <17.28> <20.74> <24.88>
      mathx10
      }{}
\DeclareSymbolFont{mathx}{U}{mathx}{m}{n}
\DeclareMathSymbol{\bigtimes}{1}{mathx}{"91}
\definecolor{DarkRed}{rgb}{0.5,0.1,0.1}
\definecolor{DarkBlue}{rgb}{0.1,0.1,0.5}
\definecolor{ForestGreen}{rgb}{0.1333,0.5451,0.1333}
\definecolor{Red}{rgb}{0.9,0,0}
\crefname{property}{property}{Property}
\crefname{equation}{eq}{Eq}
\def\BState{\State\hskip-\ALG@thistlm}
\newtheorem{theorem}{Theorem}
\newtheorem{lemma}{Lemma}[section]
\newtheorem{property}{Property}[section]
\newtheorem{proposition}[lemma]{Proposition}
\newtheorem{claim}[lemma]{Claim}
\newtheorem{definition}[lemma]{Definition}
\newtheorem{problem}{Problem}
\newtheorem*{claim*}{Claim}
\newtheorem*{proposition*}{Proposition}
\newtheorem*{lemma*}{Lemma}
\newtheorem*{problem*}{Problem}
\newtheorem*{property*}{Property}
\crefname{lemma}{Lemma}{Lemmas}
\crefname{claim}{Claim}{Claims}
\crefname{property}{Property}{Properties}
\newtheorem{mdresult}{Result}
\newtheoremstyle{restate}{}{}{\itshape}{}{\bfseries}{~(restated).}{.5em}{\thmnote{#3}}
\theoremstyle{restate}
\theoremstyle{definition}
\newtheorem{mdalg}{Algorithm}
\newenvironment{Algorithm}{\begin{tbox}\begin{mdalg}}{\end{mdalg}\end{tbox}}
\renewcommand{\qed}{\nobreak \ifvmode \relax \else
      \ifdim\lastskip<1.5em \hskip-\lastskip
      \hskip1.5em plus0em minus0.5em \fi \nobreak
      \vrule height0.75em width0.5em depth0.25em\fi}
\newcommand{\Qed}[1]{\ensuremath{\qed_{\textnormal{~#1}}}}
\newcommand{\Ot}{\ensuremath{\widetilde{O}}}
\newcommand{\eps}{\ensuremath{\varepsilon}}
\newcommand{\Paren}[1]{\Big(#1\Big)}
\newcommand{\bracket}[1]{\left[#1\right]}
\newcommand{\paren}[1]{\ensuremath{\left(#1\right)}\xspace}
\newcommand{\card}[1]{\left\vert{#1}\right\vert}
\newcommand{\Omgt}{\ensuremath{\widetilde{\Omega}}}
\newcommand{\expect}[1]{\Exp\bracket{#1}}
\newcommand{\set}[1]{\ensuremath{\left\{ #1 \right\}}}
\newcommand{\poly}{\mbox{\rm poly}}
\newcommand{\polylog}{\mbox{\rm  polylog}}
\newcommand{\OPT}{\ensuremath{\mbox{\sc opt}}\xspace}
\newcommand{\ALG}{\ensuremath{\mbox{\sc alg}}\xspace}
\DeclareMathOperator*{\Exp}{\ensuremath{{\mathbb{E}}}}
\DeclareMathOperator*{\Prob}{\ensuremath{\textnormal{Pr}}}
\renewcommand{\Pr}{\Prob}
\newcommand{\Ex}{\Exp}
\newenvironment{tbox}{\begin{tcolorbox}[
		enlarge top by=5pt,
		enlarge bottom by=5pt,
		 breakable,
		 boxsep=0pt,
                  left=4pt,
                  right=4pt,
                  top=10pt,
                  arc=0pt,
                  boxrule=1pt,toprule=1pt,
                  colback=white
                  ]
	}
{\end{tcolorbox}}
\newcommand{\II}{\ensuremath{\mathbb{I}}}
\newcommand{\mireal}[1][]{
  \ifx\relax#1\relax%
    \II(\mione \,; \mitwo)%
  \else%
    \II(\mione \,; \mitwo\mid #1)%
  \fi
}
\renewcommand{\SS}{\ensuremath{\mathcal{S}}}
\newcommand{\degp}[1]{\textnormal{deg}^{\!+}\!(#1)}
\title{Sublinear Time and Space Algorithms for Correlation Clustering via Sparse-Dense Decompositions}
 \author{
	Sepehr Assadi\footnote{(\href{mailto:sepehr.assadi@rutgers.edu}{sepehr.assadi@rutgers.edu) Department of Computer Science, Rutgers University.  Research supported in part by a NSF CAREER Grant CCF-2047061, and a gift from Google Research.}} \and 
Chen Wang\footnote{(\href{mailto:vihan.shah98@rutgers.edu}{\text{wc497@cs.rutgers.edu}}) Department of Computer Science, Rutgers University.  Research supported in part by a NSF CAREER Grant CCF-2047061, and a gift from Google Research. }  
}
\date{}
\begin{document}
\maketitle

\pagenumbering{roman}


\begin{abstract}

We present a new approach for solving (minimum disagreement) correlation clustering that results in sublinear algorithms with highly efficient time and space complexity for this problem. In particular, we obtain the following
algorithms for $n$-vertex $(+/-)$-labeled graphs $G$: 

\begin{itemize}
	\item A {sublinear-time} algorithm that with high probability returns a constant approximation clustering of $G$ in  $O(n\log^2{n})$ time assuming  access to the adjacency list of the $(+)$-labeled edges of $G$ (this is almost quadratically faster than even reading the input once). Previously, no sublinear-time algorithm was known for this problem with any multiplicative approximation guarantee. 
	 
	 \item A semi-streaming algorithm that with high probability returns a constant approximation clustering of $G$ in $O(n\log{n})$ space and a single pass over the edges of the graph $G$ (this memory is almost quadratically smaller 
	 than input size). Previously, no single-pass algorithm with $o(n^2)$ space was known for this problem with any approximation guarantee. 
\end{itemize}

	The main ingredient of our approach is a novel connection to \textbf{sparse-dense graph decompositions} that are used extensively in the graph coloring literature. 
	To our knowledge, this connection is the first application of these decompositions beyond graph coloring, and in particular for the correlation clustering problem, and can be of independent interest.  
\end{abstract}

\bigskip

\clearpage

\setcounter{tocdepth}{3}
\tableofcontents

\clearpage

\pagenumbering{arabic}
\setcounter{page}{1}

\section{Introduction}
\label{sec:intro}

Correlation clustering is an extensively studied problem in theoretical computer science and machine learning. In this problem, we are given a complete undirected graph $G=(V,E)$ with edges labeled by $(+)$ or $(-)$. 
 The general goal is to cluster the vertices in a way that $(+)$ edges appear more inside the clusters and $(-)$ edges appear more outside. Correlation clustering has found its applications in various areas, including image segmentation \cite{KimYNK14}, document clustering \cite{bansal2004correlation},  community detection \cite{ShiDELM21}, cross-lingual link detection \cite{GaelZ07}, matrix decomposition \cite{Aszalos21}, among others \cite{ChenSX12,BonchiGU13}. 

One of the most popular optimization objectives for correlation clustering is disagreement minimization, wherein the goal is to minimize the total number of $(+)$ edges that cross different clusters and $(-)$ edges that are inside the same clusters. We study
this disagreement minimization variant of correlation clustering in this paper. The problem is known to be both NP-hard and APX-hard, and there is a classical polynomial-time algorithm that achieves $2.06$-approximation \cite{bansal2004correlation}. Since then, disagreement minimization have been explored under various contexts, including the semi-random model \cite{Chierichetti2021}, fair clustering \cite{AhmadianE0M20}, quantum approximation \cite{weggemans2021solving}, and local clustering \cite{BonchiGK13,JafarovKMM21}, among others.

Nevertheless, for applications to modern massive datasets, even the efficiency of the polynomial-time approximation algorithms become insufficient. In particular, for a modern massive graph, even simple tasks like storing and processing all the edges \emph{once} becomes challenging. Therefore, there is a quest for obtaining \emph{sublinear} algorithms for correlation clustering. In such algorithms, the resource costs is usually asymptotically smaller than the input size, which allows correlation clustering to scale up to massive datasets. 

Two of the most canonical examples of sublinear algorithms are \emph{sublinear-time} algorithms and \emph{(sublinear-space) streaming} algorithms. The former model assumes the data is provided to the algorithm in a specific format, say, the adjacency list
of the input graph, and one can query each entry of the input in $O(1)$ time; the goal is then to solve the problem faster than even reading the entire input once. The latter model instead focuses on space of the 
algorithms by assuming the data is presented to the algorithm in a stream and the goal is to process this stream in a space much smaller than the input size. In light of the above discussion, we study the following fundamental question in this paper: 
\begin{quote}
\emph{Can we design sublinear time and/or space algorithms for correlation clustering?}
\end{quote}

This question and similar variants have already been pursued extensively in the literature. For sublinear-time algorithms,~\cite{BonchiGK13,Garcia-SorianoK20} designed algorithms that given access to the adjacency matrix of $G$, 
run in $O(n/\eps)$ time and output a $3$-multiplicative plus $(\eps \cdot n^2)$-additive approximation to correlation clustering. Moreover, impossibility results by~\cite{BonchiGK13,Bressan19} prove that these algorithms are effectively 
optimal in a sense that one needs\footnote{Throughout, we use $\Omgt(f(n))$ and $\Ot(f(n))$ to suppress dependence on $\poly\log{(n)}$ factors.} $\Omgt(n^2)$ additive error whenever working with $\Ot(n)$ time algorithms in the adjacency matrix access model. These results however leave open the possibility of other natural access models to the input such as adjacency list access, employed extensively both in theory and practice. 

For sublinear-space algorithms, the `sweet spot' for correlation clustering is considered \emph{semi-streaming} algorithms~\cite{FeigenbaumKMSZ05} that have space complexity $\Ot(n)$ which is proportional to the answer itself~\cite{ChierichettiDK14,AhnCGMW21,CohenAddadLMNP21}. The first semi-streaming algorithm for this problem is due to~\cite{ChierichettiDK14} and obtains $(3+\eps)$-approximation in $O(\frac{\log^{2}(n)}{\eps})$ passes. 
This algorithm was improved by~\cite{AhnCGMW21} to $3$-approximation in $O(\log\log{n})$ passes. Most recently,~\cite{CohenAddadLMNP21} presented a novel algorithm
with $O(1)$-approximation in $O(1)$ passes\footnote{While the constant in number of passes in~\cite{CohenAddadLMNP21} is not stated explicitly by the authors, it appears to be $6$ passes.}. These results however come short of providing any non-trivial
guarantees for \emph{single-pass} algorithms, which are by far the most studied and practically appealing variants of (semi-)streaming algorithms\footnote{Beside being quantitatively more efficient, single-pass algorithms are qualitatively more appealing because they can process data generated ``on the fly'' without ever having to store it even once (e.g., in applications in network monitoring).}.

In this work, we answer this fundamental question in the affirmative by designing highly efficient sublinear-time and sublinear-space algorithms for $O(1)$-approximation of correlation clustering in these models: An $\Ot(n)$-time algorithm assuming
adjacency list access model, and a semi-streaming algorithm in a single pass.

\subsection{Our Contributions}
\label{subsec:rst-overview}

Our first main result is a sublinear-time algorithm that instead of adjacency matrix in prior work~\cite{BonchiGK13,Bressan19,Garcia-SorianoK20}, works with the adjacency list of $(+)$-labeled edges and bypass the strong impossibility 
results of~\cite{BonchiGK13,Bressan19}. Formally, 

\begin{theorem}
\label{res:main-time-alg}
There exists a randomized algorithm that given the adjacency list of the $(+)$-labeled subgraph of any labeled graph, with high probability outputs an $O(1)$-approximation of correlation clustering in $O(n\log^2{n})$ time and $O(n\log{n})$ query.
\end{theorem}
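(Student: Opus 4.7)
The plan is to leverage the sparse-dense decomposition that the abstract advertises as the main tool. Concretely, I would run the algorithm in three logical stages. First, compute a sparse-dense decomposition of the $(+)$-labeled subgraph $G^+$: partition $V$ into dense vertices, which group into near-cliques $K_1, \ldots, K_t$ in $G^+$ (a near-clique is a set where most pairs are $(+)$-adjacent and most vertices have nearly identical $(+)$-neighborhoods), and sparse vertices, whose $(+)$-neighborhoods are locally far from a clique. Second, output the clustering: each near-clique $K_i$ becomes one cluster, and each sparse vertex becomes a singleton. Third, analyze this clustering to argue it incurs only $O(\OPT)$ disagreements.

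For the sublinear-time implementation, the decomposition must be computable without ever reading most of $G^+$. The idea is that each vertex $v$'s dense/sparse status, and moreover its near-clique membership, can be certified by sampling only $O(\log n)$ entries from its $(+)$-adjacency list and performing a similar sample of each sampled neighbor's list. By a Chernoff bound, the estimated local density concentrates around the true density; and for dense vertices, a ``canonical representative'' of the near-clique (e.g., the smallest-ID vertex whose sampled neighborhood agrees sufficiently with $v$'s) can be computed and agreed upon by all members of $K_i$ with high probability. This yields $O(\log n)$ queries and $O(\log^2 n)$ time per vertex, for a total of $O(n \log n)$ queries and $O(n \log^2 n)$ time, matching the claimed bounds.

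The approximation analysis is the heart of the theorem and, I expect, the main obstacle. The argument I would attempt has two parts. For near-cliques output as clusters, the disagreements are the missing $(+)$-edges inside $K_i$ plus the $(+)$-edges from $K_i$ to outside; by the very definition of ``near-clique'', these are few, and a charging argument shows that any clustering (in particular \OPT) that does not essentially cluster $K_i$ together is forced to pay comparably many disagreements because $K_i$ is locally rigid. For sparse vertices kept as singletons, the paid cost at vertex $v$ is its $(+)$-degree, and one charges this to ``bad triangles'' through $v$ in the spirit of Ailon-Charikar-Newman: sparseness of $v$ guarantees that a constant fraction of the $(+)$-neighbor pairs of $v$ are non-adjacent or have disparate neighborhoods, each producing a triangle \OPT must pay for, and a careful double-counting turns these local payments into an $O(1)$-approximation. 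The hard part will be gluing the dense and sparse charging schemes together without double-charging \OPT, especially on edges between a near-clique and a sparse vertex, which is exactly the kind of boundary behavior sparse-dense decompositions were designed to control in the coloring literature but whose usefulness for the disagreement objective is the novel claim of the paper.
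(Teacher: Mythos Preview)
Your high-level plan---compute a sparse-dense decomposition of $G^+$, output almost-cliques as clusters and sparse vertices as singletons, then charge the cost to $\OPT$---is exactly the paper's strategy (its~\Cref{thm:decomposition} and~\Cref{thm:cc-alg}). Your bad-triangle intuition for the charging is also essentially what the paper formalizes in~\Cref{lem:helper-charge}: for a $(+)$-edge $(u,v)$ whose endpoints have large symmetric difference, every $w\in N^+(u)\sym N^+(v)$ yields a bad triangle, and one charges uniformly over those. The almost-clique side is handled by a direct case analysis (\Cref{lem:charging-out-cliques,lem:charging-in-cliques}), not very far from what you sketch.

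There is, however, a real gap in your implementation sketch, and it is precisely where the paper spends most of its effort. In the adjacency-list model you only get degree and $i$-th-neighbor queries; there is no pair query. So after you sample $O(\log n)$ neighbors of $u$, you cannot test whether those samples land in $N^+(v)$ without reading all of $N^+(v)$, and two random $O(\log n)$-subsets of $N^+(u)$ and $N^+(v)$ will almost never intersect even when the full sets are nearly equal. This kills both your density test and your ``smallest-ID canonical representative'' step: either you scan $\Omega(\deg{v})$ entries per comparison, blowing the time budget, or different members of the same almost-clique will disagree on their representative because they sampled different $O(\log n)$-subsets. The paper resolves this with a second, different sampling primitive: it samples each vertex $v$ into a set $\sample$ independently with probability $\Theta(\log n/\deg{v})$ and stores the \emph{entire} neighborhood $N^+(v)$ for those $v$. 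This costs $O(n\log n)$ edges in total by a Bernstein-inequality argument (\Cref{lem:sample-helper-lem}); the full neighborhoods serve as anchors against which every other vertex's $O(\log n)$ edge-samples can be tested. The almost-cliques are then recovered not via a canonical ID but by building, for each anchor $v\in\sample$, a \emph{candidate set} $C_v$ (\Cref{def:candidate-sets}), proving the collection $\{C_v\}$ is \emph{laminar} (\Cref{p:7}), and returning the roots.

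A smaller but related point: your notion of ``sparse'' as ``locally far from a clique'' is the standard one from the coloring literature, but the paper deliberately departs from it. Its sparse vertices are those with $\Omega(\eps\,\deg{v})$ neighbors $u$ satisfying $|N^+(u)\sym N^+(v)|\ge\Omega(\eps)\max\{\deg{u},\deg{v}\}$---a per-neighbor symmetric-difference condition rather than an aggregate non-edge count. This distinction is exactly what makes the singleton-charging go through cleanly (\Cref{lem:charging-sparse}) and is highlighted in the paper as the reason prior decompositions (e.g., that of Alon--Assadi) do not directly apply.
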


To our knowledge, prior to our work, no $o(n^2)$ time algorithm for multiplicative-approximation of correlation clustering was known (under any access model). 
We shall formally define the access model in~\Cref{res:main-time-alg} in~\Cref{sec:sublinear-models} but basically it  involves providing the algorithm with query access to the $(+)$-edges incident on each vertex individually. This seems to be a natural
access from a practical point of view in many applications. For instance, in the applications of coreference~\cite{CohenR02} and cross-lingual link detection~\cite{GaelZ07}, the `natural' labels available are often the positive ones (e.g. the `co-occurance' and the `article similarity'), and the negative labels are usually artificially-inserted. Moreover, in~\Cref{app:models}, we further study other natural sublinear-time access models 
such as adjacency list access to the labeled graph itself or $(-)$-labeled subgraph instead and prove that no multiplicative approximation is possible in these models in $o(n^2)$ time. This highlights our model 
as the more theoretically-natural one for this problem also. 

Our second main result is a single-pass semi-streaming algorithm for correlation clustering. 

\begin{theorem}
\label{res:main-space-alg}
There exists a randomized algorithm that with high probability computes an $O(1)$-approximation of correlation clustering in $O(n\log{n})$ space and a single pass over the edges of any given labeled graph. 
\end{theorem}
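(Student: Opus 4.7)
My plan is to obtain the semi-streaming algorithm essentially ``for free'' from the sublinear-time algorithm of~\Cref{res:main-time-alg}, by executing that adjacency-list algorithm on top of reservoir-sampled data collected in a single pass. The abstract advertises that the core technical object is a sparse-dense decomposition of the $(+)$-graph; the salient feature of such decompositions is that, to decide whether a vertex $v$ lies in an almost-clique and to identify which one, it suffices to look at $O(\log n)$ \emph{uniformly random} $(+)$-neighbors of $v$ together with their overlap structure. This local, non-adaptive sampling is exactly what a single streaming pass can deliver.

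Concretely, I would proceed as follows. During the pass, for each vertex $v \in V$, I maintain (i) the exact value of $\deg^+(v)$ and (ii) a reservoir $\mathcal{R}(v)$ of $\Theta(\log n)$ uniformly random elements of $N^+(v)$, updated via the standard reservoir-sampling procedure as each $(+)$-edge incident to $v$ arrives in the stream; $(-)$-edges are ignored. Since the reservoir has size $O(\log n)$ per vertex, the total memory is $O(n \log n)$ words as required. Once the stream terminates, I feed $\{\mathcal{R}(v), \deg^+(v)\}_{v \in V}$ into the decomposition-and-clustering subroutine of~\Cref{res:main-time-alg}, which produces an $O(1)$-approximate clustering. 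Because the per-vertex random neighbor samples collected in the stream have exactly the distribution of the samples that the sublinear-time algorithm would have drawn via its queries, the correctness and approximation guarantees transfer verbatim.

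The main obstacle, and the one I would need to verify carefully, is that every query the sublinear-time algorithm performs can be cast as a non-adaptive ``draw a uniformly random $(+)$-neighbor of $v$'' query; any genuinely adaptive query of the form ``is $(u, v)$ a $(+)$-edge?'', with $(u,v)$ depending on earlier answers, cannot be answered from the reservoirs alone. The expectation, consistent with how sparse-dense decompositions are used in the graph coloring literature, is that all local structural decisions---dense vs.\ sparse classification, choice of almost-clique for each dense vertex, and clustering of sparse vertices---reduce to comparing pairs of reservoirs $\mathcal{R}(u), \mathcal{R}(v)$ and to counting intersections among them, which require no further edge queries. If a handful of explicit adjacency tests between sampled representatives turn out to be unavoidable, I would upgrade each per-vertex reservoir to an $\ell_0$-sampler sketch of $N^+(v)$ of the same $\polylog(n)$ size, preserving the $O(n \log n)$ total space.

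Once this reduction is established, the $O(n \log n)$ single-pass space bound and the $O(1)$-approximation guarantee of~\Cref{res:main-space-alg} follow directly by applying~\Cref{res:main-time-alg} to the sampled data, with the high-probability bound obtained by a union over the $O(n)$ vertices using a slightly larger reservoir size.
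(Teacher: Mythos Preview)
Your proposal has a genuine gap. The recovery algorithm of~\Cref{thm:decomposition} needs \emph{three} ingredients: (a) exact degrees, (b) the per-vertex reservoirs $\NS{v}$ of $\Theta(\log n)$ uniform $(+)$-neighbors, and (c) a set $\sample$ of vertices where each $v$ is included with probability $\min\{c\log n/\degp{v},1\}$ \emph{together with the full list $N^+(v)$} for every $v\in\sample$. Your plan supplies (a) and (b) but omits (c). This is not a cosmetic omission: the dense-vertex tester (\Cref{lem:classify}) and the formation of candidate sets $C_v$ both require knowing the complete set $\Low{v}{\cdot}\subseteq N^+(v)$ for each $v\in\sample$, so that one can check, for every $u\in V$, the fraction of $\NS{u}$ landing in $\Low{v}{\cdot}$. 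No amount of per-vertex random-neighbor sampling gives you $\Low{v}{\cdot}$ exactly.

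Your diagnosis of the obstacle is also off. The issue is not adaptive pair queries but rather that the sublinear-time algorithm (\Cref{alg:cc-sub-time}, line~\ref{line:S-query-sample}) explicitly reads entire adjacency lists $N^+(v)$ for $v\in\sample$; these are not ``draw a random neighbor'' queries. Consequently, upgrading reservoirs to $\ell_0$-sampler sketches does not help, since those still return random samples rather than full neighborhoods. The paper closes this gap via~\Cref{lem:vertex-sample}: a reservoir-style scheme that, at every point in the stream, maintains a set $S_t$ with each $v$ present independently with probability $\min\{\beta\log n/\deg_t(v),1\}$ along with the full current neighbor list $N_t(v)$, and shows via a Bernstein-type argument (\Cref{lem:sample-helper-lem}) that the total number of stored edges is $O(n\log n)$ with high probability. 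That lemma is the missing idea in your plan.
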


To our knowledge, no $o(n^2)$ space streaming algorithms was known for this problem in a single pass before our work. The only single-pass algorithm for this problem that we are aware of is due to~\cite{AhnCGMW21} that 
requires $\Ot(n+m)$ space on graphs with $m$ $(-)$-labeled edges which can  be $\Omega(n^2)$ space\footnote{Note that from a purely streaming point of view, one can entirely store a graph with $m$ $(-)$-labeled edges in $\Ot(n+m)$ space 
(even in a dynamic stream; see~\Cref{app:models}), and then solve the problem exactly on the stored graph at the end of the stream in exponential time.}. 
In~\Cref{app:models}, we further show that our algorithm in~\Cref{res:main-space-alg} can be extended to other streaming models such as when only $(+)$- or $(-)$-labeled edges are arriving, or even to dynamic streams, still in $\Ot(n)$ space. 

\subsection{Our Techniques} 

The earlier work on sublinear algorithms for correlation clustering in~\cite{BonchiGK13,ChierichettiDK14,Garcia-SorianoK20,AhnCGMW21} were all based on implementing
the so-called \emph{Pivot} method of~\cite{AilonCN08}  via sublinear algorithms. The Pivot method is  based on computing a \emph{random-order maximal independent set} of $(+)$-labeled edges and achieves a $3$-approximation. This method however does not seem particularly suitable for either sublinear-time or (single-pass) streaming algorithms: it is known that computing \emph{any} type of maximal independent set (let alone the one required by the Pivot method) requires $\Omega(n^2)$ time given access to both adjacency list or matrix of the input graph~\cite{AssadiCK19a,AssadiS19} as well as $\Omega(n^2)$ space in single-pass streams~\cite{AssadiCK19a,CormodeDK19}. 

In a recent elegant work,~\cite{CohenAddadLMNP21} presented an interesting new insight on the problem. Their approach is based on trimming down the edges of the graph in multiple steps into $\Ot(n)$ edges that can be stored in the memory
and finding connected components of this trimmed graph. The authors then show that placing these connected components into their own clusters achieves an $O(1)$-approximation to the problem. The proof of this part 
is done via a charging scheme that exploits the fact that vertices not in the same connected component have ``different neighborhoods'' while vertices inside the components are ``tightly connected''. 

In this work, we first observe that this general strategy of partitioning a graph into different-neighborhood vs tightly-connected subgraphs is reminiscent of a classical approach
in graph coloring literature referred to as \emph{sparse-dense decompositions}. These decompositions have their 
root in the work of~\cite{MolloyR98,Reed98,Reed99,Reed99b} (see also~\cite{MolloyR10,MolloyR14}) in graph theory and more recently have been at the core of several breakthrough results on graph 
coloring in distributed~\cite{HarrisSS16,ChangLP18,HalldorssonKMT21} and sublinear algorithms~\cite{AssadiCK19a,AlonA20}. A typical sparse-dense decomposition partitions the graph into \emph{sparse} vertices that have many non-edges in their neighborhood, and a collection of \emph{almost-cliques} that are subgraphs which are close to a clique in a property testing sense. It is thus natural to wonder whether such decompositions can be used in place of the trimming step of~\cite{CohenAddadLMNP21}, specially as some earlier work in~\cite{AssadiCK19a} have already shown ways of finding these decompositions via different sublinear algorithms. 

The first challenge in implementing this strategy is that these decompositions  are almost exclusively tailored toward maximum-degree $\Delta$ of the graph, in the sense that their sparse vertices include all vertices with degree, say, $<0.9\Delta$, and their almost-cliques are only $\approx \Delta$-cliques. While this is quite natural for graph coloring problems such as $(\Delta+1)$-coloring and alike, such a decomposition would not be particularly helpful for correlation clustering. The only exception  
that we are aware of is a recent decomposition of~\cite{AlonA20} for the so-called $(\deg+1)$-coloring problem which actually generates different types of sparse vertices and almost-cliques that are proportional to degree of individual vertices. 

It turns out however that the decomposition of~\cite{AlonA20} is \emph{too rigid} to be used in the context of the correlation clustering and the charging framework of~\cite{CohenAddadLMNP21} (we elaborate more on this in~\Cref{sec:decomposition}). 
On top of that, the decomposition of~\cite{AlonA20} is primarily used as a structural result in~\cite{AlonA20} and its only known algorithmic implementation requires using several instantiations of the algorithm of~\cite{AssadiCK19a}, which does not result in simple nor particularly efficient algorithms for the decomposition\footnote{We should emphasize that main results of both~\cite{AssadiCK19a,AlonA20} rely on \emph{existence} of such a decomposition and do 
not require an algorithm for finding the decomposition (although~\cite{AssadiCK19a} give such algorithms also). This is very different from our purpose of using the decomposition in this paper as it is only useful to us if it can be find algorithmically.}. 

Our main technical ingredient in this paper is then to design a new sparse-dense decomposition that remedies this situation. We state our decomposition informally here and postpone the detailed and lengthy definitions to~\Cref{thm:decomposition} 
and~\Cref{thm:decomposition-structural} (see also~\Cref{sec:prelim} for any missing notation).

\begin{mdframed}[backgroundcolor=lightgray!40,topline=false,rightline=false,leftline=false,bottomline=false,innertopmargin=2pt]
\paragraph{A (Yet Another) Sparse-Dense Decomposition:}~ 

\noindent
 For any small constant $\eps>0$, vertices of any graph $G=(V,E)$ (not necessarily a labeled graph)  can be decomposed into the following sets: 
\begin{itemize}
\item \emph{Sparse vertices}: each sparse vertex $v$ has approximately $\eps \cdot \deg{(v)}$ neighbors $u$ such that $N(v)$ and $N(u)$ differ in approximately $\eps \cdot \max\set{\deg(v),\deg(u)}$ vertices\footnote{Beside the recovery algorithm, this is the guarantee that 
is  different from~\cite{AlonA20} and needed for correlation clustering.}. 
\item \emph{Dense vertices}: each dense vertex $v$ belongs to an almost-clique of size approximately $(1 \pm \eps) \cdot \deg(v)$, where an almost-clique is a subgraph of $G$ that can be turned into an actual clique by changing approximately $\eps$-fraction of edges of each one of its vertices. 
\end{itemize}
\noindent
Moreover, there is an algorithm that samples $O\paren{n\log(n)}$ edges of $G$ (from a certain non-uniform distribution) and uses degrees of vertices of $G$ to compute this decomposition in $O(n\log^2{n})$ time. 
\end{mdframed}
 We remark that 
 our way of defining and forming the decomposition is quite different from all recent algorithmic approaches for sparse-dense decompositions in~\cite{HarrisSS16,ChangLP18,AssadiCK19a,AlonA20,HalldorssonKMT21}. Instead, to be able to provide the per-vertex guarantee needed by our decomposition, we follow the classical work of~\cite{Reed98} that seems to give a better handle on the properties of the decomposition.
 As a result, we also give the \emph{first} efficient implementation of this type of decompositions via a sampling algorithm that is easily
 implementable in various computational models including sublinear algorithms studied in this paper. 

At this point, our task of designing sublinear algorithms is simple. Firstly, we show that given 
the decomposition of the $(+)$-labeled subgraph of the input, there is a natural way of forming an $O(1)$-approximation correlation clustering (see~\Cref{thm:cc-alg}), following the approach of~\cite{CohenAddadLMNP21}. 
Basically, sparse vertices of the decomposition are so costly even for optimum solution that one might as well place them in singleton clusters; on the other hand, each almost-clique of dense vertices is so closely connected that the best strategy is to 
cluster them together. Secondly, the sampling algorithm that creates this decomposition is simple enough that can easily be implemented via simple sublinear algorithms (see~\Cref{thm:sub-time-alg,thm:sub-space-alg} and~\Cref{app:models}).

In conclusion, we found the application of sparse-dense decompositions to correlation clustering (and graph clustering) quite natural and hope our work
paves the path for further study of this connection. Moreover, unlike almost all aforementioned work that use sparse-dense decompositions as a subroutine in much more complicated algorithms and proofs, here the main bulk of work is in designing the decomposition itself; as such, this application can perhaps find its way as a gentle(r) introduction to sparse-dense decompositions. 

\subsection{Related Work}
\label{subsec:related-work}
Correlation clustering is one of the most well-studied clustering problems. Apart from the classical settings where the edges are either $(-)$ or $(+)$, known results have also been developed under general graphs, where the edge weights are real numbers and the graphs are not necessarily complete. On this front, the work of \cite{EmanuelF03} gives an $O(\log(n))$-approximation algorithm in polynomial time. The NP-hardness result on labeled (complete) graphs automatically applies to general graphs, and it is further shown that the approximation even on weighted \emph{complete} graphs is APX-hard \cite{EmanuelF03,CharikarGW05}.

Beyond the disagreement minimization objective, another popular optimization target is agreement maximization, which aims to maximize the $(+)$ edges in the same clusters and $(-)$ edges across different clusters. Computing the exact solution of agreement maximization is also NP-hard. However, it admits a PTAS, rendering the objective more tractable for approximation \cite{bansal2004correlation}. Furthermore, for general graphs, the work of \cite{CharikarGW05,Swamy04} give algorithms that achieve $0.766$ approximation in polynomial time. More recently,~\cite{AhmadiKS19} proposed a new min-max objective, whose goal is to minimize the maximum number of disagreement edges inside each cluster. It is further shown in \cite{AhmadiKS19} that such an objective admits a worst-case $O(\log(n))$ approximation in polynomial time.

The quest of sublinear correlation clustering algorithms also goes outside sublinear-time and streaming models. For instance, the `local' correlation clustering introduced by~\cite{BonchiGK13} can approximation the cluster of a \emph{single} vertex in constant time, and the cluster of each vertex is consistent with the `global' clustering. Moreover, efficient sublinear algorithms are explored under the umbrella of parallel computing, especially the Mapreduce-type Massively Parallel Computation (MPC) models. On this front,~\cite{ChierichettiDK14} designed an algorithm that achieves $O(1)$-approximation in $O(\log(n))$ parallel rounds. The result was recently improved by~\cite{CohenAddadLMNP21} to constant many parallel rounds.


\newcommand{\cost}[2]{\ensuremath{\textnormal{\textsf{cost}}_{\, #1}(#2)}\xspace}
\newcommand{\scost}[1]{\ensuremath{\textnormal{\textsf{cost}}(#1)}\xspace}

\renewcommand{\OPT}{\mathcal{O}\xspace}
\renewcommand{\ALG}{\mathcal{A}\xspace}

\newcommand{\CC}{\mathcal{C}}

\newcommand{\sym}{\,\triangle\,}

\section{Preliminaries}\label{sec:prelim}

 \paragraph{Set notation.} For two sets $A$ and $B$, we use $A \sym B := (A - B) \cup (B - A)$ to denote the symmetric difference of $A$ and $B$. We say that a collection $\SS$ of sets is \emph{laminar} if 
 for any two sets $A,B \in \SS$, either $A \cap B = \emptyset$ or $A \subseteq B$ or $B \subseteq A$. For any laminar collection $\SS$, we say that a set $S \in \SS$ is a \emph{root} if $S$ is not a proper subset of any other 
 set in $\SS$. 
 
\paragraph{Graph notation.} For any graph $G=(V,E)$, and vertex $v \in V$, we use $N(v)$ to denote the neighbors of $v$ and $E(v)$ to denote the edges incident on $v$. We say that a pair $(u,v)$ is a \emph{non-edge} in the graph $G$ 
 if there is not an edge between $u$ and $v$ in $G$. 
 
 \paragraph{Concentration inequalities.} We use the following standard forms of Chernoff bound in our proofs. 
\begin{proposition}[Chernoff Bound; cf.~\cite{ASBook}]
\label{prop:chernoff}
Let $X_{1}, X_{2}, \cdots, X_{m}$ be $m$ independent random variables in $[0,1]$. Define $X=\sum_{i=1}^{m} X_{i}$. Then, for every $\delta>0$ and $t \geq 1$, 
\begin{align*}
\Pr\paren{\card{X - \expect{X}} \geq \delta\cdot\expect{X}} \leq 2 \cdot \exp\paren{-\frac{\delta^{2}}{2+\delta}\cdot\expect{X}} \quad \text{and} \quad \Pr\paren{\card{X - \expect{X}} \geq t} \leq 2 \cdot \exp\paren{-\frac{2t^2}{m}}.
\end{align*}
\end{proposition}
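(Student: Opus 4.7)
The plan is to use the classical Chernoff--Cram\'er exponential-moment method, which handles both inequalities in a unified way. First I would fix $\lambda > 0$ and apply Markov's inequality to $e^{\lambda X}$ to get $\Pr\paren{X - \Exp[X] \geq a} \leq e^{-\lambda a}\cdot \Exp[e^{\lambda (X - \Exp[X])}]$; independence of the $X_i$'s turns the right-hand side into a product $\prod_{i=1}^{m} \Exp[e^{\lambda (X_i - \Exp[X_i])}]$, reducing the task to bounding single-variable moment generating functions of $[0,1]$-valued random variables. The lower-tail bound is obtained symmetrically with $\lambda < 0$, and a union bound contributes the overall factor of $2$.

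For the multiplicative form, I would use the convexity inequality $e^{\lambda x} \leq 1 + (e^\lambda - 1)x$ for $x \in [0,1]$ to get $\Exp[e^{\lambda X_i}] \leq 1 + p_i(e^\lambda - 1) \leq \exp\paren{p_i(e^\lambda - 1)}$ where $p_i := \Exp[X_i]$. Multiplying these bounds and writing $\mu := \Exp[X]$ yields $\Exp[e^{\lambda X}] \leq \exp\paren{\mu(e^\lambda - 1)}$, so setting $\lambda = \ln(1+\delta)$ in Markov's inequality gives $\Pr\paren{X \geq (1+\delta)\mu} \leq \exp(-\mu\cdot h(\delta))$ with $h(\delta) = (1+\delta)\ln(1+\delta) - \delta$. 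The remaining piece is the analytic inequality $h(\delta) \geq \delta^2/(2+\delta)$, which I would verify by showing that the two sides agree and have matching first derivatives at $\delta = 0$ and then comparing second derivatives for $\delta > 0$.

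For the additive (Hoeffding-type) form, I would invoke Hoeffding's lemma, namely that a zero-mean random variable $Y \in [a,b]$ satisfies $\Exp[e^{\lambda Y}] \leq \exp(\lambda^2(b-a)^2/8)$, applied to each centered $X_i - p_i \in [-p_i, 1-p_i]$, which lies in an interval of length at most $1$. Independence then gives $\Exp[e^{\lambda(X - \mu)}] \leq \exp(m\lambda^2/8)$, and optimizing Markov's inequality at $\lambda = 4t/m$ produces $\Pr\paren{X - \mu \geq t} \leq \exp(-2t^2/m)$; a sign flip handles the other tail.

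I expect the only genuinely delicate steps to be the two auxiliary analytic inequalities: Hoeffding's lemma (which reduces to bounding the cumulant $\psi(\lambda) = \ln \Exp[e^{\lambda Y}]$ via a second-order Taylor expansion around $\lambda = 0$, using the fact that $\psi''(\lambda)$ is the variance of the $\lambda$-tilted distribution of $Y$ on $[a,b]$ and is therefore at most $(b-a)^2/4$) and the tightening $h(\delta) \geq \delta^2/(2+\delta)$ for the multiplicative form. Both are textbook manipulations, and since the statement is cited from \cite{ASBook}, they can simply be referenced from there.
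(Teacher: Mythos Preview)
Your proposal is correct and follows the standard Chernoff--Cram\'er exponential-moment approach. The paper itself does not prove this proposition at all; it is stated as a standard concentration inequality with a citation to~\cite{ASBook}, so your outline already goes beyond what the paper provides.
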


We also use the following form of Bernstein's inequality. 

\begin{proposition}[Bernstein's inequality; cf.~\cite{Vershyninbook}]\label{prop:bernstein} 
Let $X_1,\ldots,X_m$ be $m$ independent random variables such that $\expect{X_i} = 0$ and $\card{X_i} < M$ for all $i \in [m]$. Then, for any $t \geq 1$, 
\[
	\Pr\paren{\sum_{i=1}^m X_i \geq t} \leq \exp\paren{-\frac{t^2}{2 \sum_{i=1}^m \expect{X_i^2} + \nicefrac23 \cdot M \cdot m}}. 
\]
\end{proposition}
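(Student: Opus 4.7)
The plan is to follow the canonical moment generating function (MGF) approach that underlies all Bernstein-type concentration bounds. The derivation proceeds in three standard steps—a per-variable MGF estimate, tensorization via independence, and optimization of the Markov inequality exponent—followed by a small bookkeeping step to match the exact denominator stated.

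First, for each variable $X_i$ with $|X_i|<M$ and $\mathbb{E}[X_i]=0$, I would establish the per-variable MGF bound
$$\mathbb{E}\left[e^{\lambda X_i}\right] \leq \exp\left(\frac{\lambda^2\,\mathbb{E}[X_i^2]}{2\,(1-\lambda M/3)}\right), \qquad 0<\lambda<3/M.$$
The derivation Taylor-expands $e^{\lambda X_i}=1+\lambda X_i+\sum_{k\geq 2}(\lambda X_i)^k/k!$, uses the pointwise inequality $|X_i|^k\leq M^{k-2}X_i^2$, and sums the tail as a geometric series via the elementary estimate $1/k!\leq 1/(2\cdot 3^{k-2})$, valid for all $k\geq 2$. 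Taking expectations kills the linear term thanks to $\mathbb{E}[X_i]=0$, and the inequality $1+x\leq e^x$ produces the exponential form.

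Second, independence of the $X_i$'s tensorizes the MGF, so with $S:=\sum_{i=1}^m X_i$ and $\sigma^2:=\sum_i \mathbb{E}[X_i^2]$ one obtains $\mathbb{E}[e^{\lambda S}]\leq \exp\left(\lambda^2\sigma^2/(2(1-\lambda M/3))\right)$. Markov's inequality gives $\Pr(S\geq t)\leq \exp\left(-\lambda t+\lambda^2\sigma^2/(2(1-\lambda M/3))\right)$, and optimizing over $\lambda$ (the optimizer is $\lambda^{\star}=t/(\sigma^2+Mt/3)$) yields the classical Bernstein bound
$$\Pr(S\geq t) \leq \exp\left(-\frac{t^2}{2\sigma^2+(2/3)\,M\,t}\right).$$

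Finally, to reach the exact denominator $2\sigma^2+(2/3)Mm$ stated in the proposition, I would observe that in the regime $t\leq m$ this is merely a relaxation of the classical bound (replacing $Mt$ by the larger $Mm$ enlarges the denominator and hence weakens the exponent), so the classical bound from the previous paragraph directly implies it; in the complementary regime $t>m$ the bound is essentially trivial, since the deterministic constraint $|S|<mM$ makes the tail event impossible once $t\geq mM$, and for $m<t<mM$ one can re-run the Chernoff argument with the suboptimal choice $\lambda=t/(2\sigma^2+(2/3)Mm)$, which lies in the admissible range and directly produces the displayed bound. The conceptual heart of the argument is thus the Taylor-plus-geometric-series MGF estimate in the first step; everything downstream is mechanical, and the main pitfall is just keeping the constants $1/3$ and $2/3$ straight across the Taylor truncation, the optimization of $\lambda$, and the final substitution—this is why the proposition is customarily quoted from a standard reference such as~\cite{Vershyninbook} rather than re-proved.
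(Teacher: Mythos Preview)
The paper does not prove this proposition; it is quoted as a standard concentration inequality with a citation to~\cite{Vershyninbook}, so there is no proof in the paper to compare against. Your derivation of the classical Bernstein bound $\Pr(S\ge t)\le\exp\bigl(-t^2/(2\sigma^2+\tfrac23 Mt)\bigr)$ via the MGF method is correct and standard, and you are right to notice that the denominator printed in the proposition has $\tfrac23 Mm$ rather than the classical $\tfrac23 Mt$.

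Your attempt to bridge that discrepancy, however, has a real gap in the regime $m<t<mM$. The proposed fix---plugging the suboptimal $\lambda=t/(2\sigma^2+\tfrac23 Mm)$ into the Chernoff exponent---fails on both counts you assert. First, this $\lambda$ need not lie in the admissible range $(0,3/M)$: the requirement $\lambda M/3<1$ is equivalent to $Mt<6\sigma^2+2Mm$, which can easily fail. Second, even when $\lambda$ is admissible, substituting it leaves the exponent equal to $-t^2/D+\lambda^2\sigma^2/\bigl(2(1-\lambda M/3)\bigr)$ with $D=2\sigma^2+\tfrac23 Mm$, and the residual term is strictly positive whenever $\sigma^2>0$; so you obtain a bound \emph{weaker} than $\exp(-t^2/D)$, not the displayed one. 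In fact the inequality as printed is false in general: take $m=1$, $M=10$, and let $X_1\in\{9,-1\}$ with probabilities $(0.1,0.9)$; then $\Exp[X_1]=0$, $\card{X_1}<M$, $\sigma^2=9$, and $\Pr(X_1\ge 9)=0.1$, yet the stated bound evaluates to $\exp\bigl(-81/(18+20/3)\bigr)\approx 0.038<0.1$. The $Mm$ is almost certainly a typographical slip for $Mt$; the paper's sole application of the proposition (in~\Cref{lem:sample-helper-lem}) goes through unchanged under the classical form, since in that setting both denominators are $\Theta(n^2\log n)$.
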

 \subsection{Problem Definition}\label{sec:problem}
 Throughout, by a \emph{labeled} graph $G=(V,E)$, we mean a complete graph with edges in $E$ labeled in $\set{-1,+1}$. We use $G^+$ and $G^-$ to denote 
 the subgraphs of $G$ consisting of only $(+)$-edges and $(-)$-edges, respectively. We extend this definition analogously to neighbor-sets $N^+(v)$ and $N^-(v)$, and edge-sets $E^+(v)$ and $E^-(v)$, for every $v \in V$.  
 
 Suppose we are given a labeled graph $G=(V,E)$. Let $\CC$ be any clustering of vertices of $G$ into disjoints clusters $C_1,\ldots,C_k$. For any vertex $v \in V$, 
 we use $\CC(v)$ to denote the cluster $C_i \in \CC$ that $v$ belongs to. For any edge $e=(u,v)$, we define the \emph{cost} of $e$ in the clustering $\CC$ as: 
\begin{align}
	\cost{\CC}{e} = 
	\begin{cases} 
	1 & \text{if $e \in G^+$ and $\CC(u) \neq \CC(v)$} \\ 
	1 & \text{if $e \in G^-$ and $\CC(u) = \CC(v)$} \\
	0 & \text{otherwise}
	\end{cases}. \label{eq:cost}
\end{align}
In words, cost of $(+)$-edge  is $1$ if its endpoints are clustered differently, and cost of a $(-)$-edge is $1$ if its endpoints are clustered together. 
The \emph{total cost} of a clustering $\CC$ is then:
\begin{align}
	\scost{\CC} = \sum_{e \in G} \cost{\CC}{e}. \label{eq:scost}
\end{align}
The goal in the correlation clustering problem is  to find a clustering $\CC$ that minimizes~\Cref{eq:scost}. 

\subsection{Sublinear Algorithms Models} \label{sec:sublinear-models}

In this paper, we focus on two of the most canonical models of sublinear algorithms, namely, sublinear-time algorithms, and (sublinear-space) streaming algorithms. These models are defined formally as follows. 

\paragraph{Sublinear-time algorithms.} When working with sublinear-time algorithms, it is important to specify the exact data model as the algorithm does not even have time to read the input once. In this paper, 
we assume the algorithms are given access to the \emph{adjacency list} of the $(+)$-graph $G^+$ of the input labeled graph $G$. This means that the algorithm can query the following information in $O(1)$ time: 
\begin{enumerate}[label=$\roman*).$]
	\item \emph{Degree queries:} What is $\degp{v}$ of a given vertex $v \in V$? 
	\item \emph{Neighbor queries:} What is the $i$-th vertex in $N^+(v)$ of $v \in V$ for $i \leq \degp{v}$? 
\end{enumerate}
The goal is to return a correlation clustering of $G$ (under cost function of~\Cref{eq:scost}) in a limited time. 

A remark about this model is in order. The standard query model for graph problems provides access to the adjacency list (or matrix) of $G$ itself (and not that of $G^+$). But in the context of labeled graphs, 
adjacency list of $G$ itself provides little information: degree queries are entirely uninformative (always return $n-1$) and 
neighbor queries only reveal the label of the edge between vertex $v$ to some other vertex $u$, similar to access to the adjacency matrix. Alternatively, we could have also considered 
access to the adjacency list of the $(-)$-graph $G^-$ instead, which at least is more informative than that of $G$. 

Nevertheless, we prove that neither 
model allows for any non-trivial sublinear-time algorithm for correlation clustering with \emph{any multiplicative} approximation guarantees (see~\Cref{app:time-lower}). 
In light of this impossibility result, and our sublinear-time algorithms, we believe the model we consider for this problem is most natural from the perspective of sublinear-time algorithms. 

\paragraph{Semi-streaming algorithms.} Semi-streaming algorithms focus on minimizing the space usage as opposed to time. In this model, 
the vertices of input labeled graph $G=(V,E)$ are known and the edges $E$ arrive one by one in a stream together with their labels. The goal is to read this stream in the given order only once\footnote{Or a few times in case of multi-pass algorithms -- our algorithm in this paper however is single-pass.} and use only $O(n \cdot \poly\!\log{(n)})$ space measured in machine words of size $O(\log{n})$ bits. At the end of the stream, the algorithm should return a correlation clustering of $G$ (under cost function of~\Cref{eq:scost}). 

Our streaming model is the same as the one studied by earlier work on this problem. But one can again wonder what would happen if only edges of $G^+$ or $G^-$ are being streamed instead of $G$. It turns out unlike the sublinear-time model, 
these different choices do not matter much for our purpose. In~\Cref{app:stream-upper}, we show that our algorithm can be extended to hand either of these cases, plus other natural variants such as {dynamic} (insertion-deletion) streams 
at the cost of increasing the space by at most $\poly\!\log{(n)}$ factor.


\newcommand{\low}[1]{N_{low}(#1)}

\newcommand{\Isolated}[2]{\ensuremath{{\textnormal{\texttt{Isolated}}_{#2}}}(#1)}
\newcommand{\isolated}[1]{\Isolated{#1}{\eps}}

\newcommand{\Dense}[3]{\ensuremath{{\textnormal{\texttt{Dense}}_{#2,#3}}}(#1)}
\newcommand{\Low}[2]{\ensuremath{\textnormal{\texttt{Low}}_{#2}}(#1)}
\renewcommand{\low}[1]{\ensuremath{\textnormal{\texttt{Low}}_{\eps}}(#1)}

\newcommand{\Kernel}[3]{\ensuremath{\textnormal{\texttt{Kernel}}_{#2,#3}}(#1)}
\newcommand{\kernel}[1]{\Kernel{#1}{\eps}{\delta}}

\newcommand{\Candid}[3]{\CC_{#2,#3}(#1)}

\newcommand{\candid}{\Candid{G}{\eps}{\delta}}

\newcommand{\NS}[1]{\ensuremath{N_{\textnormal{sample}}(#1)}}

\renewcommand{\deg}[1]{\ensuremath{\textnormal{{deg}}(#1)}}

\newcommand{\sample}{\ensuremath{\textnormal{\texttt{Sample}}}\xspace}

\newcommand{\Vsparse}{V_{\text{sparse}}}
\newcommand{\Vlight}{V_{\text{light}}}
\newcommand{\Vlowsparse}{V_{\text{low-sparse}}}

\newcommand{\tC}[1]{\widetilde{C}(#1)}

\newcommand{\Esample}{E_{\textnormal{sample}}}


\section{A (Yet Another) Sparse-Dense Decomposition} 
\label{sec:decomposition}

We present our sparse-dense decomposition in this section. We state the theorem in a form that allows for its recovery via sublinear algorithms in  subsequent sections -- however, we opted to present the recovery algorithm 
in a model-independent manner as this general form can also be applicable in other models of computation not considered in this paper.

\begin{theorem}[Sparse-Dense Decomposition (algorithmic version)]\label{thm:decomposition}
	There are absolute constants $\eps_0, \eta_0 > 0$ such that the following is true. For every $\eps < \eps_0$, vertices of any given graph $G=(V,E)$ can be partitioned into the following sets: 
\begin{itemize}

\item \textbf{Sparse vertices} $\Vsparse$: Any vertex $v\in \Vsparse$ has at least $\eta_0 \cdot \eps \cdot \deg{v}$ neighbors $u$ such that: 
\[
	\card{N(v) \sym N(u)} \geq \eta_0 \cdot \eps \cdot \max\set{\deg{u},\deg{v}}. 
\]
\item \textbf{Dense vertices partitioned into \underline{almost-cliques} $K_1,\ldots,K_k$}: For every $i \in [k]$, each $K_i$ has the following properties. Let 
 $\Delta(K_{i})$ be the maximum degree (in $G$) of the vertices in $K_i$, then: 
\begin{enumerate}[label=$\roman*)$.]
\item\label{dec:non-neighbors}  Every vertex $v\in K_{i}$ has at most $\eps\cdot \Delta(K_{i})$ \emph{non-neighbors} inside ${K_{i}}$;
\item\label{dec:neighbors} Every vertex $v\in K_{i}$ has at most $\eps\cdot \Delta(K_{i})$ \emph{neighbors} outside ${K_{i}}$;
\item\label{dec:size} Size of each $K_i$ satisfies $(1-\eps)\cdot \Delta(K_{i}) \leq |K_{i}| \leq (1+\eps)\cdot \Delta(K_{i})$.
\end{enumerate}
\end{itemize}
Moreover, there is an absolute constant $c > 0$ and an algorithm that given access to only the following information about  $G$, with high probability, computes this decomposition of $G$ in $O(\eps^{-2} \cdot n\log^2{n})$ time:
\begin{itemize}
	\item \textbf{Degree information:} Set of all vertices $v \in V$ plus their degrees $\deg{v}$; 
	\item \textbf{Random edge samples:} A collection of sets $\NS{v}$ of 
	\[
	t= c \cdot \eps^{-2} \cdot \log{n}
	\]
	neighbors of each vertex $v \in V$ chosen independently and uniformly at random (with repetition); 
	\item \textbf{Random vertex samples:} A set $\sample$~of vertices wherein each $v \in V$ is included independently with probability 
	\[
		p_v := \min\set{\frac{c \cdot \log{n}}{\deg{v}},1},
	\]
	together with all the neighborhood $N(v)$ of each sampled vertex $v \in \sample$. 
\end{itemize} 
(The probability of success of the algorithm is over the random choice of  edge and vertex samples.)  
\end{theorem}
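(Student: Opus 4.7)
The plan is to follow the classical Reed-style recipe for sparse--dense decompositions, adapted so that the structural parameters are measured \emph{per-vertex} (as needed for correlation clustering), and then to show that the random samples in the hypothesis suffice to compute it in sublinear time.

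For the structural part, I set up an auxiliary symmetric relation: call an edge $(u,v)$ \emph{friendly} if $\card{N(u) \sym N(v)} \leq c_1 \eps \cdot \max\set{\deg{u},\deg{v}}$ for a small absolute constant $c_1$, and declare a vertex $v$ \emph{dense} if at least a $(1 - c_2\eps)$-fraction of its neighbors $u$ have $(u,v)$ friendly. Taking $\Vsparse$ to be the complement of the dense vertices matches the first bullet of the theorem with $\eta_0 := \min(c_1, c_2)/2$. For the dense side, the key structural claim is that friendliness is \emph{almost transitive} on dense vertices: if $u,v$ are both dense and friendly, then they agree on all but $O(\eps)\cdot\max\set{\deg{u},\deg{v}}$ of their neighborhoods, so their dense friendly neighbors are essentially the same set. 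This yields a partition of the dense vertices into maximal friendly components $K_1,\dots,K_k$, and a direct calculation from the symmetric-difference bound gives properties~\ref{dec:non-neighbors}--\ref{dec:size}; the size bound uses that for any $v \in K_i$ with $\deg{v} = \Delta(K_i)$, almost all of $N(v)$ lies in $K_i$ while $K_i$ contains only $O(\eps \Delta(K_i))$ non-neighbors of $v$.

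For the algorithmic part, the core subroutine is a test that, given $(u,v)$ with $u \in N(v)$, decides whether $(u,v)$ is friendly. Since $\card{N(u)\sym N(v)} = \deg{u} + \deg{v} - 2|N(u)\cap N(v)|$ and degrees are given, this reduces to estimating $|N(u)\cap N(v)|$. If $\deg{u} \leq c\log n$ then $u \in \sample$ w.h.p., so $N(u)$ is known exactly and $|\{w \in \NS{v} : w \in N(u)\}| / t$ is an unbiased estimator of $|N(u)\cap N(v)|/\deg{v}$ that concentrates via Chernoff once $t = \Theta(\eps^{-2} \log n)$; the symmetric case handles small $\deg{v}$. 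When both $\deg{u}, \deg{v}$ are large, I use the observation that for friendly pairs $|N(u)\cap N(v)| \geq \Omega(\max\set{\deg{u},\deg{v}})$, so a uniformly random $w \in \sample$ lies in $N(u)\cap N(v)$ with probability $\Omega(\log n / \max\set{\deg{u},\deg{v}})$; aggregating this indicator over $\sample$ (and applying Bernstein's inequality to handle the heterogeneous variances) yields a reliable friendliness test. Equipped with this subroutine, the sparse/dense classification of $v$ is done by friendliness-testing $O(\eps^{-1})$ random elements of $\NS{v}$ and thresholding. Almost-cliques are then recovered by picking, for each dense $v$, a representative from $\sample \cap N(v)$ and grouping together all dense vertices friendly to the same representative.

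The principal obstacle is the two-high-degree case, where neither neighborhood is stored yet friendliness must still be decided to within $O(\eps)$ slack; the plan above crucially exploits that $|N(u)\cap N(v)|$ is comparable to the larger degree for friendly pairs, so the $\Theta(\log n)$ budget of $\sample$ supplies enough hits for concentration. A secondary issue is that friendliness is not strictly transitive, so the representative-based clique-formation must be shown to agree with the maximal friendly components from the structural analysis; this is arranged by choosing $c_1 \ll c_2 \ll \eta_0 \ll \eps_0 \ll 1$ so that the $O(\eps)$ estimation errors are absorbed into the $O(\eps)$ slacks of the structural claim. The overall runtime is dominated by a one-time preprocessing of the neighborhoods of $\sample$ ($O(n\log n)$ total by the choice of $p_v$) plus $\Ot(\eps^{-2})$ work for each of $n$ vertices, giving the claimed $O(\eps^{-2} n \log^2 n)$ bound.
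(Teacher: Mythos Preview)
Your structural setup via ``friendly'' edges is close in spirit to the balanced/friend-edge approach of Alon--Assadi, whereas the paper follows Reed's route: it splits non-dense vertices into \emph{light} (many neighbors of much higher degree) and \emph{low-sparse} (many isolated low-degree neighbors), then builds for each dense $v$ a candidate set $C_v$ defined by thresholds on $\card{N(u)\cap \low{v}}$ and $\deg{u}$, and proves the collection $\set{C_v}$ is \emph{laminar}. The almost-cliques are the roots of this laminar family. Your ``maximal friendly components'' tries to shortcut this, but since friendliness is not transitive you would in any case need something equivalent to the laminarity argument; the paper's Property~3.7 (the nontrivial step that $C_u\cap C_v\neq\emptyset$ and $\deg{u}\leq\deg{v}$ force $C_u\subseteq C_v$) is exactly where this is done, and your sketch does not supply a replacement.

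The genuine gap is your two-high-degree friendliness test. You propose to estimate $\card{N(u)\cap N(v)}$ by counting how many $w\in\sample$ have $u,v\in N(w)$. But $w$ is included in $\sample$ with probability $p_w=\Theta(\log n/\deg{w})$, so the expectation of your count is $\sum_{w\in N(u)\cap N(v)}\Theta(\log n)/\deg{w}$, which is \emph{not} proportional to $\card{N(u)\cap N(v)}$. Concretely, take $u,v$ with $\deg{u}=\deg{v}=d$ and $N(u)=N(v)$ consisting entirely of vertices $w$ with $\deg{w}=\Theta(n)$; then $(u,v)$ is perfectly friendly, yet the expected number of sampled $w$ in $N(u)\cap N(v)$ is $\Theta(d\log n/n)\ll 1$ whenever $d\ll n/\log n$, so the test has no signal. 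A Horvitz--Thompson correction $\sum_{w}1/p_w$ fixes the bias but gives variance of order $\max_w \deg{w}$, which is uncontrolled. The paper never attempts a symmetric pairwise test: it \emph{only} builds $C_v$ for $v\in\sample$, where the full $N(v)$ (hence $\low{v}$) is available, and tests each $u$ via the one-sided statistic $\card{\NS{u}\cap\Low{v}{\eps'}}$. Coverage of all dense vertices is then guaranteed not by testing them directly but by showing that every $(\eps'',\delta'')$-dense vertex has some kernel neighbor sampled into $\sample$ and lands in that neighbor's candidate set. This asymmetry---anchor at a sampled vertex, probe everyone else with edge samples---is the idea your plan is missing.
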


 The sparse vertices in~\Cref{thm:decomposition} are such that ``many'' of their neighbors have a ``different'' neighborhood than themselves. Thus, even though we refer to them as `sparse' to be consistent 
with prior sparse-dense decompositions, these vertices do not necessarily have a sparse neighborhood as in standard decompositions but rather have a $2$-hop neighborhood that is very different than their $1$-hop neighborhood. 

The almost-cliques on the other hand, as the name suggests, are basically induced subgraphs of $G$ on ``similar degree'' vertices that can be turned into an actual clique by changing a small fraction of edges in their neighborhood. This part 
is also different from typical decompositions in that the almost-cliques are allowed to have varying sizes tailored to degrees of individual vertices as opposed to a single size based on maximum degree. The only other sparse-dense decomposition
with such guarantees that we know of is that of~\cite{AlonA20}. However, both in terms of precise guarantees and also the construction, our~\Cref{thm:decomposition} is quite different from~\cite{AlonA20}. To be specific: 
\begin{itemize}
	\item The sparse vertices in~\Cref{thm:decomposition} have an \emph{individual} guarantee on their ``different'' neighbors, while~\cite{AlonA20} makes an \emph{aggregate} guarantee for the entire neighborhood of a sparse vertex. 
	\item The construction of~\cite{AlonA20} is based on the notion of \emph{balanced-} and \emph{friend-edges}, which does not seem to allow for the fine-grained guarantees required by our decomposition. Instead, the proof of~\Cref{thm:decomposition}
	involves a more direct approach based on classical sparse-dense decompositions. 
	\item Finally, the decomposition in~\cite{AlonA20} is a structural result while ours is constructive via the sampling algorithm, which is needed for recovering this decomposition via sublinear algorithms. 
 \end{itemize}
\noindent
The rest of this section is dedicated to the proof of this theorem. 

\paragraph{Remark:} The proofs in this section require establishing various properties and claims which are generally simple but some require lengthy calculations. To keep the flow of arguments, we postpone the proofs of more straightforward
properties to~\Cref{app:decomposition} (marked with $\bigstar$) and only include the main or  subtler ones here.

\subsection{Preliminary Definitions and Properties}\label{sec:decomposition-prelim}

Throughout this section, we let $G=(V,E)$ be any arbitrary undirected graph on $n$ vertices, and $\eps,\delta \in (0,1)$ be sufficiently small parameters. 
To define our decomposition, we need to start with the definition of a couple of different types of vertices in the following. 

\smallskip
\noindent
\paragraph{Light vertices.} 
The first type of vertices are  the ones that have ``many'' ``high degree'' neighbors. 
\begin{definition}[{Light vertices}]\label{def:light}
	For a vertex $v \in V$, we define the set of \underline{\emph{\eps-low-degree}} neighbors of $v$ as:
	\[
		\low{v} := \set{u \in N(v) \mid \deg{u} \leq (1+\eps) \cdot \deg{v}},
	\]
	that is, the neighbors of $v$ that have degree at most $(1+\eps)$ times larger than $v$.
	
	\noindent	
	We call a vertex $v \in V$ \underline{\emph{$(\eps,\delta)$-light}} if size of $\low{v}$ is at most $(1-\delta) \cdot \deg{v}$. 
\end{definition}

The key  property of light vertices for us is that they have ``many'' neighbors with a ``different'' neighborhood than that of the light vertex. This is  because light vertices have many higher degree neighbors. 

\begin{property}[$\bigstar$]\label{p:light}
	Any light vertex $v$ has at least $\delta \cdot \deg{v}$ neighbors $u$ such that 
	\[
		\card{N(u) - N(v)} \geq \frac\eps{(1+\eps)} \cdot \deg{u} = \frac\eps{(1+\eps)} \cdot \max\set{\deg{u},\deg{v}}. 
	\]
\end{property}

\paragraph{Low-sparse vertices.} The next set of vertices are the ones that have ``many'' non-edges between their ``low degree'' neighbors. Formally, 

\begin{definition}[{Low-sparse vertices}]\label{def:isolated}
	We define the set of \underline{\emph{$\eps$-isolated}} neighbors of $v$ as 
	\[
		\isolated{v} = \set{u \in N(v) \mid \card{\low{v} - N(u)} \geq \eps \cdot \deg{v}},
	\]
	that is, vertices $u$ that $v$ has at least $\eps \cdot \deg{v}$ neighbors in $\low{v}$ that are \emph{not} neighbor to $u$. 
	
	\noindent
	We call a vertex $v \in V$ \underline{\emph{$(\eps,\delta)$-low-sparse}} iff it has at least $\delta \cdot \deg{v}$ $\eps$-isolated neighbors in $\low{v}$, i.e., 
	\[
		\card{\isolated{v} \cap \low{v}} \geq \delta \cdot \deg{v}. 
	\]  
	(These vertices are called low-sparse as the subgraph induced on their low-degree neighbors is (rather) sparse.)
\end{definition}

The key property of low-sparse vertices for us is that again they have ``many'' neighbors with a ``different'' neighborhood than that of the light vertex. This is because low-sparse vertices have many non-neighbors among their low degree neighbors. 

\begin{property}[$\bigstar$]\label{p:isolated}
	Any low-sparse vertex $v$ has at least $\delta \cdot \deg{v}$ neighbors $u$ such that 
	\[
		\card{N(v) - N(u)} \geq \eps \cdot \deg{v} \geq \frac{\eps}{(1+\eps)} \cdot \max\set{\deg{u},\deg{v}}. 
	\]
\end{property}

\paragraph{Dense vertices.} Finally, we pack all the remaining vertices into one definition. 

\begin{definition}[{Dense vertices}]\label{def:dense}
	Any vertex $v \in V$ which is neither $(\eps,\delta)$-light nor $(\eps,\delta)$-low-sparse is called a \underline{\emph{$(\eps,\delta)$-dense}} vertex. We use $\Dense{G}{\eps}{\delta}$ to denote the set
	of $(\eps,\delta)$-dense vertices in the graph $G$. 
\end{definition}

The main part of the decomposition is to handle dense vertices. Unlike light and low-sparse vertices, the main property of dense vertices for our decomposition is that we can ``bundle'' them together to form almost-cliques -- this is the main step of the decomposition and is handled in the next 
subsection. But before we move on, we first list some key properties of dense vertices that will be needed for the next step. 

The first property of dense vertices is that there are ``few'' non-edges between their low degree neighbors, as well as ``few'' edges going out of their low degree neighbors. Both of these are intuitively true as the induced subgraph of 
dense vertices on their low degree vertices is not sparse (because they are not low-sparse) and they have many low degree neighbors (because they are not light). 

\begin{property}[$\bigstar$]\label{p:dense}
	For every dense vertex $v \in \Dense{G}{\eps}{\delta}$:
	\begin{enumerate}[label=$(\roman*)$]
		\item the number of non-edges inside $\low{v}$ is at most $\frac{\eps+\delta}{2} \cdot \deg{v}^2$;
		\item the number of edges going out of $\low{v}$ is at most $2\,(\eps+\delta) \cdot \deg{v}^2$;
	\end{enumerate} 
\end{property}

We can also show that neighborhood of most vertices in $\low{v}$ has a large intersection with $\low{v}$ itself, in other words, the subgraph induced on $\low{v}$ is ``almost a clique'' (this should not be confused with the definition of 
almost-cliques we use in our decomposition). To this end, we have our final definition. 

\begin{definition}[Kernel]\label{def:kernel}
	For every dense vertex $v \in \Dense{G}{\eps}{\delta}$, we define \underline{\emph{kernel}} of $v$ as: 
	\[
		\kernel{v} := \low{v} - \isolated{v}.
	\]
	(These are low-degree neighbors of $v$ that share ``many'' neighbors with other low-degree neighbors of $v$.)
\end{definition}

The following property formalizes our discussion before the definition of kernel. 

\begin{property}[$\bigstar$]\label{p:kernel}
For every dense vertex $v \in \Dense{G}{\eps}{\delta}$, $\kernel{v}$ satisfies the following properties: 
	\begin{enumerate}[label=$(\roman*)$]
		\item $\kernel{v}$ is a subset of $\low{v}$ with size at least $(1-2\delta) \cdot \deg{v}$; 
		\item every vertex $u \in \kernel{v}$ has at least $(1-\eps-\delta) \cdot \deg{v}$ neighbors in $\low{v}$.
	\end{enumerate} 
\end{property}

Moreover, we prove that kernel vertices are also dense for a slightly larger parameters $\eps$ and $\delta$. This is again because kernel of a dense vertex is ``almost a clique''. 

\begin{property}[$\bigstar$]\label{p:kernel-dense}
	For every $(\eps,\delta)$-dense vertex $v$, any vertex $u \in \kernel{v}$ is $(4\eps+2\delta,2\eps+2\delta)$-dense. 
\end{property}

Finally, we argue  kernel vertices for a dense vertex are ``almost monotone'' in the parameter $\eps$. Formally, 
\begin{property}[$\bigstar$]\label{p:kernel-monotone}
	For every $(\eps,\delta)$-dense vertex $v$, any vertex in $\Kernel{v}{\eps}{\delta}$ also belongs to $\Kernel{v}{\eps'}{\delta'}$ for any $\eps' > \eps+\delta$ and arbitrary $\delta' > 0$. 
\end{property}

An illustration of the definition and properties of the above sets of vertices can be found in \Cref{fig:decompose-illus}.

\begin{figure}[h!]
\centering
\tikzset{every picture/.style={line width=0.75pt}} 


\begin{tikzpicture}[x=0.75pt,y=0.75pt,yscale=-0.75, xscale=0.75, every node/.style={scale=0.75}]

\draw   (58.07,181.42) -- (599,181.42) -- (599,319) -- (58.07,319) -- cycle ;
\draw   (448,181.42) -- (599,181.42) -- (599,319) -- (448,319) -- cycle ;
\draw    (58.07,181.42) -- (316.14,49.1) ;
\draw   (229,255.8) .. controls (229,247.07) and (236.07,240) .. (244.8,240) -- (522.2,240) .. controls (530.93,240) and (538,247.07) .. (538,255.8) -- (538,319) .. controls (538,319) and (538,319) .. (538,319) -- (229,319) .. controls (229,319) and (229,319) .. (229,319) -- cycle ;
\draw   (257.14,294.1) .. controls (257.14,284.76) and (264.27,277.2) .. (273.07,277.2) .. controls (281.87,277.2) and (289,284.76) .. (289,294.1) .. controls (289,303.43) and (281.87,311) .. (273.07,311) .. controls (264.27,311) and (257.14,303.43) .. (257.14,294.1) -- cycle ;
\draw  [color={rgb, 255:red, 0; green, 0; blue, 255 }  ,draw opacity=1 ][dash pattern={on 5.63pt off 4.5pt}][line width=1.5]  (207.86,288.42) .. controls (219.53,263.35) and (267.73,261.05) .. (315.53,283.29) .. controls (363.33,305.53) and (392.62,343.88) .. (380.96,368.95) .. controls (369.29,394.03) and (321.09,396.32) .. (273.29,374.09) .. controls (225.49,351.85) and (196.2,313.5) .. (207.86,288.42) -- cycle ;
\draw  [color={rgb, 255:red, 255; green, 0; blue, 0 }  ,draw opacity=1 ][dash pattern={on 1.69pt off 2.76pt}][line width=1.5]  (444,187) -- (444,236) -- (222,237) -- (222,314) -- (63,314) -- (63,186) -- (236,188) -- cycle ;
\draw   (560.14,229.1) .. controls (560.14,219.76) and (567.27,212.2) .. (576.07,212.2) .. controls (584.87,212.2) and (592,219.76) .. (592,229.1) .. controls (592,238.43) and (584.87,246) .. (576.07,246) .. controls (567.27,246) and (560.14,238.43) .. (560.14,229.1) -- cycle ;
\draw   (316.14,49.1) .. controls (316.14,39.76) and (323.27,32.2) .. (332.07,32.2) .. controls (340.87,32.2) and (348,39.76) .. (348,49.1) .. controls (348,58.43) and (340.87,66) .. (332.07,66) .. controls (323.27,66) and (316.14,58.43) .. (316.14,49.1) -- cycle ;
\draw    (153,181) -- (316.14,49.1) ;
\draw    (515,181) -- (348,49.1) ;
\draw    (599,181.42) -- (348,49.1) ;
\draw    (226,181) -- (316.14,49.1) ;
\draw    (448,181.42) -- (348,49.1) ;

\draw (157.38,223.56) node [anchor=north west][inner sep=0.75pt]   [align=left] {$ $};
\draw (311.07,150.88) node [anchor=north west][inner sep=0.75pt]    {$N( v)$};
\draw (327.19,44.44) node [anchor=north west][inner sep=0.75pt]    {$v$};
\draw (451.8,185.05) node [anchor=north west][inner sep=0.75pt]    {$N( v) -\text{Low}_{\epsilon }( v)$};
\draw (568.89,222.58) node [anchor=north west][inner sep=0.75pt]    {$w$};
\draw (234.37,244.14) node [anchor=north west][inner sep=0.75pt]    {$\text{Isolated}_{\epsilon }( v)$};
\draw (62.8,186.05) node [anchor=north west][inner sep=0.75pt]    {$\text{Low}_{\epsilon }( v)$};
\draw (266.89,287.58) node [anchor=north west][inner sep=0.75pt]    {$u$};
\draw (312.8,344.05) node [anchor=north west][inner sep=0.75pt]  [color={rgb, 255:red, 0; green, 0; blue, 255 }  ,opacity=1 ]  {$N( u)$};
\draw (315,87) node [anchor=north west][inner sep=0.75pt]   [align=left] {......};

\end{tikzpicture}

\caption{\label{fig:decompose-illus} The $\low{v}$ and $\isolated{v}$ vertices in $N(v)$. For each vertex $w \in \low{v}$, there is $\deg{w}> (1+\eps)\cdot \deg{v}$. For each vertex $u \in \isolated{v}$, it is \emph{not} a neighbor to at least $\eps\cdot \deg{v}$ vertices in $\low{v}$. If $v$ is \emph{dense}, the area circumscribed by the dotted lines is the $\kernel{v}$.}

\end{figure}
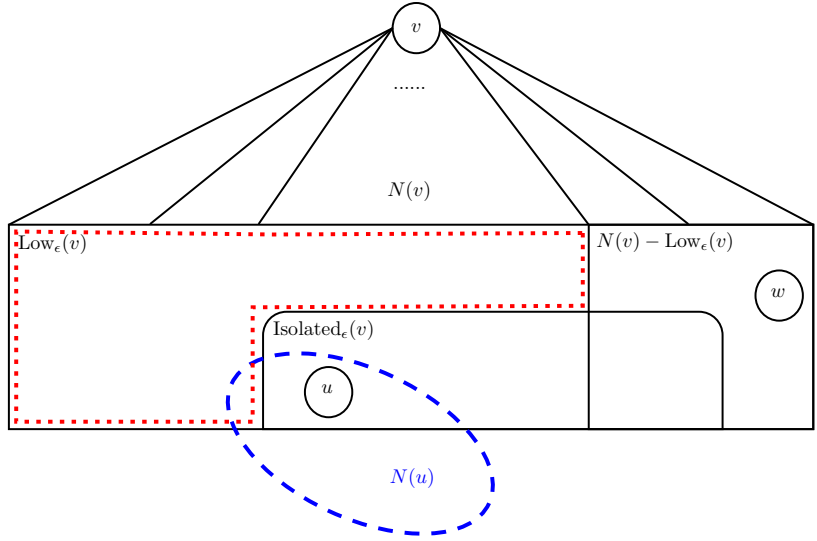

\subsection{Partitioning Dense Vertices into Almost-Cliques}

We now get to the main part of the decomposition which involves bundling the dense vertices into disjoint almost-cliques. We emphasize that these almost-cliques, in addition to partitioning \emph{all} of dense vertices, may also include \emph{some} light or low-sparse vertices. In order to do this, we define a \emph{candidate set} $C_v$ of vertices for every dense vertex $v$, and then show how to pick a subset of these candidate sets to form the  almost-cliques. 

Before we get to the definition of these sets however an important remark is in order. In order for us to be able to eventually recover the decomposition via a sampling algorithm (as in the second part of~\Cref{thm:decomposition}), we need our criteria in the definition of candidate sets to be somewhat relaxed. As such, the definition we get for the collection of candidate sets is  \emph{not unique}, but the properties we prove subsequently hold for any valid choice of these sets according to our definition.

\begin{definition}[Candidate sets]\label{def:candidate-sets}
	For any $(\eps,\delta)$-dense vertex $v$, a \emph{candidate set} $C_v$ is a set of vertices that satisfy the following rules: 

\begin{enumerate}[label=$(\arabic*)$]
	\item \underline{Every} vertex $u$ with the following property should be included in $C_v$: 
	\[
	\card{N(u) \cap \low{v}} \geq (1-6\eps-6\delta) \cdot \deg{v} \quad \text{and} \quad \deg{u} \leq (1+2\eps+2\delta) \cdot \deg{v}.
	\]
	\item \underline{No} vertex $u$ with the following property can be included in $C_v$: 
	\[
	\card{N(u) \cap \low{v}} < (1-7\eps-7\delta) \cdot \deg{v} \quad \text{or} \quad \deg{u} > (1+2\eps+2\delta) \cdot \deg{v}.
	\]
	\end{enumerate}
	(The exact choice of which vertices to include or not in $C_v$ is arbitrary as long as it satisfies the given rules.)
	
	\noindent
	We use $\Candid{G}{\eps}{\delta} := \set{C_v \mid v \in \Dense{G}{\eps}{\delta}}$ to denote the collection of (a choice of) candidate sets. 
\end{definition}

Let us emphasize again that the ``rule-based'' definition of candidate sets is to allow for recovering a valid choice of  these sets via the specified samples in~\Cref{thm:decomposition}; if one is only interested in the existence of the decomposition, a single threshold can be picked
instead in Rules (1) and (2) which collapses them into a single rule and results in a unique choice for each set $C_v$\footnote{In other words, we could define 
$C_v := \set{u \in V \mid \card{N(u) \cap \low{v}} \geq (1-6\eps-6\delta) \cdot \deg{v} ~ \text{and} ~ \deg{u} \leq (1+2\eps+2\delta) \cdot \deg{v}}$.}. 

In the following, we first start by stating the  individual properties of each set $C_v$ and vertices inside it, and then switch to the collective properties of the collection $\candid$ that allows to partition dense vertices. 

\paragraph{Individual properties of candidate sets $C_v$.} We first argue that many vertices in $\low{v}$, in particular $\kernel{v}$ (defined in~\Cref{p:kernel}), will be added to $C_v$. This is because vertices in $\kernel{v}$ have a ``large'' intersection
with $\low{v}$ and thus should be included by Rule (1). 

\begin{property}[$\bigstar$]\label{p:1} 
	$\kernel{v}$ belongs to $C_v$ and thus $\card{\low{v} \cap C_v} \geq (1-2\delta) \cdot \deg{v}$. 
\end{property}
The next property ensures that every vertex in $C_v$ has many neighbors in $C_v$. This is because only vertices with ``large enough'' intersection with $\low{v}$ are included in $C_v$ by Rule (2) and $C_v$ and $\low{v}$ themselves
have a large intersection by the previous property. 

\begin{property}[$\bigstar$]\label{p:2}
	Every vertex $u \in C_v$ satisfies $\card{N(u) \cap C_v} \geq (1-7\eps-9\delta) \cdot \deg{v}$. 
\end{property}

The next property ensures that every vertex in $C_v$ has ``few'' neighbors outside of $C_v$. This is because vertices in $C_v$ have a degree proportional to $v$ by Rule (2) and by the previous property, most of their neighbors 
should be inside $C_v$ instead. 

\begin{property}[$\bigstar$]\label{p:3}
	Every vertex $u \in C_v$ satisfies $\card{N(u) - C_v} \leq (9\eps+11\delta) \cdot \deg{v}$. 
\end{property}

We also show that number of vertices in $C_v$ that are not part of $\low{v}$ is small. This is because the number of edges going out of $\low{v}$ is small but Rule (2) requires every vertex included in $C_v$ to ``consume'' many of these edges. 

\begin{property}[$\bigstar$]\label{p:4}
	 $\card{C_v - \low{v}} \leq (3\eps+3\delta) \cdot \deg{v}$.  
\end{property}

Combining the previous two properties, we can also show that each vertex in $C_v$ has few non-neighbors inside of $C_v$. This is because vertices in $C_v$ have a large intersection with $\low{v}$, the same as $C_v$. 

\begin{property}[$\bigstar$]\label{p:4'}
	Every vertex $u \in C_v$ satisfies $\card{C_v - N(u)} \leq (10\eps+10\delta) \cdot \deg{v}$. 
\end{property}

We can also ensure that each candidate set $C_v$ includes the vertex $v$ itself. This is simply because $N(v)$ intersect with all of $\low{v}$ by definition and thus $v$ should be included by Rule (1). 

\begin{property}[$\bigstar$]\label{p:5}
	Every dense vertex $v$ belongs to its candidate set $C_v$. 
\end{property}

At this point, it is easy to see that each set $C_v$ satisfies all the required properties for an almost-clique we require in our decomposition (for a proper setting of parameters). Formally, 
\begin{property}\label{p:C-almost-clique}
	For $\Delta(C_v) := \max_{u \in C_v} \deg{u}$, we have, 
	\begin{enumerate}[label=$\roman*)$.]
\item  Every vertex $u \in C_v$ has at most $(10\eps+10\delta)\cdot \Delta(C_v)$ \emph{non-neighbors} inside ${C_v}$;
\item Every vertex $u \in C_v$ has at most $(9\eps+11\delta) \cdot \Delta(C_v)$ \emph{neighbors} outside $C_v$;
\item $(1-2\eps-4\delta)\cdot \Delta(C_v) \leq |C_v| \leq (1+3\eps+3\delta)\cdot \Delta(C_v)$.
\end{enumerate}
\end{property}
\begin{proof}
	Each part can be proven as follows: 
\begin{enumerate}[label=$\roman*)$.]
\item Follows from~\Cref{p:4'} and since $\deg{v} \leq \Delta(C_v)$ as $v \in S_v$ by~\Cref{p:5};
\item Follows from~\Cref{p:3} similar to part $i).$ above;  
\item By~\Cref{p:1}, $\card{C_v} \geq (1-2\delta) \cdot \deg{v} \geq (1-2\eps - 4\delta) \cdot \Delta(C_v)$ by the upper bound on $\Delta(C_v)$. Moreover, by~\Cref{p:4}, $\card{C_v} \leq (3\eps + 3\delta) \cdot \deg{v} \leq (3\eps + 3\delta) \cdot \Delta(C_v)$. \qed
\end{enumerate}
\end{proof}

Note that despite having proved~\Cref{p:C-almost-clique}, we are still far from being done: it is not yet clear that these candidate sets \emph{partition} dense vertices and allow for 
constructing \emph{disjoint} almost-cliques needed in the decomposition. This is the content of the next part. 

\paragraph{Collective properties of candidate sets $C_v$.}  So far, we only examined the candidate sets $C_v$ in isolation. We now consider these sets in conjunction with each other. 

The following is the main property of candidate sets. 
Roughly speaking, it states that if a vertex $u$ does not have a ``too large degree'' compared to $v$ and $C_u \cap C_v$ is non-empty, then $u$  itself is also in $C_v$. The proof of this property is more involved than the 
rest but the main idea is as follows. Assuming $C_u \cap C_v$ is non-empty forces $\low{u}$ and $\low{v}$ to intersect ``heavily'' with each other (given combination of several of properties established in the previous part); this in turn forces $u$ to have enough intersection with $\low{v}$ also to join $C_v$ (its degree already satisfy the needed bounds). 

\begin{property}\label{p:6}
	If $C_u \cap C_v \neq \emptyset$ and $\deg{u} \leq (1+2\eps+2\delta) \cdot \deg{v}$, then $u$ also belongs to $C_v$. 
\end{property}
\begin{proof}
	Fix a vertex $w \in C_u \cap C_v$. By~\Cref{p:2}, we have that
	\begin{align}
		\card{N(w) \cap C_u} \geq (1-7\eps-9\delta) \cdot \deg{u} \qquad \text{and} \qquad \card{N(w) \cap C_v} \geq (1-7\eps-9\delta) \cdot \deg{v}. \label{eq:loose-bound2}
	\end{align}
	Combining this with~\Cref{p:4}, we have, 
	\begin{equation}
	\label{eq:NwL} 
	\begin{aligned}
		\card{N(w) \cap \low{u}} &\geq \card{N(w) \cap C_u} - \card{C_u - \low{u}} \geq (1-10\eps-12\delta) \cdot \deg{u}; \\ 
		\card{N(w) \cap \low{v}} &\geq \card{N(w) \cap C_v} - \card{C_v - \low{v}} \geq (1-10\eps-12\delta) \cdot \deg{v}.
	\end{aligned}
	\end{equation}
	Moreover, since $w$ is in both $C_u$ and $C_v$, by Rule (2), we know that 
	\begin{align}
		\deg{w} \leq (1+2\eps+2\delta) \cdot \deg{u} \quad \text{and} \quad \deg{w} \leq (1+2\eps+2\delta) \cdot \deg{v}. \label{eq:degrees} 
	\end{align}
	Given these bounds, we have 
	\begin{align}
		\card{\low{u} \cap \low{v}} &\geq \card{N(w) \cap \low{u}} - \card{N(w) - \low{v}} \notag \\
		&\geq \card{N(w) \cap \low{u}} + \card{N(w) \cap \low{v}} - \deg{w} \notag \\
		&\geq \frac{(1-10\eps-12\delta)}{(1+2\eps+2\delta)} \cdot \deg{w} + \frac{(1-10\eps-12\delta)}{(1+2\eps+2\delta)} \cdot \deg{w} - \deg{w} \tag{by~\Cref{eq:NwL} and~\Cref{eq:degrees}} \\
		&\geq (1-22\eps-26\delta) \cdot \deg{w} \notag \\
		&\geq (1-27\eps-33\delta) \cdot \max\set{\deg{u},\deg{v}} \tag{by~\Cref{eq:loose-bound2}, $\deg{w} = \card{N(w)} \geq (1-5\eps-7\delta) \cdot \max\set{\deg{u},\deg{v}}$} \\
		&\geq \frac{5}{6} \cdot \max\set{\deg{u},\deg{v}}, \label{eq:loose-bound}
	\end{align}
	for $\eps,\delta < 1/360$. This gives us a loose lower bound on the size of intersection of $\low{u}$ and $\low{v}$. In the following, we build on this lower bound to refine it into a much sharper bound for our purpose. 
	
	Define $F(u,v)$ as the set of edges between $(\low{u} \cap \low{v})$ and $(\low{u} - \low{v})$. We have:
	\begin{itemize}
		\item On one hand, all edges of $F(u,v)$ are going from inside of $\low{v}$ to outside of $\low{v}$. Since $v$ is a dense vertex, by part $(ii)$ of~\Cref{p:dense}, we should have 
		\[
		\card{F(u,v)} \leq (2\eps+2\delta) \cdot \deg{v}^2.
		\] 
		\item On the other hand, $F(u,v)$ contains all edges between vertices in $\low{u}$. Since $u$ is a dense vertex, by part $(i)$ of~\Cref{p:dense},  number of non-edges inside $\low{u}$ is at most $(\eps+\delta)/2 \cdot \deg{u}^2$. Thus, 
		\[
		\card{F(u,v)} \geq \card{\low{u} \cap \low{v}}  \cdot \card{\low{u} - \low{v}} - (\eps+\delta)/2 \cdot \deg{u}^2.
		\] 
	\end{itemize}
	Combining the above two bounds we have, 
	\[
		\card{\low{u} \cap \low{v}}  \cdot \card{\low{u} - \low{v}} \leq (\nicefrac52) \cdot (\eps+\delta) \cdot \max\set{\deg{u},\deg{v}}^2. 
	\]
	Using this with the (loose) lower bound of~\Cref{eq:loose-bound}, we have 
	\begin{align*}
		\card{\low{u} - \low{v}} &\leq (\nicefrac52) \cdot (\eps+\delta) \cdot \max\set{\deg{u},\deg{v}}^2 / (\nicefrac56\cdot \max\set{\deg{u},\deg{v}}) \\
		&= 3 \cdot (\eps+\delta) \cdot \max\set{\deg{u},\deg{v}}.
	\end{align*}
	This in turn gives a much stronger bound (compared to~\Cref{eq:loose-bound}) that, 
	\begin{align*}
		\card{\low{u} \cap \low{v}} &= \deg{u} - \card{\low{u} - \low{v}} \\
		&\geq \deg{u} - 3 \cdot (\eps+\delta) \cdot \max\set{\deg{u},\deg{v}}. 
	\end{align*}
	At the same time, we also know that 
	\begin{align*}
		\deg{u} &\geq (\frac{1}{1+2\eps+2\delta}) \cdot \deg{w} \tag{by~\Cref{eq:degrees}} \\
			&\geq (\frac{1-7\eps-9\delta}{1+2\eps+2\delta}) \cdot \max\set{\deg{u},\deg{v}} \tag{by~\Cref{eq:loose-bound2}} \\
			&\geq (1-9\eps-11\delta) \cdot \deg{v} \\
			&> (3/4) \cdot \deg{v},
	\end{align*}
	for $\eps,\delta < 1/80$. Combining the above two equations, we get that 
	\[
		\card{\low{u} \cap \low{v}} \geq\deg{u} - 3 \cdot (\eps+\delta) \cdot (\nicefrac{4}{3}) \cdot \deg{u} = (1-4\eps-4\delta)) \cdot \deg{u} \geq (1-6\eps-6\delta)) \cdot \deg{v},
	\]
	as $\deg{u} \leq (1+2\eps+2\delta) \cdot \deg{v}$ in this property. Given that $\low{u} \subseteq N(u)$, we have, 
	\[
		\card{N({u}) \cap \low{v}} \geq  (1-6\eps-6\delta) \cdot \deg{v} \quad \text{and} \quad \deg{u} \leq (1+2\eps+2\delta) \cdot \deg{v}.
	\]
	Thus, by Rule (1), $u$ should also belong to $C_v$, concluding the proof. 
\end{proof}

We now use this property to argue that the collection $\candid$ forms a \emph{laminar} family: for any two vertices $u$ and $v$, either $C_u$ and $C_v$ do not intersect at all or one of them is a subset of the other one.

\begin{property}[$\bigstar$]\label{p:7}
	If $C_u \cap C_v \neq \emptyset$ and $\deg{u} \leq \deg{v}$, then $C_u \subseteq C_v$. 
\end{property}

\subsection{Existence of the Decomposition} 

The established properties in previous parts are enough to prove the \emph{existence} of the decomposition. For our~\Cref{thm:decomposition}, we need a \emph{constructive} version of the decomposition (via the information provided in the theorem statement); this will be done in the subsequent section which also require some further relaxing of the obtained bounds. Nevertheless, we present the following theorem 
on the existence of the decomposition in its full generality as a standalone result as it is of its own independent interest. 

\begin{theorem}[Sparse-Dense Decomposition -- existential version]\label{thm:decomposition-structural}
	For every sufficiently small $\eps,\delta>0$, vertices of any given graph $G=(V,E)$ can be partitioned into the following sets: 
\begin{itemize}

\item \textbf{Light vertices} $\Vlight$: Any  $v\in \Vlight$ has at least $\delta \cdot \deg{v}$ neighbors $u$ such that: 
\[
	\card{N(u) - N(v)} \geq \frac{\eps}{(1+\eps)} \cdot \max\set{\deg{u},\deg{v}}. 
\]
\item \textbf{Low-sparse vertices} $\Vlowsparse$: Any  $v\in \Vlowsparse$ has at least $\delta \cdot \deg{v}$ neighbors $u$ such that: 
\[
	\card{N(v) - N(u)} \geq \frac{\eps}{(1+\eps)} \cdot \max\set{\deg{u},\deg{v}}. 
\]
\item \textbf{Dense vertices partitioned into \underline{almost-cliques} $K_1,\ldots,K_k$}: For every $i \in [k]$, each $K_i$ has the following properties. Let 
 $\Delta(K_{i})$ be the maximum degree (in $G$) of the vertices in $K_i$, then: 
\begin{enumerate}[label=$\roman*)$.]
\item  Every vertex $v\in K_{i}$ has at most $(10\eps+10\delta) \cdot \Delta(K_{i})$ \emph{non-neighbors} inside ${K_{i}}$;
\item Every vertex $v\in K_{i}$ has at most $(9\eps+11\delta)\cdot \Delta(K_{i})$ \emph{neighbors} outside ${K_{i}}$;
\item Size of each $K_i$ satisfies $(1-2\eps-4\delta)\cdot \Delta(K_{i}) \leq |K_{i}| \leq (1+3\eps+3\delta)\cdot \Delta(K_{i})$.
\end{enumerate}
\end{itemize}  
\end{theorem}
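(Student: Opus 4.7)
The plan is to construct the decomposition by first classifying each vertex as light, low-sparse, or dense via Definitions \ref{def:light}, \ref{def:isolated}, and \ref{def:dense}, and then bundling the dense vertices into almost-cliques using the candidate-set machinery developed in the previous subsection.

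For the first two buckets, the required neighborhood-difference guarantees come for free: \Cref{p:light} exactly gives the ``$\card{N(u)-N(v)} \geq \frac{\eps}{1+\eps}\max\set{\deg{u},\deg{v}}$'' condition for light vertices, and \Cref{p:isolated} gives the mirror statement (with $N(v)-N(u)$) for low-sparse vertices. After choosing any tie-breaking rule for vertices that happen to be simultaneously light and low-sparse (say, always assign such a vertex to $\Vlight$), both of these buckets of the partition are already correct, since the guarantees of \Cref{p:light,p:isolated} are per-vertex and hence unaffected by whatever happens elsewhere.

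The main step is assembling the almost-cliques. For each $(\eps,\delta)$-dense vertex $v$, take its candidate set $C_v$ from Definition \ref{def:candidate-sets}, and define $K_1,\ldots,K_k$ to be the maximal (root) sets of the collection $\candid$, whose laminarity is furnished by \Cref{p:7}. Three things then need to be checked: (a) the roots are pairwise disjoint, which is immediate from laminarity plus maximality (two intersecting sets in a laminar family satisfy a proper containment, contradicting that both are roots); (b) every dense vertex is covered, since by \Cref{p:5} we have $v\in C_v$, and every set in a laminar family lies inside a unique root, so $v$ sits in the root above $C_v$; and (c) each $K_i$ satisfies the three structural guarantees of an almost-clique with respect to $\Delta(K_i)$, which is exactly \Cref{p:C-almost-clique} applied to the dense vertex whose candidate set realizes $K_i$.

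The only remaining bookkeeping is that a root $K_i$ may also contain vertices that were originally labeled light or low-sparse (since Rules (1)--(2) of Definition \ref{def:candidate-sets} impose no density hypothesis on the members of $C_v$); such vertices get moved out of $\Vlight$ or $\Vlowsparse$ into their almost-clique, which is harmless because the guarantees for $\Vlight$ and $\Vlowsparse$ are per-vertex. The genuinely hard step has already been carried out in the previous subsection, namely \Cref{p:6} --- whose two-stage argument first extracts a loose bound on $\card{\low{u}\cap\low{v}}$ from \Cref{p:2,p:4}, and then bootstraps it into a much sharper one via the edge-count bounds of \Cref{p:dense} --- from which laminarity \Cref{p:7} follows; with this in hand, the present theorem is essentially a clean assembly of the previously-established properties.
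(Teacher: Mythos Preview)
Your proposal is correct and follows essentially the same approach as the paper: take the roots of the laminar family $\candid$ (via \Cref{p:7}) as the almost-cliques, invoke \Cref{p:C-almost-clique} for their structural guarantees, use \Cref{p:5} to ensure every dense vertex is covered, and classify the remaining vertices as light or low-sparse via \Cref{p:light,p:isolated}. The only cosmetic difference is that the paper first forms the almost-cliques and then assigns leftover vertices to $\Vlight$/$\Vlowsparse$, whereas you first classify and then move absorbed light/low-sparse vertices into their almost-clique; the resulting partition is identical.
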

\begin{proof}
	The decomposition is constructed as follows: 
	\begin{itemize}
		\item Compute the collection of candidate sets $\candid$. Recall that by~\Cref{p:7}, this  is a laminar collection. We will return all the roots of this laminar collection as the almost-cliques $K_1,\ldots,K_k$. 
		By~\Cref{p:C-almost-clique}, each $K_i$ satisfies the almost-clique property of the theorem statement, and by laminarity, these almost-cliques are disjoint. 
		
		\item By~\Cref{p:5}, every $(\eps,\delta)$-dense vertex $v$ belongs to $C_v$ and thus by the previous part, it belongs to one almost-clique. Thus, all remaining vertices at this point are $(\eps,\delta)$-light or $(\eps,\delta)$-sparse. 
		We can then partition them accordingly as required by the theorem statement; the corresponding bounds now follow from~\Cref{p:light} and~\Cref{p:isolated} for light and low-sparse vertices, respectively.  
	\end{itemize}
	This concludes the proof. 
\end{proof}

\subsection{An Algorithm for Recovering the Decomposition} 
\label{subsec:sample-decompose}

We now design an algorithm that for any graph $G=(V,E)$, given only the information specified in~\Cref{thm:decomposition}, can find our desired decomposition of $G$. As a quick reminder, 
the information provided to our algorithm is $(i)$ the degrees of all vertices, $(ii)$ $O(\eps^{-2} \cdot \log{n})$ random neighbors $\NS{v}$ for each vertex $v \in V$, and $(iii)$ a set $\sample$ of vertices where in each vertex $v$ is included 
with probability $O(\eps^{-2} \cdot \log{n}/\deg{v})$ together with $N(v)$ for $v \in \sample$ (see~\Cref{thm:decomposition} for more details). 


The algorithm for finding the decomposition consists of two steps: $(i)$ (approximately) identifying dense vertices in $\sample$ and $(ii)$ forming almost-cliques for the identified dense vertices in a way that it allows 
for the required decomposition of the entire graph. We start with the first step which is also the easier of the two. At the end, we also analyze the runtime of the algorithm which completes the proof of~\Cref{thm:decomposition}. 

\subsubsection*{Part (I): (Approximately) Identifying Dense Vertices} 

\noindent
In this part, we design a tester for approximately identifying the dense vertices in $\sample$. 

\begin{lemma}\label{lem:classify}
	There is an algorithm that given the samples specified in~\Cref{thm:decomposition} outputs a set $D$ in $O(\eps^2 \cdot n\log{n})$ time such that with high probability: 
	\begin{enumerate}[label=$\roman*).$]
		\item every $(\eps,\eps)$-dense vertex is in $D$;
		\item every vertex in $D$ is $(7\eps,2\eps)$-dense vertices. 
	\end{enumerate}
\end{lemma}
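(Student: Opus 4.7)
The plan is to include $v \in D$ iff $v$ passes two independent concentration-based tests: a lightness test built purely from $\NS v$, and a low-sparseness test built from $\NS v$ together with the vertex sample $\sample$.

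The lightness test uses $\hat L_v := t^{-1}\,|\{u \in \NS v : \deg u \leq (1+\eps)\deg v\}|$, an unbiased estimator of $|\low v|/\deg v$. Since the elements of $\NS v$ are i.i.d.\ uniform over $N(v)$ and $t = c\eps^{-2}\log n$, Chernoff with a sufficiently large constant $c$ and a union bound over $v$ give $|\hat L_v - |\low v|/\deg v| \leq \eps/10$ for every $v$ w.h.p. Accepting $v$ when $\hat L_v \geq 1 - 3\eps/2$ retains every $(\eps,\eps)$-dense vertex (for which $|\low v|/\deg v \geq 1-\eps$ gives $\hat L_v \geq 1-11\eps/10$) and, using $\text{Low}_{7\eps}(v) \supseteq \low v$, rejects every $(7\eps,2\eps)$-light one.

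The low-sparseness test estimates $|\isolated v \cap \low v|/\deg v$ via inverse-probability-weighted sampling on $\sample$. For each $u \in \sample \cap N(v)$ with $\deg u \leq (1+\eps)\deg v$ — so that $u \in \low v$ and $N(u)$ is available — I compute the inner estimator
\[
\hat q_u \;:=\; t^{-1}\bigl|\{w \in \NS v : \deg w \leq (1+\eps)\deg v,\ w \notin N(u)\}\bigr|,
\]
which Chernoff concentrates around $|\low v \setminus N(u)|/\deg v$ within $\eps/10$; declare such a $u$ ``isolated'' when $\hat q_u$ exceeds an appropriate threshold near $\eps$, so that the declaration agrees with $u \in \isolated v$ up to the Chernoff slack. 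The outer estimator is
\[
\hat S_v \;:=\; \frac{1}{\deg v}\sum_{\substack{u \in \sample \cap N(v) \\ \deg u \leq (1+\eps)\deg v}} \frac{1}{p_u}\,\mathbb{1}[u \text{ declared isolated}].
\]
The summands are bounded by $(1+\eps)\deg v/(c\log n)$ and the sum of second moments is at most $(1+\eps)\deg v^2/(c\log n)$, so Bernstein's inequality delivers $|\hat S_v - \E\hat S_v| \leq \eps/10$ with failure probability $\exp(-\Omega(c\eps^2\log n))$, which drops below $1/n^3$ once $c$ is chosen sufficiently large as an absolute constant in $1/\eps_0^2$. I accept $v$ iff $\hat S_v$ is below an appropriate $\Theta(\eps)$-threshold; combined with the lightness bound $|\text{Low}_{7\eps}(v)\setminus\low v| = O(\eps)\deg v$ (used to translate the ``outer'' $7\eps$-parameter condition in~(ii) into an ``inner'' $\eps$-parameter condition detectable by $\hat S_v$), and invoking the structural \Cref{p:light,p:isolated,p:dense,p:kernel,p:kernel-dense,p:kernel-monotone} of \Cref{sec:decomposition-prelim}, this exactly separates $(\eps,\eps)$-dense from $(7\eps,2\eps)$-low-sparse.

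For running time, the lightness test is $O(t)$ per vertex; the low-sparseness test spends $O(t)$ per tested $u$ using a hash table for $N(u)$, and each $v$ tests $|N(v) \cap \sample|$ vertices, so summing over $v$ and using $\sum_v \E|N(v) \cap \sample| = \sum_u p_u \deg u = O(n\log n)$ gives the claimed $O(\eps^{-2} n\log^2 n)$ total. The main obstacle is the Bernstein step: because the theorem fixes $p_v = c\log n/\deg v$ with the same absolute constant $c$ as in $t$, there are only $\Theta(\log n)$ expected sampled neighbors per $v$, so the re-weighted estimator's variance is tight; the analysis goes through only because $\eps$ is bounded below by the a~priori constant $\eps_0$ and $c$ may be chosen correspondingly large. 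The reason I cannot simply use $\NS v$ alone — which would give the more comfortable $\eps^{-2}\log n$ samples — is that testing isolation of $u$ against $\low v$ requires \emph{some} adjacency information about $u$, and $\sample$ is exactly the cheapest way Theorem~\ref{thm:decomposition} provides it.
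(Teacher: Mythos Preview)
Your Bernstein step does not go through. You write that ``$\eps$ is bounded below by the a~priori constant $\eps_0$,'' but \Cref{thm:decomposition} says $\eps < \eps_0$: the parameter may be arbitrarily \emph{small}, while $c$ --- and hence the vertex-sampling rate $p_v = c\log n/\deg v$ --- is an absolute constant independent of $\eps$. Your outer estimator $\hat S_v$ therefore draws on only $\Theta(\log n)$ expected samples from $\low v$, and the bound you need, $\Pr\!\big[|\hat S_v - \E\hat S_v| > \eps/10\big] \le \exp(-\Omega(c\,\eps^2\log n))$, is not polynomially small once $\eps \ll 1/\sqrt c$. So your low-sparseness test cannot achieve the required $\Theta(\eps)$ accuracy from the samples \Cref{thm:decomposition} actually provides.

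The paper avoids this by reversing the roles of the two sample types. It only classifies vertices $v \in \sample$ --- which is all the downstream construction needs --- and for such $v$ the \emph{entire} neighborhood $N(v)$ is available, so $\Low{v}{7\eps}$ is computed exactly and the outer count is not sampled at all. The inner ``does $u$ look isolated for $v$?'' test is then run for \emph{every} $u \in \Low{v}{7\eps}$ using $\NS u$, which carries $t = c\,\eps^{-2}\log n$ samples and hence gives the needed $\Theta(\eps)$-accurate Chernoff estimate. In short, the $\eps$-dependent budget lives in the edge samples $\NS u$, not in $\sample$; your proposal inverts this and runs out of accuracy. (There is also a secondary slack issue: your translation from the $(7\eps,2\eps)$-low-sparse condition to a lower bound on $|\isolated v \cap \low v|$ via $|\Low{v}{7\eps}\setminus\low v| \le 2\eps\,\deg v$ exactly cancels the $2\eps\,\deg v$ guarantee, leaving nothing; that part is repairable with adjusted thresholds, but the Bernstein gap is not.)
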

\begin{proof}
Recall that a dense vertex is a one which is not light nor low-sparse. Thus, to prove~\Cref{lem:classify}, we need to be able to rule out vertices which are light or low-sparse. We do each part in the following. 

\paragraph{Checking if $v \in \sample$ is $(\eps,\eps)$-light.} The check is quite easy: we know $N(v)$ and $\deg{u}$ for each $u \in N(v)$. The tester can easily compute $\low{v}$ by checking 
degrees of vertices in $N(v)$ and output $v$ is light iff $\card{\low{v}} < (1-\eps) \cdot \deg{v}$.

\paragraph{Checking if $v \in \sample$ is $(\eps,\eps)$-low-sparse (approximately).} On the other hand, checking whether a vertex $v$ is low-sparse or not is not that easy. Being low-sparse not only depends on $N(v)$, but rather $N(u)$ for each $u \in N(v)$,
an information that is not provided to our algorithm. As such, we need allow for approximation in this step. Moreover, even with approximation, our tester only works for vertices which are already ruled out as being $(\eps,\eps)$-light
by the previous part. 

 The tester we use  is as follows (we emphasize that in the definition of the tester and the proofs below, we need to frequently switch and mix-and-match different values for parameter $\eps$ of low-sparse vertices):

\begin{tbox}
\begin{itemize}
	\item For any vertex $u \in \Low{v}{7\eps}$, return `$u$ seems isolated for $v$' if 
	\[
		\deg{u} < (1-2\eps) \cdot \deg{v} \quad \text{or} \quad \card{\Low{v}{7\eps} \cap \NS{u}} < (1-4\eps) \cdot t.
	\]
	Return `$v$ seems low-sparse' if at least $2\eps \cdot \deg{v}$ vertices seem isolated for $v$. 
\end{itemize}
\end{tbox}
We now analyze this tester in the next two claims. In the following, let $I(v)$ denote the set of vertices in $\low{v}$ that seem isolated for $v$ by the tester. 
The following two claims are simple corollaries of definitions of low-sparse vertices, the threshold chose in the algorithm above, and Chernoff bound.

\begin{claim}[$\bigstar$]\label{clm:should-in} 
Any vertex $u$ in $\Low{v}{\eps} - \isolated{v}$ \underline{will not be} included in $I(v)$ with high probability.
\end{claim}

\begin{claim}[$\bigstar$]\label{clm:should-out} 
Any vertex $u$ in $\Low{v}{7\eps} \cap \Isolated{v}{7\eps}$ \underline{will be} included in $I(v)$ with high probability.
\end{claim}

Let $S$ denote the vertices in $\sample$ that seem low-sparse by our tester. We have, 

\begin{itemize}
	\item Let $v$ be a $(\eps,\eps)$-dense vertex. By~\Cref{clm:should-in}, no vertex from $\low{v} - \isolated{v}$ will be included in $I(v)$ with high probability. Thus, 
	\begin{align*}
		\card{I(v)} &\leq \card{\Low{v}{7\eps} \cap \isolated{v}} \\
		&\leq \card{\Low{v}{7\eps} - \low{v}} + \card{\low{v} \cap \isolated{v}} < \eps \cdot \deg{v} + \eps \cdot \deg{v},
	\end{align*}
	where the first term of last inequality is because $v$ is not $(\eps,\eps)$-light (thus $\card{\low{v}} > (1-\eps) \cdot \deg{v}$), and the second term is because $v$ is not $(\eps,\eps)$-sparse. 
	Thus at most $2\eps\cdot\deg{v}$ can be included in $I(v)$ and hence $v$ itself will not seem low-sparse to the tester. 
	
	Consequently, $S$ does not include any $(\eps,\eps)$-dense vertex. 
	
	\item Conversely, let $v$ be a $(7\eps,2\eps)$-sparse vertex. By~\Cref{clm:should-out}, all vertices in $\Low{v}{7\eps} \cap \Isolated{v}{7\eps}$ will be included in $I(v)$ with high probability. Thus, 
	\begin{align*}
		\card{I(v)} \geq \card{\Low{v}{7\eps} \cap \Isolated{v}{7\eps}} \geq 2\eps \cdot \deg{v}, 
	\end{align*}
	by~\Cref{def:isolated}. Thus, at least $2\eps \cdot \deg{v}$ will be included in $I(v)$ and hence $v$ will seem low-sparse to the tester. 
	
	Consequently, $S$ includes all $(7\eps,2\eps)$-sparse vertices. 
	
\end{itemize}

\paragraph{Final tester for dense vertices.} Finally, the tester for dense vertices will be as follows. For any vertex $v \in \sample$, we first run the tester for light vertices and ignore $v$ if the tester returns it is light. Then, 
we run the tester for low-sparse vertices to get the set $S$ of vertices that seem low-sparse. We ignore vertices of $S$ also and let $D$ be the remaining vertices. As such, with high probability, 
\begin{itemize}
	\item Every $(\eps,\eps)$-dense vertices will be included in $D$; 
	\item No vertex which is $(7\eps,2\eps)$-low-sparse or $(\eps,\eps)$-light can be in $D$, thus  vertices in $D$ are $(7\eps,2\eps)$-dense. 
\end{itemize}
This concludes the proof of~\Cref{lem:classify} (the bound on the runtime follows trivially from the testers for light and low-sparse vertices). \Qed{\Cref{lem:classify}}

\end{proof}

\subsubsection*{Part (II): Forming Almost-Cliques} 

We now consider the main part of the argument which is on forming the almost-cliques in the decomposition. Throughout this part, we define: 
\[
	\eps' := 7\eps \quad \text{,} \quad \delta' := 4\eps \quad \text{and} \quad \eps'' := \frac{\eps}{7} \quad \text{,} \quad \delta'' := \frac{\eps}{4}. 
\]
Our goal is to form $(\eps',\delta')$-candidate sets $C_v$ for vertices $v \in D$ and then pick a proper subset of them as our almost-cliques. Since every vertex in $D$ is $(7\eps,2\eps)$-dense, they will all be $(\eps',\delta')$-dense as well and thus we can hope to form the required candidate sets. We start by designing an algorithm for  the candidate tests. 

\paragraph{Forming $(\eps',\delta')$-candidate sets $C_v$ for \underline{all} $v \in V$.} To avoid dependency issues, and for the next part of the analysis, we design and analyze an algorithm that given any $(\eps',\delta')$-dense vertex $v$, find a correct choice of $C_v$ for $v$ with high probability. We emphasize that will only be able to run the algorithm for $v \in \sample$ (for which we know $N(v)$ entirely), but the algorithm and resulting candidate sets are correct for all  $v \in V$. 
\begin{tbox}
\begin{itemize}
	\item Define $\tC{v}$ as the set of all vertices $u \in V$ that satisfy the following properties: 
	\[
		\card{\NS{u} \cap \Low{v}{\eps'}} \geq (1-6\eps'-6\delta' - \eps) \cdot t \cdot \frac{\deg{v}}{\deg{u}} \quad \text{and} \quad \deg{u} \leq (1+2\eps'+2\delta') \cdot \deg{v}.
	\]
	Return $\tC{v}$ as a choice of candidate set $C_v$. 
\end{itemize}
\end{tbox}

The following lemma establishes the correctness of the algorithm. The proof is a simple application of Chernoff bound. 

\begin{lemma}[$\bigstar$]\label{lem:form-Cv}
	Let $v$ be any $(\eps',\delta')$-dense vertex in $V$. Then, with high probability, 
	\begin{enumerate}[label=$(\roman*)$]
		\item Every vertex $u \in V$ satisfying the following  is included in $\tC{v}$: 
		\[
			\card{N({u}) \cap \Low{v}{\eps'}} \geq (1-6\eps'-6\delta' ) \cdot \deg{v} \quad \text{and} \quad \deg{u} \leq (1+2\eps'+2\delta') \cdot \deg{v};
		\]
		\item No vertex $u \in V$ satisfying the following  is included in $\tC{v}$:
		\[
			\card{N({u}) \cap \Low{v}{\eps'}} < (1-7\eps'-7\delta' ) \cdot \deg{v} \quad \text{and} \quad \deg{u} > (1+2\eps'+2\delta') \cdot \deg{v}.
		\]
	\end{enumerate}
	Thus, $\tC{v}$ is a valid choice of $(\eps',\delta')$-candidate set $C_v$ by~\Cref{def:candidate-sets}.  
\end{lemma}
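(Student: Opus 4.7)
The plan is to apply the additive Chernoff bound (Proposition~2.1) to $X_u := \card{\NS{u} \cap \Low{v}{\eps'}}$ for each vertex $u \in V$ and then union bound over the relevant $u$'s. By construction $X_u$ is a sum of $t = c\eps^{-2}\log n$ i.i.d.\ Bernoulli indicators with success probability $\card{N(u) \cap \Low{v}{\eps'}}/\deg{u}$, so $\Exp[X_u] = t \cdot \card{N(u) \cap \Low{v}{\eps'}}/\deg{u}$. Taking the additive deviation to be $\eps \cdot t \cdot \deg{v}/\deg{u}$ gives
\[
\Pr\!\left[\card{X_u - \Exp[X_u]} \geq \eps\, t \cdot \tfrac{\deg{v}}{\deg{u}}\right] \leq 2\exp\!\left(-2\eps^2 t \cdot (\deg{v}/\deg{u})^2\right).
\]
I only need this concentration for $u$ with $\deg{u} \leq (1+2\eps'+2\delta')\deg{v}$, since every other vertex is rejected by the explicit degree check in the algorithm. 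For these $u$, the ratio $\deg{v}/\deg{u}$ is bounded below by an absolute constant, so the right-hand side is at most $n^{-10}$ once $c$ is chosen large enough; a union bound over the at-most-$n$ such vertices makes the two-sided deviation hold simultaneously with high probability.

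For part~$(i)$, given $u$ with $\card{N(u) \cap \Low{v}{\eps'}} \geq (1-6\eps'-6\delta')\deg{v}$ and $\deg{u} \leq (1+2\eps'+2\delta')\deg{v}$, I lower bound
\[
X_u \;\geq\; \Exp[X_u] - \eps\, t \cdot \tfrac{\deg{v}}{\deg{u}} \;\geq\; (1-6\eps'-6\delta'-\eps) \cdot t \cdot \tfrac{\deg{v}}{\deg{u}},
\]
so both defining inequalities of $\tC{v}$ are satisfied and $u \in \tC{v}$. For part~$(ii)$, if $\deg{u} > (1+2\eps'+2\delta')\deg{v}$ then $u$ is rejected by the algorithm outright; otherwise, starting from $\card{N(u) \cap \Low{v}{\eps'}} < (1-7\eps'-7\delta')\deg{v}$, the same concentration event in the reverse direction yields
\[
X_u \;<\; \Exp[X_u] + \eps\, t \cdot \tfrac{\deg{v}}{\deg{u}} \;<\; (1-7\eps'-7\delta'+\eps) \cdot t \cdot \tfrac{\deg{v}}{\deg{u}},
\]
which is strictly less than $(1-6\eps'-6\delta'-\eps) \cdot t \cdot \tfrac{\deg{v}}{\deg{u}}$ as soon as $\eps'+\delta' > 2\eps$. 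The sampling threshold that defines $\tC{v}$ therefore fails, so $u \notin \tC{v}$.

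The only subtle point is the bookkeeping on thresholds: the $\eps$ slack in the algorithm's sampling inequality must fit inside the gap between the ``include-all'' threshold $(1-6\eps'-6\delta')$ of Rule~$(1)$ and the ``exclude-all'' threshold $(1-7\eps'-7\delta')$ of Rule~$(2)$. With the chosen parameters $\eps' = 7\eps$ and $\delta' = 4\eps$, this gap is $\eps'+\delta' = 11\eps$, which comfortably absorbs the $2\eps$ of combined sampling slack. Everything else is a routine Chernoff-plus-union-bound argument, and the $\log n$ factor from the union bound is absorbed into the constant $c$ in the sample size $t$.
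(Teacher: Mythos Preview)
Your proposal is correct and follows essentially the same approach as the paper's own proof: both compute $\Exp[X_u] = t \cdot \card{N(u)\cap\Low{v}{\eps'}}/\deg{u}$, apply the additive Chernoff bound (\Cref{prop:chernoff}) with deviation $\eps\, t\,\deg{v}/\deg{u}$, use the degree constraint $\deg{u}\le(1+2\eps'+2\delta')\deg{v}$ to lower-bound the ratio $\deg{v}/\deg{u}$ by an absolute constant, and then union bound. Your treatment is in fact slightly more explicit than the paper's in separating out the degree-check case in part~$(ii)$ and in verifying that the threshold gap $\eps'+\delta'=11\eps$ absorbs the $2\eps$ sampling slack.
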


Finally, we can get to forming the desired almost-cliques. 

\paragraph{Forming almost-cliques.}  
A careful reader may have noticed that up until this part of argument, we never used the randomness in the choice of $\sample$. We will do that in this part. 
Consider the collection of candidate sets $\CC = \Candid{G}{\eps'}{\delta'}$ and recall that as we proved earlier, this is a laminar collection. Our strategy earlier in the proof of~\Cref{thm:decomposition-structural} was to return the root sets $C_v$ of the collection $\CC$. However, we will only be able to do so if 
we have sampled $v \in \sample$ (and it further makes its way to $v \in D$), a guarantee that cannot hold in general. 

Consequently, we use a different strategy in this part by further relaxing of our requirements. Let $S$ denote the set of all $(\eps'',\delta'')$-dense vertices in $V$ (note that given $\eps'' < \eps < \eps'$ and $\delta'' < \delta' < \delta$, these are 
in a sense the ``densest'' vertices we consider in the graph). Our goal is to find a collection of almost-cliques that cover all vertices in $S$; so, all remaining vertices will be sufficiently ``not dense'' for us to place them outside almost-cliques. 
The key step here is to prove that such a dense vertex belongs to the candidate set of ``many'' vertices in the graph, thus, it is still likely for us to sample one of those vertices at least in $\sample$ and thus include the dense vertex in the corresponding 
almost-clique also. This is formalized in the following lemma. 

\begin{lemma}[$\bigstar$]\label{lem:dense-finding-1}
	Suppose $v$ is an $(\eps'',\delta'')$-dense vertex. Then, $v \in C_u$ for every $u \in \Kernel{v}{\eps''}{\delta''}$ (where $C_v$ is also computed as a $(\eps',\delta')$-candidate set of $u$ by the algorithm in the previous part). 
\end{lemma}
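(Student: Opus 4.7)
The plan is to verify directly that $v$ satisfies Rule~(1) of \Cref{def:candidate-sets} with respect to $u$, which by definition forces $v$ to belong to \emph{every} valid $(\eps',\delta')$-candidate set of $u$. Since \Cref{lem:form-Cv} guarantees that the algorithm's output $\tC{u}$ is such a valid candidate set with high probability, this gives $v \in C_u$ as required. The only prerequisite for applying \Cref{lem:form-Cv} is that $u$ itself be $(\eps',\delta')$-dense; this follows from \Cref{p:kernel-dense}, which supplies that $u \in \Kernel{v}{\eps''}{\delta''}$ is $(4\eps''+2\delta'',\,2\eps''+2\delta'')$-dense, combined with a short monotonicity comparison of the light and low-sparse definitions using the concrete values $\eps''=\eps/7$, $\delta''=\eps/4$, $\eps'=7\eps$, $\delta'=4\eps$ (so that $\eps' \geq (4\eps''+2\delta'') + (2\eps''+2\delta'')$ and $\delta' \geq 2(2\eps''+2\delta'')$, which are the relations needed to propagate non-low-sparseness).

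The core of the argument is to establish the two inequalities demanded by Rule~(1). For the degree bound $\deg{v} \leq (1+2\eps'+2\delta')\cdot \deg{u}$, I would invoke \Cref{p:kernel}(ii) at parameters $(\eps'',\delta'')$: since $u \in \Kernel{v}{\eps''}{\delta''}$, the vertex $u$ has at least $(1-\eps''-\delta'')\deg{v}$ neighbors in $\Low{v}{\eps''}$, and hence $\deg{u} \geq (1-\eps''-\delta'')\deg{v}$. Rearranging yields $\deg{v} \leq \deg{u}/(1-\eps''-\delta'')$, which sits safely within $(1+2\eps'+2\delta')\deg{u}$ since $2\eps'+2\delta'$ is far larger than $\eps''+\delta''$.

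For the intersection inequality $\card{N(v) \cap \Low{u}{\eps'}} \geq (1-6\eps'-6\delta')\cdot \deg{u}$, the key observation is that the \emph{same} set $N(u) \cap \Low{v}{\eps''}$ used in the degree argument already lies inside $N(v) \cap \Low{u}{\eps'}$: membership in $N(v)$ is automatic from $\Low{v}{\eps''} \subseteq N(v)$, and for any $w \in \Low{v}{\eps''}$ the bound
\[
\deg{w} \;\leq\; (1+\eps'')\deg{v} \;\leq\; \frac{1+\eps''}{1-\eps''-\delta''}\,\deg{u} \;\leq\; (1+\eps')\deg{u}
\]
places $w$ in $\Low{u}{\eps'}$. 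Combined with $\card{N(u)\cap \Low{v}{\eps''}} \geq (1-\eps''-\delta'')\deg{v}$ from \Cref{p:kernel}(ii) and $\deg{v} \geq \deg{u}/(1+\eps'')$ (since $u \in \Low{v}{\eps''}$), this yields a lower bound of roughly $(1-2\eps''-\delta'')\deg{u}$, which comfortably dominates $(1-6\eps'-6\delta')\deg{u}$ under the chosen parameters.

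I do not anticipate any genuine obstacle beyond the parameter bookkeeping: all the geometric content is prepackaged in \Cref{p:kernel} and \Cref{p:kernel-dense}, and once the two Rule~(1) inequalities are confirmed, the inclusion $v \in C_u$ is automatic from \Cref{def:candidate-sets} together with the algorithmic correctness statement of \Cref{lem:form-Cv}. The only thing demanding care is the arithmetic of $\eps'=7\eps$, $\delta'=4\eps$ versus $\eps''=\eps/7$, $\delta''=\eps/4$, and the resulting slack on both inequalities is by sizable constant factors, so choosing $\eps$ below a fixed absolute threshold closes the argument.
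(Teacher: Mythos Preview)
Your proposal is correct and in fact takes a more direct route than the paper. You verify Rule~(1) of \Cref{def:candidate-sets} for $v$ with respect to $u$ head-on, using only \Cref{p:kernel}(ii) to obtain both the degree comparison $\deg{u} \geq (1-\eps''-\delta'')\deg{v}$ and a large set $N(u)\cap\Low{v}{\eps''}$ that you then embed into $N(v)\cap\Low{u}{\eps'}$ via the degree ratio. The paper instead argues indirectly: it first shows $u \in C_v$ (through \Cref{p:kernel-monotone} and \Cref{p:1}) and $u \in C_u$ (through \Cref{p:5}), concludes $C_u \cap C_v \neq \emptyset$, and then invokes the structural \Cref{p:6} together with the same degree bound to place $v$ in $C_u$. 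Your approach avoids any reference to $C_v$ and to \Cref{p:6}, trading that machinery for the explicit embedding $N(u)\cap\Low{v}{\eps''} \subseteq N(v)\cap\Low{u}{\eps'}$; the paper's route reuses already-established laminarity-style properties at the cost of implicitly needing $v$ to be $(\eps',\delta')$-dense as well. Both hinge on \Cref{p:kernel-dense} to certify that $u$ is $(\eps',\delta')$-dense (so that $C_u$ is well-defined and \Cref{lem:form-Cv} applies), and your monotonicity bookkeeping for propagating non-low-sparseness from $(4\eps''+2\delta'',\,2\eps''+2\delta'')$ to $(\eps',\delta')$ is exactly the right check there.
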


We also prove that for each $(\eps'',\delta'')$-dense vertex, there is at least one vertex from $\Kernel{v}{\eps''}{\delta''}$ that is sampled in $\sample$ with high probability. 
This is because size of $\Kernel{v}{\eps''}{\delta''}$ is $\Omega(\deg{v})$ and all those vertices have degree $O(\deg{v})$; as we sample each vertex proportional to its degree, with high probability, 
we sample at least one vertex of $\Kernel{v}{\eps''}{\delta''}$. 

\begin{lemma}[$\bigstar$]\label{lem:dense-finding-2}
	With high probability, for every $(\eps'',\delta'')$-dense vertex $v$, there is at least one vertex $u \in \Kernel{v}{\eps''}{\delta''}$ that is sampled in $\sample$. 
\end{lemma}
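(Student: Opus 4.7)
The plan is a direct second-moment/union-bound calculation exploiting two facts we have already established about kernels: $(i)$ kernels are large, namely $\card{\Kernel{v}{\eps''}{\delta''}} \geq (1-2\delta'') \cdot \deg{v}$ by~\Cref{p:kernel}, and $(ii)$ every vertex in the kernel lies in $\Low{v}{\eps''}$ and hence has degree at most $(1+\eps'') \cdot \deg{v}$ by~\Cref{def:light}. Combining these with the sampling rule $p_u = \min\{c\log n/\deg{u},\,1\}$ will make each kernel vertex sampled with probability $\Omega(\log n/\deg{v})$, so that in expectation $\Omega(\log n)$ kernel vertices are sampled.

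More concretely, fix an $(\eps'',\delta'')$-dense vertex $v$ and let $K = \Kernel{v}{\eps''}{\delta''}$. If $\deg{v} \leq c\log{n}$, then every $u \in K$ satisfies $\deg{u} \leq (1+\eps'') \cdot c \log n$, so $p_u \geq 1/(1+\eps'')$ is a constant and the probability that none of the $\Omega(\deg{v})$ kernel vertices is sampled is trivially at most $(1-\Omega(1))^{\Omega(\deg{v})}$; if $\deg{v}$ itself is $\Omega(\log n)$ (say, at least $\log n$) this is already $n^{-\Omega(1)}$, and in the remaining tiny-degree case one can instead just argue that $p_u = 1$ for every $u \in K$ since $\deg{u} \leq (1+\eps'')\cdot c\log n$, so all kernel vertices are deterministically in $\sample$. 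Otherwise $\deg{v} > c\log n$, and for each $u \in K$ we have
\[
    p_u \;\geq\; \frac{c\log n}{(1+\eps'')\cdot\deg{v}}.
\]
Since the samples are independent across vertices, the probability that no vertex of $K$ is sampled is at most
\[
    \prod_{u \in K}\paren{1 - p_u} \;\leq\; \paren{1 - \frac{c\log n}{(1+\eps'')\cdot \deg{v}}}^{(1-2\delta'')\cdot \deg{v}} \;\leq\; \exp\paren{-\frac{(1-2\delta'')\, c}{1+\eps''} \cdot \log n}.
\]
Choosing the absolute constant $c$ in~\Cref{thm:decomposition} large enough (say so that the exponent above is at least $10\log n$), this bound becomes $n^{-10}$, comfortably smaller than $1/n^{2}$.

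A union bound over all $(\eps'',\delta'')$-dense vertices $v \in V$ (of which there are at most $n$) then gives the claimed high-probability statement: simultaneously for every such $v$, at least one vertex of $\Kernel{v}{\eps''}{\delta''}$ is included in $\sample$. The only genuine subtlety is confirming that the sampling probabilities of kernel vertices are uniformly $\Omega(\log n /\deg{v})$, which is exactly where we use that kernels are contained in $\Low{v}{\eps''}$; the remainder is a routine Chernoff/product calculation, and the main obstacle (handling vertices whose degree is smaller than the logarithmic sampling threshold) is resolved by noting that in that regime $p_u = 1$ anyway.
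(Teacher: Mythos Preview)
Your proposal is correct and follows essentially the same approach as the paper: bound each $p_u$ below by $\Omega(\log n/\deg{v})$ using that kernel vertices lie in $\Low{v}{\eps''}$, bound the kernel size below by $\Omega(\deg{v})$ via~\Cref{p:kernel}, take the product of failure probabilities, and union-bound over $v$. The paper's proof is terser (it simply writes $\card{\Kernel{v}{\eps''}{\delta''}} \geq \deg{v}/2$ and $\deg{u} \leq 2\deg{v}$ without the case split on small $\deg{v}$), and your ``second-moment'' label is a misnomer---no variance is used---but otherwise the arguments coincide.
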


The algorithm for forming the almost-cliques is then as follows: 
\begin{tbox}
\begin{itemize}
	\item For every vertex $v \in D$, form its candidate set $C_v$. The collection $\set{C_v}_{v \in D}$ forms a laminar family by~\Cref{p:7}. Pick all roots of this collection as our almost-cliques $K_1,\ldots,K_k$ (note that these almost-cliques
	contain vertices not in $D$). 
\end{itemize}
\end{tbox}

The following two lemmas establish the correctness of this part. 

\begin{lemma}\label{lem:almost-clique-good}
	Conditioned on  the high probability events of~\Cref{lem:classify,lem:form-Cv,lem:dense-finding-2}, the collection of almost-cliques $K_1,\ldots,K_k$ satisfy the almost-clique properties of~\Cref{thm:decomposition} for some parameter $\Theta(\eps)$. 
\end{lemma}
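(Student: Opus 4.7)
The plan is to verify, for each returned almost-clique $K_i$, the three numerical properties (i)--(iii) appearing in the almost-clique clause of Theorem \ref{thm:decomposition}. By construction, each $K_i$ equals $C_{v_i}$ for some $v_i \in D$, namely the root of an equivalence class in the laminar collection $\{C_v\}_{v\in D}$. So the task reduces to proving, for every $v \in D$, that $C_v$ satisfies those three properties, which is precisely the content of Property \ref{p:C-almost-clique} for $(\eps', \delta')$-candidate sets. Thus the real work is to certify that the output $\tC{v}$ of the algorithm in Part (II) is a valid $(\eps',\delta')$-candidate set, and that the resulting collection is laminar.

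The first step would be to show $D \subseteq \Dense{G}{\eps'}{\delta'}$. By Lemma \ref{lem:classify}, every $v \in D$ is $(7\eps, 2\eps) = (\eps', \delta'/2)$-dense. Failing to be $(\eps', \delta'/2)$-light (respectively, $(\eps', \delta'/2)$-low-sparse) is a \emph{stronger} condition than failing to be $(\eps', \delta')$-light (respectively, $(\eps',\delta')$-low-sparse), since in the definitions \ref{def:light} and \ref{def:isolated} enlarging $\delta$ from $\delta'/2$ to $\delta'$ relaxes the defining inequality. Hence every vertex of $D$ is in fact $(\eps', \delta')$-dense, so it is eligible to be assigned a $(\eps', \delta')$-candidate set.

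The second step is to invoke Lemma \ref{lem:form-Cv}: conditioned on its high-probability event, for every such $v \in D$ the set $\tC{v}$ satisfies both defining rules of Definition \ref{def:candidate-sets} and is therefore a valid choice of $(\eps', \delta')$-candidate set $C_v$. Consequently $\{C_v\}_{v\in D}$ is a sub-collection of $\Candid{G}{\eps'}{\delta'}$; by Property \ref{p:7} the latter is laminar, so our sub-collection is laminar too, and the returned roots $K_1,\ldots,K_k$ are pairwise disjoint and well-defined.

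The third step is to apply Property \ref{p:C-almost-clique} to each $K_i = C_{v_i}$, which directly yields the three bounds with parameters $10\eps'+10\delta'$, $9\eps'+11\delta'$, and $1\pm O(\eps'+\delta')$ respectively. Substituting $\eps'=7\eps$ and $\delta'=4\eps$ makes all of these $\Theta(\eps)$, matching the form asserted in Theorem \ref{thm:decomposition} after rescaling $\eps$ by an absolute constant. I do not anticipate any real obstacle here; the only subtlety is the monotonicity step letting us pass from $(\eps', \delta'/2)$-dense (as guaranteed by Lemma \ref{lem:classify}) up to $(\eps', \delta')$-dense (as needed to invoke Lemma \ref{lem:form-Cv}), and this is immediate from the definitions.
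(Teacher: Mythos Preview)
Your proposal is correct and follows essentially the same approach as the paper's proof: show that each $v \in D$ is $(\eps',\delta')$-dense (via \Cref{lem:classify}), so that by \Cref{lem:form-Cv} each $\tC{v}$ is a valid $(\eps',\delta')$-candidate set, then apply \Cref{p:C-almost-clique} to get the almost-clique bounds and \Cref{p:7} to get laminarity and disjointness. Your explicit monotonicity argument (passing from $(7\eps,2\eps)$-dense to $(7\eps,4\eps)$-dense) fills in a step the paper's proof simply asserts.
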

\begin{proof}
	Any vertex in $D$ is $(\eps',\delta')$-dense by~\Cref{lem:classify}, thus by~\Cref{p:C-almost-clique}, each set $C_v$ satisfies the properties of almost-cliques for parameters $\eps' = 7\eps$ and $\delta'=4\delta$. Plugging in these bounds ensure that each 
	$K_i$ is individually an almost-clique. Moreover, since the collection $\set{C_v}_{v \in D}$ is laminar and we are picking root sets of this collection, the resulting almost-cliques will be disjoint. Re-scaling $\eps$ by a constant factor, finalizes the proof. 
\end{proof}

\begin{lemma}\label{lem:sparse-good}
	Conditioned on  the high probability events of~\Cref{lem:classify,lem:form-Cv,lem:dense-finding-2}, any vertex $v$ not in $K_1 \cup \ldots \cup K_k$ satisfies the sparse vertex properties of~\Cref{thm:decomposition} for some parameter $\Theta(\eps)$. 
\end{lemma}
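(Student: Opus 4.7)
The plan is to argue by contrapositive: if $v \notin K_1 \cup \cdots \cup K_k$, then $v$ cannot be $(\eps'', \delta'')$-dense, after which the sparse vertex property of~\Cref{thm:decomposition} follows immediately from~\Cref{p:light} and~\Cref{p:isolated}. So I just need to show that any sufficiently dense vertex is captured by some almost-clique.

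Suppose toward contradiction that $v \notin K_1 \cup \cdots \cup K_k$ yet $v$ is $(\eps'', \delta'')$-dense (with $\eps''$ and $\delta''$ possibly rescaled to sufficiently small $\Theta(\eps)$). Under the events conditioned on,~\Cref{lem:dense-finding-2} produces a vertex $u \in \Kernel{v}{\eps''}{\delta''}$ with $u \in \sample$. By~\Cref{p:kernel-dense}, this $u$ is $(4\eps''+2\delta'',\, 2\eps''+2\delta'')$-dense, which for the rescaled parameters implies $u$ is $(\eps, \eps)$-dense, and so~\Cref{lem:classify}(i) places $u \in D$.

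Once $u \in D$, the candidate set $C_u$ produced by the algorithm of~\Cref{lem:form-Cv} belongs to the laminar family (\Cref{p:7}) from which the almost-cliques $K_1, \ldots, K_k$ are extracted as roots. By~\Cref{lem:dense-finding-1}, $v \in C_u$, and hence $v$ lies in whichever root set contains $C_u$, i.e., $v \in K_i$ for some $i$. This contradicts the assumption, so $v$ is not $(\eps'', \delta'')$-dense; therefore $v$ is $(\eps'', \delta'')$-light or $(\eps'', \delta'')$-low-sparse, and~\Cref{p:light} (resp.~\Cref{p:isolated}) yields at least $\delta''\cdot\deg{v} = \Theta(\eps)\cdot\deg{v}$ neighbors $u$ of $v$ satisfying $\card{N(u)-N(v)} \geq \frac{\eps''}{1+\eps''}\cdot\max\set{\deg{u},\deg{v}}$ (resp.\ $\card{N(v)-N(u)}\geq\frac{\eps''}{1+\eps''}\cdot\max\set{\deg{u},\deg{v}}$). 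Since both one-sided differences are subsets of $N(v) \sym N(u)$, this gives $\card{N(v) \sym N(u)} \geq \Theta(\eps) \cdot \max\set{\deg{u}, \deg{v}}$ as needed for the sparse vertex property of~\Cref{thm:decomposition} with some $\eta_0 = \Theta(1)$.

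The main obstacle is the parameter-chaining step: ensuring that $u$ being $(4\eps''+2\delta'',\, 2\eps''+2\delta'')$-dense implies $u$ is $(\eps, \eps)$-dense, as required by~\Cref{lem:classify}(i). The ``not light'' direction is monotone in the parameters and is immediate once $4\eps''+2\delta'' \leq \eps$ and $2\eps''+2\delta'' \leq \eps$; the ``not low-sparse'' direction is more delicate because $\Isolated{u}{a}$ is not monotone in $a$, but the slack $\card{\Low{u}{\eps}-\Low{u}{a}} \leq b\cdot\deg{u}$ (which follows from $u$ not being $(a,b)$-light) absorbs the gap between $\Isolated{u}{a}\cap\Low{u}{a}$ and $\Isolated{u}{\eps}\cap\Low{u}{\eps}$ up to $O(b)\cdot\deg{u}$, so the chain closes provided $\eps''+\delta''$ is a small enough constant multiple of $\eps$. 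This is the only place where we use the flexibility of only requiring the final sparse parameter to be some $\Theta(\eps)$.
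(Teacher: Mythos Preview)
Your proof is correct and follows the same approach as the paper: show that every $(\eps'',\delta'')$-dense vertex lands in some $K_i$ via the chain \Cref{lem:dense-finding-2} $\to$ \Cref{p:kernel-dense} $\to$ \Cref{lem:classify} $\to$ laminar roots (using \Cref{lem:dense-finding-1} and \Cref{p:7}), and then invoke \Cref{p:light}/\Cref{p:isolated} on what remains. You are in fact more careful than the paper about the parameter chain: the paper simply asserts ``$u$ is $(\eps,\delta)$-dense'' and appeals to a final ``re-scaling $\eps$ by a constant factor,'' whereas you correctly flag the non-monotonicity of low-sparseness in the first parameter and sketch how the not-light slack $\card{\Low{u}{\eps}-\Low{u}{a}}\le b\cdot\deg{u}$ closes the gap once $\eps'',\delta''$ are taken as a small enough constant multiple of $\eps$.
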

\begin{proof}
	Let $v$ be any $(\eps'',\delta'')$-dense vertex. By~\Cref{lem:dense-finding-1}, $v$ belongs to $C_u$ for all $u \in \Kernel{v}{\eps''}{\delta''}$. By~\Cref{lem:dense-finding-2}, at least one vertex $u \in \Kernel{v}{\eps''}{\delta''}$ is sampled in $\sample$. 
	By~\Cref{p:kernel-dense}, the vertex $u$ is also $(\eps,\delta)$-dense. By~\Cref{lem:classify}, $u$ should also belong to $D$. Finally, since we are returning roots of the laminar family $\set{C_w}_{w \in D}$, all vertices in $C_u$ will belong to a 
	single almost-clique. Thus, $v$ will also be included in one of almost-cliques $K_1,\ldots,K_k$. 
	
	As such, any remaining vertex is either $(\eps'',\delta'')$-light or $(\eps'',\delta'')$-low-sparse. In both cases, by~\Cref{p:light} and~\Cref{p:isolated}, $v$ satisfies the required properties of sparse vertices in~\Cref{thm:decomposition}. 
	Re-scaling $\eps$ by a constant factor, finalizes the proof. 
\end{proof}

We are almost done with the proof of \Cref{thm:decomposition} as \Cref{lem:almost-clique-good,lem:sparse-good} ensure that the output of our algorithm  with high probability satisfies the desired decomposition. 
The very last step is to analyze the runtime of the recovery algorithm also which is done in the next part. 

\subsubsection*{Sample Size and Runtime Analysis of the Recovery Algorithm}

The first step to analyze the runtime of the algorithm is to bound the size of its input, namely, the total size of edge-samples and edges incident on vertex-samples. This will also be helpful in subsequent sections 
when designing our sublinear algorithms. It turns out that the proof of this lemma is not entirely trivial as direct applications of Chernoff bound seem to not achieve (asymptotically) optimal bounds (our proof uses Bernstein's inequality instead). 

\begin{lemma}\label{lem:sample-helper-lem}
	Define
	\[
		\Esample := \set{e = (u,v) \mid \text{$u \in \NS{v}$ for $v \in V$ or $u \in N(v)$ for $v \in \sample$}}
	\]
	as the set of edges given to the algorithm in~\Cref{thm:decomposition}. With high probability, $\card{\Esample} = O(\eps^{-2} \, n\log{n})$. 
\end{lemma}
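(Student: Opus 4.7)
The plan is to decompose $\Esample$ into two contributions: the deterministic one from the neighbor-sample sets $\{\NS{v}\}_{v \in V}$, which contains at most $n \cdot t = O(\eps^{-2} n \log n)$ edges no matter what the randomness is, and the random one $\sum_{v \in \sample} \deg{v}$ from the vertex samples. The whole task thus reduces to showing that $\sum_{v \in \sample} \deg{v} = O(n \log n)$ with high probability.

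I would split vertices by their degree at the threshold $c \log n$. For low-degree vertices ($\deg{v} \leq c \log n$) we have $p_v = 1$, so each is deterministically in $\sample$ and together they contribute at most $n \cdot c \log n = O(n \log n)$ edges. For high-degree vertices ($\deg{v} > c \log n$) we have $p_v = c \log n / \deg{v}$, and the random variable $X_v := \deg{v} \cdot \mathbf{1}[v \in \sample]$ has $\Exp[X_v] = c \log n$, so $\Exp[\sum_v X_v] \leq c n \log n$. The real work is concentration.

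The obstacle is that $\deg{v}$ can be as large as $n$, so a straight Chernoff bound on $\mathbf{1}[v \in \sample]$ scaled by $\deg{v}$ would lose a factor of $n$ and be much too weak. I would instead center the variables as $Y_v := X_v - \Exp[X_v]$ and apply Bernstein's inequality (\Cref{prop:bernstein}): each $|Y_v| \leq n$, and $\Exp[Y_v^2] \leq p_v \cdot \deg{v}^2 = c \log n \cdot \deg{v}$. Summing over all (at most $n$) high-degree vertices gives $\sum_v \Exp[Y_v^2] \leq c \log n \cdot \sum_v \deg{v} \leq 2 c m \log n = O(n^2 \log n)$, while the $Mm$ term in the denominator of Bernstein is $O(n^2)$. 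Setting $t = \alpha n \log n$ for a sufficiently large constant $\alpha$, the exponent in Bernstein becomes $-\Omega(\log n)$, so $\sum_v Y_v \leq \alpha n \log n$ with high probability, and adding back the expectation yields $\sum_{v \in \sample,\, \deg{v} > c \log n} \deg{v} = O(n \log n)$ w.h.p. Combining with the low-degree and deterministic contributions then gives the stated bound.

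The crux is precisely the wide dynamic range of the sample weights $\deg{v}$, which differ by up to a factor of $n$; only a variance-sensitive bound (Bernstein) captures that the total variance is $\tilde{O}(n^2)$ rather than $\tilde{O}(n^3)$, which is what makes the resulting deviation come out to the right $O(n \log n)$ scale.
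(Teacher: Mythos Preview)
Your proposal is correct and follows essentially the same approach as the paper: both split off the deterministic edge-sample contribution and the low-degree vertices, then apply Bernstein's inequality (\Cref{prop:bernstein}) to the centered variables $Y_v = \deg{v}\cdot\mathbf{1}[v\in\sample] - c\log n$ for high-degree vertices, using the variance bound $\Exp[Y_v^2]\leq p_v\deg{v}^2 = c\log n\cdot\deg{v}$ to get the required $O(n\log n)$ deviation. Your identification of the crux---that the wide range of $\deg{v}$ forces a variance-sensitive inequality rather than Chernoff---matches the paper's remark exactly.
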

\begin{proof}
	The total number of edge-samples is bounded by $O(\eps^{-2}\,n\log{n})$ deterministically. We thus only need to focus on bounding the contribution of edges for vertices $v \in \sample$ for which we store all of $N(v)$. 
	Moreover, in the following, we can focus on all vertices $v \in V$ whose degree is at least $2 \, ({c \cdot \log{n}})$ as the total number of edges incident on all other vertices is $O(n\log{n})$ at most. 
	
	Define the random variable $X_v$ for each $v \in V$ to be $\deg{v}$ if $v \in \sample$ and $0$ otherwise. Let $X:= \sum_{v \in V} X_v$ denote the total number of edges we store for the sampled vertices. We only need to bound $X$ then. 
	Given that each vertex $v$ belongs to $\sample$ with probability $p_v = ({c \cdot \log{n}})/{\deg{v}}$, we have, 
	\[
		\expect{X} = \sum_{v \in V} \expect{X_v} = \sum_{v \in V} p_v \cdot \deg{v} = (c \cdot  n \cdot \log{n}). 
	\]
	We need to prove the concentration of $X$ which is sum of independent random variables $\set{X_v \mid v \in V}$. 
	But given that the range of these variables is $[0,\deg{v}]$, a direct application of Chernoff-Hoeffding bounds is not good enough (for very dense graphs). 
	Instead, we use Bernstein's inequality (\Cref{prop:bernstein}). 
	
	Define zero-mean random variables $Y_v := X_v - \expect{X_v}$ for each vertex $v \in V$ and let $Y := \sum_{v \in V} Y_v$. So we have $Y = X - (c \cdot  n \cdot \log{n})$ and bounding $Y$ will allow us to 
	bound $X$ as well. We have, 
	\begin{align*}
		\expect{Y_v^2} &= \expect{(X_v-\expect{X_v})^2} \\
		&= \expect{\Paren{X_v-(c \cdot  n \cdot \log{n})}^2} \tag{by the calculation above} \\
		&\leq \expect{X_v}^2 \tag{by our assumption on $\deg{v} \geq 2 \, ({c \cdot \log{n}})$} \\
		&= p_v \cdot \deg{v}^2 \tag{$X_v$ is $\deg{v}$ w.p. $p_v$ and otherwise $0$} \\
		&= \deg{v} \cdot (c \cdot \log{n}).
	\end{align*}
	Moreover, for each $v$, we have $\card{Y_v} \leq \deg{v} < n$ and there are (at most) $n$ variables involved. 
	Plugging in these bounds in Bernstein's inequality (\Cref{prop:bernstein})  implies that, 
	\begin{align*}
		\Pr\paren{X > (5c \cdot n \cdot \log{n})} &= \Pr\paren{Y > (4c \cdot n \cdot \log{n})} \\
		&\leq \exp\paren{-\frac{(16 c^2 \cdot n^2 \cdot \log^2{n})}{2 \sum_{v \in V} \deg{v} \cdot (c \cdot \log{n}) + \nicefrac23 \cdot n^2}} \\
		&\leq  \exp\paren{-\frac{(16 c^2 \cdot n^2 \cdot \log^2{n})}{4 n^2 \cdot (c \cdot \log{n})}} \tag{as $\sum_{v \in V} \deg{v} \leq n^2$ and a loose upper bound for second term} \\
		&\leq \exp\paren{-{4c \cdot \log{n}}{}} < n^{-4c}.
	\end{align*}
	This concludes the proof. 
\end{proof}

Let us now analyze the runtime of each part of the algorithm assuming the high probability event of~\Cref{lem:sample-helper-lem} and all the ones conditioned on in the previous part happen. 

\paragraph{Approximately identifying dense vertices.} This part involves the following two components: 
\begin{enumerate}[label=$\roman*).$]
\item The tester for $(\eps,\eps)$-light vertices in $\sample$: this is done simply in $O(\card{\sample}) = O(n)$ time by checking degree of each vertex in $O(1)$ time. 
\item The (approximate) tester for $(\eps,\eps)$-dense vs $(7\eps,2\eps)$-sparse vertices in $\sample$: Testing each vertex $v \in V$ requires spending $O(\deg{v})$ time to visit each of its neighbors, and for each neighbor, $O(t) = O(\eps^{-2}\log{n})$
time to check intersection of its neighborhood with $\Low{v}{7\eps}$ (both sets can be sorted in linear-time using counting sort as  we can rename the vertices to $\set{1,\ldots,n}$ and thus their intersection can be computed in $O(t)$ time). 
This step takes $O(\sum_{v \in \sample} \deg{v} \cdot t) = O(\eps^{-2} \cdot n\log^2{n})$ time. 
\end{enumerate}

\paragraph{Forming almost-cliques.} This part also involves two components: 
\begin{enumerate}[label=$\roman*).$]
\item Forming $(\eps',\delta')$-candidate sets $C_v$ for $v \in \sample$: Recall that even though we defined the candidate sets for all $v \in V$, that was only for the analysis and the algorithm only needs (and can) compute these sets for 
vertices of $\sample$. This step involves including vertices $u \in V$ with large intersection of $\NS{u}$ and $\Low{v}{\eps'}$. This can be done efficiently as follows. 

The total number of edges (in $G$) going out of $\Low{v}{\eps'}$ is $O(\deg{v}^2)$
by~\Cref{p:dense}. On the other hand, since any vertex $u$ that we check for $v$ has to have $\deg{u} = \Omega(\deg{v})$ and we sample $t = \Theta(\eps^{-2} \cdot \log{n})$ edges uniformly at random from its neighborhood to 
get $\NS{u}$, we get that the total number of edge-samples going out of $\Low{v}{\eps'}$ is $O(\deg{v} \cdot t)$ with high probability. As constructing $C_v$ takes time proportional to visiting all these edges, this takes $O(\deg{v} \cdot t)$ time for vertex $v$. 
Thus, the overall runtime of this step is $O(\sum_{v \in \sample} \deg{v} \cdot t) = O(\eps^{-2} \cdot n\log^2{n})$. 

\item Forming almost-cliques: Given that the candidate sets form a laminar collection, we can find all their roots in $O(n)$ time as follows. Originally, for each vertex $v$ write the name of candidate sets $C_u$ that $v$ belongs to. Moreover, for each 
candidate set $C_u$ write down its size. Then, go over vertices: for each vertex $v$, pick the largest candidate set $C_w$ that contains $v$ (breaking the ties arbitrarily); since $C_u$'s form a laminar collection, this candidate set $C_w$ would be a root of the collection. Thus, we can include $C_w$ as an almost-clique and remove all its vertices from $V$. Then, continue like this until all vertices are processed. Finally, all remaining vertices are output as part of sparse vertices. This step takes $O(n)$ time this way. 
\end{enumerate}
\noindent
To conclude, the  runtime of the algorithm is $O(\eps^{-2} \cdot n\log^2{n})$ time. This concludes the proof of~\Cref{thm:decomposition}.


\newcommand{\charge}[2]{\ensuremath{\textnormal{\textsf{charge}}(#1,#2)}\xspace}

\newcommand{\EE}{\mathcal{E}}

\newcommand{\constSp}{\ensuremath{C_{1}}}

\newcommand{\uncharge}[2]{\ensuremath{\textnormal{\textsf{uneven-charge}}(#1,#2)}\xspace}
\newcommand{\unHcharge}[2]{\ensuremath{\textnormal{\textsf{uneven-H-charge}}(#1,#2)}\xspace}
\newcommand{\unScharge}[2]{\ensuremath{\textnormal{\textsf{uneven-S-charge}}(#1,#2)}\xspace}

\newcommand{\spcharge}[2]{\ensuremath{\textnormal{\textsf{sparse-charge}}(#1,#2)}\xspace}
\newcommand{\spHcharge}[2]{\ensuremath{\textnormal{\textsf{sparse-H'-charge}}(#1,#2)}\xspace}
\newcommand{\spScharge}[2]{\ensuremath{\textnormal{\textsf{sparse-S-charge}}(#1,#2)}\xspace}

\newcommand{\ccharge}[2]{\ensuremath{\textnormal{\textsf{clique-charge}}(#1,#2)}\xspace}
\newcommand{\occharge}[2]{\ensuremath{\textnormal{\textsf{outside-clique-charge}}(#1,#2)}\xspace}
\newcommand{\iccharge}[2]{\ensuremath{\textnormal{\textsf{inside-clique-charge}}(#1,#2)}\xspace}

\newcommand{\cscharge}[2]{\ensuremath{\textnormal{\textsf{clique-sep-charge}}(#1,#2)}\xspace}
\newcommand{\clcharge}[2]{\ensuremath{\textnormal{\textsf{clique-charge}}(#1,#2)}\xspace}

\newcommand{\chargeset}[1]{\ensuremath{\textnormal{{ChargeSet}}(#1)}\xspace}

\newcommand{\degpiK}[2]{\textnormal{indeg}^{\!+}(#1,K_#2)}
\newcommand{\NpiK}[2]{N^{+}_{\textnormal{in}}(#1,K_#2)}
\newcommand{\EpiK}[2]{E^{+}_{\textnormal{in}}(#1,K_#2)}

\newcommand{\degpoK}[2]{\textnormal{outdeg}^{\!+}(#1,K_#2)}
\newcommand{\NpoK}[2]{N^{+}_{\textnormal{out}}(#1,K_#2)}
\newcommand{\EpoK}[2]{E^{+}_{\textnormal{out}}(#1,K_#2)}

\newcommand{\degniK}[2]{\textnormal{indeg}^{\!-}(#1,K_#2)}
\newcommand{\NniK}[2]{N^{-}_{\textnormal{in}}(#1,K_#2)}
\newcommand{\EniK}[2]{E^{-}_{\textnormal{in}}(#1,K_#2)}

\newcommand{\degnoK}[2]{\textnormal{outdeg}^{\!+}(#1,K_#2)}
\newcommand{\NnoK}[2]{N^{-}_{\textnormal{out}}(#1,K_#2)}
\newcommand{\EnoK}[2]{E^{-}_{\textnormal{out}}(#1,K_#2)}

\section{Correlation Clustering via the Sparse-Dense Decomposition}\label{sec:cc-approx}

We are now ready to present a correlation clustering scheme based on the decomposition results in \Cref{sec:decomposition}, applied to the underling $G^+$ graph. Our approach is to 
simply place the sparse vertices in separate singleton clusters and treat each almost-clique of the dense vertices as one disjoint cluster. Formally, 

\begin{theorem}
\label{thm:cc-alg}
	Suppose $G=(V,E)$ is any labeled graph and $V = \Vsparse \sqcup K_1 \sqcup \ldots \sqcup K_k$ is an $\eps$-sparse-dense decomposition of $G^+$ for  $\eps > 0$ according to~\Cref{thm:decomposition}. 
	Let $\ALG$ be the following clustering: 
	\begin{itemize}
		\item Any vertex $v \in \Vsparse$ is placed in a singleton cluster, i.e., $\ALG(v) = \set{v}$; 
		\item Any almost-clique $K_i$ forms a separate cluster, i.e., for any $v \in K_i$, $\ALG(v) = \set{u \mid u \in K_i}$. 
	\end{itemize}
	Then, $\ALG$ is an $O(\eps^{-2})$-approximation correlation clustering of $G$. 
\end{theorem}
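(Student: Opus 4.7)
I plan to prove $\scost{\ALG} \leq O(\eps^{-2}) \cdot \scost{\OPT}$ for an arbitrary optimum clustering $\OPT$ via a charging argument. Decompose the $\ALG$-cost as $\scost{\ALG} = A_{\text{in}} + A_{\text{out}} + A_{\text{sp}}$, where $A_{\text{in}}$ counts $(-)$-edges inside an almost-clique $K_i$, $A_{\text{out}}$ counts $(+)$-edges between a $K_i$ and its outside, and $A_{\text{sp}}$ counts $(+)$-edges incident on a sparse vertex (not already in $A_{\text{out}}$). The almost-clique bounds in~\Cref{thm:decomposition} immediately give $A_{\text{in}} + A_{\text{out}} = O\bigl(\eps \sum_i \card{K_i}^2\bigr)$ and $A_{\text{sp}} \leq \sum_{v \in \Vsparse} \degp{v}$, reducing the task to charging each piece against $\scost{\OPT}$.

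For each $K_i$ I would case-split on the largest cluster $C^\star$ of $\OPT$ intersecting $K_i$. If $\card{C^\star \cap K_i} \geq (1-c\eps)\card{K_i}$ for a small constant $c$, then almost every $(-)$-edge inside $K_i$ that $\ALG$ pays for already sits in $C^\star$ and is a disagreement of $\OPT$; the few vertices of $K_i$ outside $C^\star$ each have $\Omega(\card{K_i})$ $(+)$-edges into $C^\star \cap K_i$ (by the almost-clique non-neighbor bound) that are cut by $\OPT$, absorbing the residual boundary cost in $O(1)$ charge. Otherwise, the almost-clique non-neighbor bound forces at least $\Omega(\eps \card{K_i}^2)$ $(+)$-edges inside $K_i$ to be cut by $\OPT$, which already exceeds the $O(\eps \card{K_i}^2)$ that $\ALG$ pays on edges touching $K_i$. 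In either branch, the $K_i$-piece of $\ALG$ is $O(1)$-charged to $\OPT$-disagreements on edges touching $K_i$.

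The $\eps^{-2}$ factor arises from the sparse piece. Fix $v \in \Vsparse$ with $\ALG$-cost $\degp{v}$, and set $C^\star = \OPT(v)$. If $\card{N^+(v) \cap C^\star} \leq (1 - \eta_0\eps/2)\degp{v}$ with $\eta_0$ from~\Cref{thm:decomposition}, then $\OPT$ already cuts $\Omega(\eps \degp{v})$ $(+)$-edges at $v$, giving an $O(\eps^{-1})$ charge. Otherwise, the ``different-neighborhood'' set of $v$ from~\Cref{thm:decomposition}, of size $\Omega(\eps \degp{v})$, intersects $C^\star$ in $\Omega(\eps \degp{v})$ vertices (since the non-$C^\star$ part of $N^+(v)$ is too small to absorb it). For each such $u \in C^\star$ and each $w \in N^+(v) \sym N^+(u)$, exactly one of $(v,w),(u,w)$ is a $(+)$-edge and the other is a $(-)$-edge: if $w \in C^\star$ then whichever of the two is a $(-)$-edge is a disagreement of $\OPT$ inside $C^\star$, and if $w \notin C^\star$ then whichever is a $(+)$-edge is an $\OPT$-cut $(+)$-edge. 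Counting only the $(u,w)$-typed disagreements -- which are distinct across distinct $u$ since each contains $u$ -- gives $\Omega(\eps^2 \degp{v}^2)$ disagreements of $\OPT$ attributable to $v$, which (after division by $\degp{v}$) yields the $O(\eps^{-2})$ per-vertex amortized charge.

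The main obstacle I expect is the \emph{global} accounting when summing over all sparse vertices: a single $\OPT$-disagreement $(u,w)$ could in principle be claimed by every sparse vertex $v \in N^+(u)$ that has $w$ in its symmetric-difference with $u$, blowing the bound up by as much as $\degp{u}$. To control this I would route each $(u,w)$-disagreement to a canonical witness sparse vertex -- for instance, the sparse $v$ of smallest index (or smallest degree) among those eligible to charge it -- so that each $\OPT$-disagreement is claimed $O(1)$ times in total; together with the per-vertex count of $\Omega(\eps^2 \degp{v})$ distinct disagreements, this amortizes to the desired $O(\eps^{-2})$ approximation. A secondary subtlety is making the ``$(u,w)$-type vs $(v,w)$-type'' split quantitatively robust so that at least a constant fraction of the $\Omega(\eps^2\degp{v}^2)$ witnesses survive as $(u,w)$-type disagreements, which should follow from a further case analysis on whether a majority of the symmetric-difference vertices lie inside or outside $C^\star$.
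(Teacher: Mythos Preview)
Your case analysis for the $(-)$-edges inside each $K_i$ is essentially the paper's argument, and the overall decomposition into in-clique, out-clique, and sparse pieces is the right one. But there are two genuine gaps.

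The minor one: your Case~1 does not cover $A_{\text{out}}$. If $K_i\subseteq C^\star$ entirely, there are zero vertices of $K_i\setminus C^\star$ to supply extra $\OPT$-disagreements, yet $A_{\text{out}}$ restricted to $K_i$ can still be $\Theta(\eps\,\card{K_i}^2)$, and an edge $(u,v)$ with $u\in K_i$, $v\in K_j\cap C^\star$ costs $\OPT$ nothing. The paper handles inter-clique $(+)$-edges by a \emph{separate} argument: for $u\in K_i$, $v\in K_j$ with $i\neq j$, the almost-clique properties force $\card{N^+(u)\sym N^+(v)}\geq (1-3\eps)\max\{\degp u,\degp v\}$, so these edges are themselves of the ``large symmetric difference'' type and can be charged with parameter $\theta=\Theta(1)$.

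The main gap is your global accounting for sparse vertices. Canonical-witness routing does not work: once each $\OPT$-disagreement $(u,w)$ is owned by a single sparse $v^\star$, there is no reason any particular sparse $v$ retains $\Omega(\eps^2\degp v)$ disagreements --- all of $v$'s candidates could be claimed by lower-index sparse vertices. Equivalently, the load you place on a disagreement $f=(u,w)$ from sparse $v$ is of order $1/(\eps^2\degp v)$, and the sparse $v$'s charging $f$ range over $N^+(u)$ (up to $\degp u$ of them, with $\degp v$ possibly $\ll\degp u$), so the total load on $f$ is not $O(\eps^{-2})$.

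The paper's fix, which also dissolves the $(u,w)$-vs-$(v,w)$ subtlety you flag, is to charge \emph{vertex--edge pairs} rather than edges, and to charge from individual $(+)$-edges rather than from vertices. For a $(+)$-edge $\xi=(\alpha,\beta)$ with $\card{N^+(\alpha)\sym N^+(\beta)}\geq \theta\max\{\degp\alpha,\degp\beta\}$ and each $w$ in the symmetric difference, exactly one of $(\alpha,w),(\beta,w)$ is an $\OPT$-disagreement $f(w)$; put the pair $(z(w),f(w))$ into the charge set, where $z(w)\in\{\alpha,\beta\}$ is the non-$w$ endpoint of $f(w)$, contributing $\leq 1/(\theta\,\degp{z(w)})$. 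Now any pair $(z,f)$ can only be hit by edges $\xi$ \emph{incident to $z$}, of which there are $\degp z$, so its total charge is $\degp z\cdot 1/(\theta\,\degp z)=O(\theta^{-1})$; since each $\OPT$-disagreement $f$ sits in only two pairs, summing gives an $O(\theta^{-1})$ ratio. Applied with $\theta=\Theta(1)$ to inter-clique edges and with $\theta=\eta_0\eps$ to the ``different-neighborhood'' edges of each sparse vertex (then scaled by $O(\eps^{-1})$ to cover the remaining sparse-incident edges, exactly as you suggest), this yields $O(\eps^{-2})$ with no per-vertex ownership bookkeeping.
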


The intuition behind the proof of~\Cref{thm:cc-alg} is simple: the $(+)$-neighborhood of sparse vertices is so different from that of their neighbors that no matter how we cluster them, we will need to pay a cost proportional to their degree; so we might as well cluster them individually. On the other hand, the almost-cliques are so tightly connected to each other and so loosely connected to outside by their $(+)$-edges that they simply form the best cluster possible themselves; so we cluster them that way also. 

We formalize this intuition in this section. Our analysis of~\Cref{thm:cc-alg} is inspired by the recent work of~\cite{CohenAddadLMNP21}. The main difference is in using the decomposition of~\Cref{thm:decomposition} instead of the rather ad-hoc and ``multi-step'' partitioning in~\cite{CohenAddadLMNP21} which is crucial for our sublinear algorithms (the decomposition allows us to also give a more modular proof by focusing on each part of the partition separately). 

Throughout this section, fix $\OPT$ to be a fixed optimal clustering of $G$ and recall that $\ALG$ denotes the clustering returned by~\Cref{thm:cc-alg}. Similar to~\cite{CohenAddadLMNP21}, we use a \textbf{charging scheme}: To any vertex $z \in V$ and any edge $f$ incident on $z$, i.e., $f \in E^+(z) \sqcup E^-(z)$, we assign a value $\charge{z}{f}$ as follows: 
\begin{tbox}
Charging scheme for the analysis of~\Cref{thm:cc-alg}. 

\smallskip

\begin{enumerate}[label=$(\roman*)$]
	\item Initially, $\charge{z}{f} = 0$ for all $z \in V$ and $f \in E(z)$; 
	\item For any edge $e \in E$ with $\cost{\ALG}{e} = 1$, we will \underline{find a collection of vertex-edge pairs}, called the \emph{charge-set} of $e$: 
	\[
		\chargeset{e} \subseteq \set{(z,f) \mid z \in V, \text{~~} f \in E(z), \text{~and~} \cost{\OPT}{f} = 1}. 
	\]
	For simplicity of notation, we define $\chargeset{e} = \emptyset$ if $\cost{\ALG}{e} = 0$. 
	\item We will then increase $\charge{z}{f}$ for all $(z,f) \in \chargeset{e}$ by $\card{\chargeset{e}}^{-1}$. 
\end{enumerate}
\end{tbox}

The main part in this charging scheme is to find proper charge-sets for all edges. The following lemma establishes our desired property of the charging scheme. 

\begin{lemma}\label{lem:charging}
	Suppose there is a choice of $\chargeset{e}$ for edges $e \in E$ in the charging scheme  such that for all $z \in V$ and $f \in E(z)$, we have $\charge{z}{f} \leq \alpha$ for some $\alpha \geq 1$.  Then, 
	\[
		\scost{\ALG} \leq 2\alpha \cdot \scost{\OPT}. 
	\]
\end{lemma}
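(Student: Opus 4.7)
The plan is a simple double-counting of the total charge
\[
T \;:=\; \sum_{z \in V}\sum_{f \in E(z)} \charge{z}{f}
\]
distributed by the scheme, evaluated once from the giving side and once from the receiving side.

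First I would evaluate $T$ from the side of the edges that induce the charges. By construction, for every edge $e$ with $\cost{\ALG}{e}=1$ the scheme adds exactly $\card{\chargeset{e}}^{-1}$ to each of the $\card{\chargeset{e}}$ pairs in $\chargeset{e}$, contributing total mass $1$; every $e$ with $\cost{\ALG}{e}=0$ has $\chargeset{e}=\emptyset$ and contributes $0$. Summing over all $e \in E$ gives $T = \sum_{e \in E} \cost{\ALG}{e} = \scost{\ALG}$.

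Next I would bound $T$ from the receiving side. By the definition of the charging scheme, a pair $(z,f)$ can accumulate positive charge only if it belongs to some $\chargeset{e}$, and the charging rules force $\cost{\OPT}{f}=1$ and $f \in E(z)$, meaning $z$ must be one of the two endpoints of $f$. Each such endpoint-pair receives charge at most $\alpha$ by hypothesis, and each OPT-mistake edge contributes exactly two such pairs to the outer sum, so
\[
T \;\leq\; \sum_{f:\,\cost{\OPT}{f}=1} 2\alpha \;=\; 2\alpha \cdot \scost{\OPT}.
\]
Combining the two evaluations of $T$ yields $\scost{\ALG} \leq 2\alpha \cdot \scost{\OPT}$, as desired.

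I do not foresee any obstacle in the proof itself: the lemma is pure bookkeeping, and the factor of $2$ reflects nothing more than the two endpoints of each OPT-mistake edge. The genuine difficulty, which this lemma abstracts away, will be the subsequent explicit construction of charge-sets $\chargeset{e}$ that uniformly satisfy $\charge{z}{f} = O(\eps^{-2})$. I expect that construction to be a case analysis on whether the endpoints of $e$ are sparse, lie together in one almost-clique, or lie in different almost-cliques of the sparse-dense decomposition of $G^+$ from~\Cref{thm:decomposition}, and it is there that the structural guarantees of the decomposition will be invoked.
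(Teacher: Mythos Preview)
Your proposal is correct and essentially identical to the paper's own proof: both compute the total charge $T$ by double counting, first as $\sum_{e}\cost{\ALG}{e}=\scost{\ALG}$ via the giving side, then bounding it by $2\alpha\cdot\scost{\OPT}$ on the receiving side using the fact that each OPT-mistake edge $f$ appears in exactly two pairs $(z,f)$. Your closing remark about the forthcoming case analysis on sparse vertices, intra-clique $(-)$-edges, and inter-clique $(+)$-edges also accurately anticipates the structure of the remainder of the section.
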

\begin{proof}
	We have, 
	\begin{align*}
		\scost{\ALG} &= \sum_{e \in E} \cost{\ALG}{e} \tag{by the definition of the total cost in~\Cref{eq:scost}} \\
		&= \sum_{e \in E} \sum_{\substack{(z,f) \in \\ \chargeset{e}}} \card{\chargeset{e}}^{-1} \tag{the inner sum is $1$ if $\cost{\ALG}{e}=1$ and $0$ otherwise} \\
		&= \sum_{z \in V} \sum_{\substack{f \in E(z) \\ \text{~and~} \\ \cost{\OPT}{f}=1}} \charge{z}{f}  \tag{by summing over charges of all vertex-edge pairs} \\
		&\leq \sum_{z \in V} \alpha \cdot \card{f \in E(z) \text{~and~} \cost{\OPT}{f}=1} \tag{by the guarantee of the lemma statement} \\
		&= \sum_{f \in E} 2\alpha \cdot \cost{\OPT}{f} \tag{as each edge will be added twice (one by each endpoint)} \\
		&= 2\alpha \cdot \scost{\OPT}. 
	\end{align*}
	This concludes the proof. 
\end{proof}

By~\Cref{lem:charging}, we only need to find charge-sets of the given edges so that $\charge{z}{f}$ is small for all vertex-edge pairs $(z,f)$. This is done for edges of sparse and dense vertices separately in the next subsections. 

\paragraph{A helper lemma.} Before getting to the main part of the proof, we will  prove a helper lemma that simplifies our task of finding charge-sets for $(+)$-edges in the later parts of the analysis. 
Roughly speaking, it states that if we have a collection of edges whose endpoints have sufficiently different neighborhood, then we can find a charge-set for all the given edges without increasing charge of any vertex-edge pair by much. 

\begin{lemma}\label{lem:helper-charge}
Let $\theta \in (0,1)$ be a parameter and $\EE$ be any collection of edges in $E^+$ in the input labeled graph $G$ such that for all $\xi = (\alpha,\beta) \in \EE$, 
\begin{align}
	{N^{+}(\alpha) \sym N^{+}(\beta)} \geq \theta \cdot \max\set{\degp{\alpha},\degp{\beta}}. 
\end{align}
Then, there is a choice of $\chargeset{\xi}$ for all $\xi \in \EE$ such that $\charge{z}{f} = O(\theta^{-1})$ for all vertex-edge pairs $(z,f)$ in $G$. 
\end{lemma}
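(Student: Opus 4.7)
The plan is to construct $\chargeset{\xi}$ for each $\xi=(\alpha,\beta)\in\EE$ by assigning to every witness $w\in W_\xi:=N^{+}(\alpha)\sym N^{+}(\beta)$ a single pair $(z_w,f_w)$, where $f_w$ is an OPT-paying edge incident to the triangle $\set{\alpha,\beta,w}$ and $z_w$ is the higher-$(+)$-degree endpoint of $f_w$. Spreading $\xi$'s cost evenly over these $|W_\xi|\geq\theta\cdot\max\set{\degp{\alpha},\degp{\beta}}$ assignments, every increment to $\charge{z}{f}$ induced by $\xi$ is at most $(\theta\cdot\max\set{\degp{\alpha},\degp{\beta}})^{-1}$.

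To choose $(z_w,f_w)$, fix $w\in W_\xi$ and assume WLOG $w\in N^{+}(\alpha)\setminus N^{+}(\beta)$ so that $(w,\alpha)\in E^{+}$ and $(w,\beta)\in E^{-}$. Exactly one of three cases applies:
\begin{itemize}
\item $\OPT(w)=\OPT(\beta)$: the $(-)$-edge $(w,\beta)$ sits inside an OPT cluster, so $\cost{\OPT}{(w,\beta)}=1$; set $f_w=(w,\beta)$.
\item $\OPT(w)\neq\OPT(\beta)$ and $\OPT(w)\neq\OPT(\alpha)$: the $(+)$-edge $(w,\alpha)$ crosses OPT clusters; set $f_w=(w,\alpha)$.
\item $\OPT(w)=\OPT(\alpha)\neq\OPT(\beta)$: the $(+)$-edge $\xi=(\alpha,\beta)$ crosses OPT clusters; set $f_w=\xi$.
\end{itemize}
The case $w\in N^{+}(\beta)\setminus N^{+}(\alpha)$ is handled symmetrically under $\alpha\leftrightarrow\beta$.

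Now fix $(z,f=(z,z'))$ and bound $\charge{z}{f}$. The rule ``$z_w$ is the higher-degree endpoint of $f_w$'' kills the charge unless $\degp{z}\geq\degp{z'}$. The sub-case $f=\xi$ arises only from the third bullet and only for the single $\xi=f\in\EE$, contributing total weight at most $|W_\xi|/|W_\xi|=1$. Otherwise $f_w\in\set{(w,\alpha),(w,\beta)}$, so $f$ and $\xi$ share exactly one endpoint $v^*\in\set{z,z'}$, with the other vertex $w\in\set{z,z'}\setminus\set{v^*}$. Since $\xi\in E^{+}$, the remaining endpoint $v^{**}$ of $\xi$ lies in $N^{+}(v^*)$, giving at most $\degp{v^*}$ candidates for $v^{**}$. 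Each such $\xi$ contributes at most $(\theta\cdot\max\set{\degp{\alpha},\degp{\beta}})^{-1}\leq(\theta\cdot\degp{v^*})^{-1}$, so this case-branch sums to at most $1/\theta$. Enumerating the (at most) four case-branches -- two choices of which endpoint of $f$ plays the role of $v^*$, times two choices of whether $w\in W_\xi\cap N^{+}(\alpha)$ or $w\in W_\xi\cap N^{+}(\beta)$ -- gives $\charge{z}{f}\leq 4/\theta+1=O(\theta^{-1})$, as required.

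The main obstacle is the degree asymmetry between $z$ and $z'$: the naive rule ``$z_w:=w$'' over-charges the lower-degree endpoint of $f$ because arbitrarily many high-degree $\xi$'s can funnel into the same pair. Charging the higher-degree endpoint instead forces $v^*$ (shared by $\xi$ and $f$) to realize $\max\set{\degp{\alpha},\degp{\beta}}$, so the $|W_\xi|^{-1}$ denominator cancels the $\degp{v^*}$ candidate count exactly.
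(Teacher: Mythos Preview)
Your argument is correct and follows essentially the same strategy as the paper: for each $\xi=(\alpha,\beta)\in\EE$, use the witnesses $w\in N^{+}(\alpha)\sym N^{+}(\beta)$ to locate an OPT-paying edge in the triangle $\{\alpha,\beta,w\}$, then bound $\charge{z}{f}$ by noting that any charging $\xi$ must share an endpoint $v^{*}\in\{\alpha,\beta\}$ with $f$, so there are at most $\degp{v^{*}}$ candidates each contributing at most $(\theta\cdot\degp{v^{*}})^{-1}$.

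Two differences are worth flagging. First, the paper splits off the case $\cost{\OPT}{\xi}=1$ at the outset (their ``Type-1'' simply sets $\chargeset{\xi}=\{(\alpha,\xi)\}$), so their witness step only runs when $\OPT(\alpha)=\OPT(\beta)$ and your third bullet never arises. In your version, when $\OPT(\alpha)\neq\OPT(\beta)$ many $w$'s can collapse onto the single pair $(z,\xi)$, so for your claimed increment bound $|W_{\xi}|^{-1}$ to hold literally you must read $\chargeset{\xi}$ as a multiset; this is harmless for \Cref{lem:charging} but is a wrinkle the paper's split avoids. Second, your higher-degree rule for $z_w$ is unnecessary and your final paragraph is a red herring: since $v^{*}\in\{\alpha,\beta\}$, the inequality $\max\{\degp{\alpha},\degp{\beta}\}\geq\degp{v^{*}}$ is automatic, and the count $\degp{v^{*}}\cdot(\theta\cdot\degp{v^{*}})^{-1}=\theta^{-1}$ goes through for \emph{any} choice of endpoint as $z_w$ (including the ``naive'' $z_w:=w$). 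The paper instead takes $z(w)$ to be the $\{\alpha,\beta\}$-endpoint of $f(w)$, which is no more essential than your choice.
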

\begin{proof}
	We define $\chargeset{\xi}$ for any $\xi \in \EE$ as follows: 
\begin{itemize}
	\item Type-$1$ charges: when $\cost{\OPT}{\xi} = 1$. In this case, we simply set $\chargeset{\xi} = \set{(\alpha,\xi)}$ itself.
	\item Type-$2$ charges: when $\cost{\OPT}{\xi} = 0$. This is the more challenging case. Note that in this case, we have that $\OPT(\alpha) = \OPT(\beta)$, and let us denote this cluster as $\OPT_{\alpha\beta}$. Consider any vertex 
	$w \in {N^{+}(\alpha) \sym N^{+}(\beta)}$: 
	\begin{itemize}
	\item \underline{Case A}: $w \in N^{+}(\alpha)$ and $w \in N^{-}(\beta)$. In this case, there is $\cost{\OPT}{(w, \beta)} = 1$ if $\OPT(w) = \OPT_{\alpha\beta}$, and $\cost{\OPT}{(w, \alpha)} = 1$ if $\OPT(w)\neq \OPT_{\alpha\beta}$.
	\item \underline{Case B}: $w \in N^{+}(\beta)$ and $w \in N^{-}(\alpha)$. In this case, there is $\cost{\OPT}{(w, \alpha)} = 1$ if $\OPT(w) = \OPT_{\alpha\beta}$, and $\cost{\OPT}{(w, \beta)} = 1$ if $\OPT(w)\neq \OPT_{\alpha\beta}$.
	\end{itemize}
	Therefore, in both cases, there is exactly one edge $f(w) \in \set{(w,\alpha),(w,\beta)}$ such that $\cost{\OPT}{f(w)} = 1$. Let $z(w) \in \set{\alpha,\beta}$ be the vertex other than $w$ incident on $f(w)$. We add all pairs $(z(w),f(w))$ to $\chargeset{\xi}$, i.e., 
	\[
		\chargeset{\xi} = \set{(z(w),f(w)) \mid w \in N^+(\alpha) \sym N^+(\beta)}. 
	\]
	Given the  bound on the size of $N^+(\alpha) \sym N^+(\beta)$, we have that $\card{\chargeset{\xi}} \geq  \theta \cdot \max\set{\degp{\alpha},\degp{\beta}}$. 
\end{itemize}

An illustration of the type-$2$ charges can be found in \Cref{fig:type-2-charging-illus}.

Let us now bound the distributed charges. We have three different choices for $(z,f)$ that can belong to $\chargeset{\xi}$ 
for some edge $\xi \in \EE$  as follows (a graphical exemplification can be found in \Cref{fig:bounded-charge-illus}):  

\begin{itemize}
	\item \underline{A pair $(\alpha,\xi)$ charged by a type-1 charge, where $\xi \in \EE$ and $\alpha$ is an endpoint of $\xi$:} 
	
	In this case $\charge{\alpha}{\xi} = 1$ because there is only a single edge $\xi$ that can  make such a charge. 
	
	\item \underline{A pair $(\alpha,f(w))$ charged by a type-2 charge, where $w \in {N^{+}(\alpha) \sym N^{+}(\beta)}$ and $z(w) = \alpha$:} 
	
	For any such charge, we increase $\charge{\alpha}{f(w)}$ by 
	\[
	\card{\chargeset{\xi}}^{-1} \leq (\theta \cdot \degp{\alpha})^{-1}.
	\]
	 At the same time, such a charge 
	can only be made by edges from  $\alpha$ to $\beta \in N^{+}(\alpha)$ (so that $(\alpha,\beta) \in E^+$), which are $\degp{\alpha}$ many. Thus, the total charge made in this case  leads to $\charge{\alpha}{f(w)} = O(\theta^{-1})$. 
	
	\item \underline{A pair $(\beta,f(w))$ charged by a type-2 charge, where $w \in {N^{+}(\alpha) \sym N^{+}(\beta)}$ and $z(w) = \beta$:} 
	
	For any such charge, we increase $\charge{\beta}{f(w)}$ by 
	\[
	\card{\chargeset{\xi}}^{-1} \leq (\theta \cdot \degp{\beta})^{-1}.
	\]
	 At the same time, such a charge can only be made by edges from  $\beta$ to $\alpha \in N^{+}(\beta)$ (so that $(\alpha,\beta) \in E^+$), which are $\degp{\beta}$ many. 
	 Thus, the total charge made in this case  leads to $\charge{\beta}{f(w)} = O(\theta^{-1})$. 
\end{itemize}
	
This concludes the proof of the lemma. 
\end{proof}


\begin{figure}[h!]
\centering
\tikzset{every picture/.style={line width=0.75pt}} 

\tikzset{every picture/.style={line width=0.75pt}} 

\begin{tikzpicture}[x=0.75pt,y=0.75pt,yscale=-0.9, xscale=0.9, every node/.style={scale=0.9}]

\draw   (94.28,157) .. controls (94.28,147.13) and (101.84,139.13) .. (111.16,139.13) .. controls (120.49,139.13) and (128.04,147.13) .. (128.04,157) .. controls (128.04,166.87) and (120.49,174.88) .. (111.16,174.88) .. controls (101.84,174.88) and (94.28,166.87) .. (94.28,157) -- cycle ;
\draw [line width=1.5]    (112.46,338.5) -- (111.16,174.88) ;
\draw   (371.01,118.23) .. controls (371.01,109.26) and (378.28,102) .. (387.24,102) -- (539.18,102) .. controls (548.14,102) and (555.4,109.26) .. (555.4,118.23) -- (555.4,166.9) .. controls (555.4,175.86) and (548.14,183.13) .. (539.18,183.13) -- (387.24,183.13) .. controls (378.28,183.13) and (371.01,175.86) .. (371.01,166.9) -- cycle ;
\draw   (95.58,356.38) .. controls (95.58,346.5) and (103.14,338.5) .. (112.46,338.5) .. controls (121.79,338.5) and (129.34,346.5) .. (129.34,356.38) .. controls (129.34,366.25) and (121.79,374.25) .. (112.46,374.25) .. controls (103.14,374.25) and (95.58,366.25) .. (95.58,356.38) -- cycle ;
\draw   (377.51,157) .. controls (377.51,147.13) and (385.07,139.13) .. (394.39,139.13) .. controls (403.71,139.13) and (411.27,147.13) .. (411.27,157) .. controls (411.27,166.87) and (403.71,174.88) .. (394.39,174.88) .. controls (385.07,174.88) and (377.51,166.87) .. (377.51,157) -- cycle ;
\draw [color={rgb, 255:red, 0; green, 0; blue, 255 }  ,draw opacity=1 ][line width=1.5]    (128.04,157) -- (377.51,157) ;
\draw [color={rgb, 255:red, 0; green, 0; blue, 255 }  ,draw opacity=1 ][line width=1.5]  [dash pattern={on 5.63pt off 4.5pt}]  (377.51,157) -- (112.46,338.5) ;

\draw (93.93,275.63) node [anchor=north west][inner sep=0.75pt]   [align=left] {$ $};
\draw (103.16,110.25) node [anchor=north west][inner sep=0.75pt]    {$\alpha $};
\draw (104.46,382.5) node [anchor=north west][inner sep=0.75pt]    {$\beta $};
\draw (428.09,109.57) node [anchor=north west][inner sep=0.75pt]   [align=left] {$\displaystyle {\textstyle N^{+}( \alpha ) -N^{+}( \beta )}$};
\draw (421.64,146.38) node [anchor=north west][inner sep=0.75pt]    {$w$};
\draw (84.09,230.38) node [anchor=north west][inner sep=0.75pt]    {$\xi $};
\draw (167,104) node [anchor=north west][inner sep=0.75pt]   [align=left] {cost$\displaystyle _{\mathcal{O}}$($\displaystyle ( \alpha ,w)$)=1\\if $\displaystyle \mathcal{O}$($\displaystyle w) \neq \mathcal{O}_{\alpha \beta }$ };
\draw (250.61,249.56) node [anchor=north west][inner sep=0.75pt]  [rotate=-359.73] [align=left] {cost$\displaystyle _{\mathcal{O}}$($\displaystyle ( \beta ,w)$)=1\\if $\displaystyle \mathcal{O}$($\displaystyle w) =\mathcal{O}_{\alpha \beta }$ };

\end{tikzpicture}
\caption{\label{fig:type-2-charging-illus} Illustration of the type-$2$ charge conditioning on $\alpha$ and $\beta$ are in the same cluster $\OPT_{\alpha\beta}$. For each vertex $w \in N^{+}(\alpha)-N^{+}(\beta)$, if $w$ is in $\OPT_{\alpha\beta}$, the negative edge $(\beta,w)$ induces a cost of $1$ in $\OPT$; otherwise, if $w$ is in a different cluster, the positive edge $(\alpha,w)$ induces a cost of $1$ in $\OPT$. Vertices in $N^{+}(\beta)-N^{+}(\alpha)$ works in the same way.}
\end{figure}
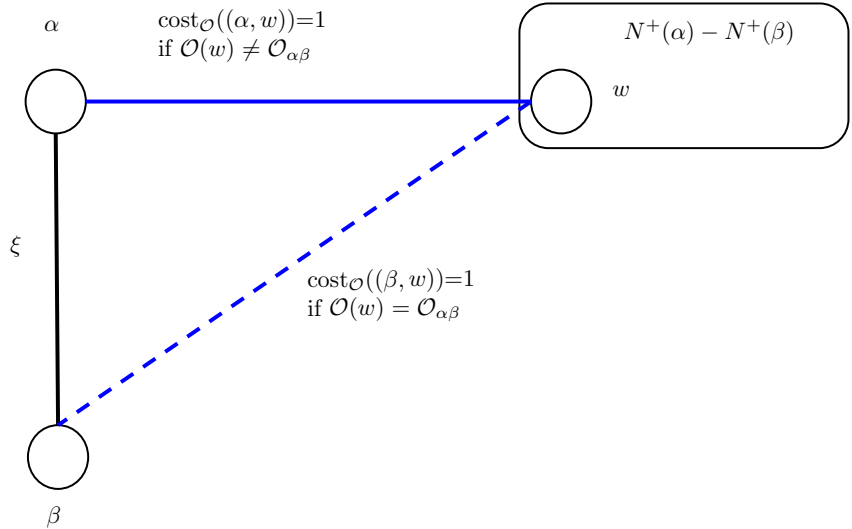

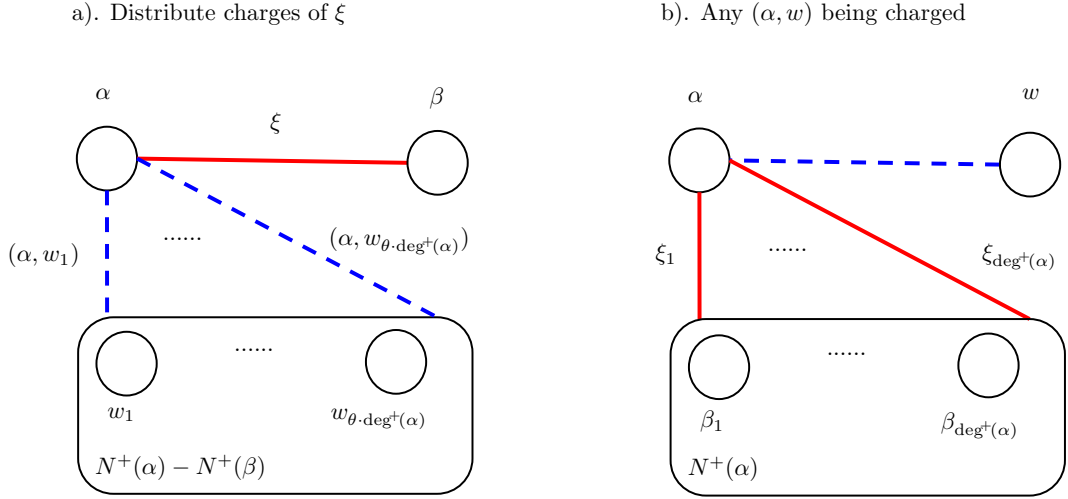
\begin{figure}[!h]
\centering
\tikzset{every picture/.style={line width=0.75pt}} 

\tikzset{every picture/.style={line width=0.75pt}} 

\begin{tikzpicture}[x=0.75pt,y=0.75pt,yscale=-0.9, xscale=0.9, every node/.style={scale=0.9}]

\draw   (54.28,157) .. controls (54.28,147.13) and (61.84,139.13) .. (71.16,139.13) .. controls (80.49,139.13) and (88.04,147.13) .. (88.04,157) .. controls (88.04,166.87) and (80.49,174.88) .. (71.16,174.88) .. controls (61.84,174.88) and (54.28,166.87) .. (54.28,157) -- cycle ;
\draw [color={rgb, 255:red, 255; green, 0; blue, 0 }  ,draw opacity=1 ][line width=1.5]    (239.58,159.38) -- (88.04,157) ;
\draw   (55.01,265.8) .. controls (55.01,254.86) and (63.88,246) .. (74.81,246) -- (256.2,246) .. controls (267.14,246) and (276,254.86) .. (276,265.8) -- (276,325.2) .. controls (276,336.14) and (267.14,345) .. (256.2,345) -- (74.81,345) .. controls (63.88,345) and (55.01,336.14) .. (55.01,325.2) -- cycle ;
\draw   (239.58,159.38) .. controls (239.58,149.5) and (247.14,141.5) .. (256.46,141.5) .. controls (265.79,141.5) and (273.34,149.5) .. (273.34,159.38) .. controls (273.34,169.25) and (265.79,177.25) .. (256.46,177.25) .. controls (247.14,177.25) and (239.58,169.25) .. (239.58,159.38) -- cycle ;
\draw [color={rgb, 255:red, 0; green, 0; blue, 255 }  ,draw opacity=1 ][line width=1.5]  [dash pattern={on 5.63pt off 4.5pt}]  (71.16,174.88) -- (71.24,246) ;
\draw [color={rgb, 255:red, 0; green, 0; blue, 255 }  ,draw opacity=1 ][line width=1.5]  [dash pattern={on 5.63pt off 4.5pt}]  (88.04,157) -- (256.2,246) ;
\draw   (65.28,272) .. controls (65.28,262.13) and (72.84,254.13) .. (82.16,254.13) .. controls (91.49,254.13) and (99.04,262.13) .. (99.04,272) .. controls (99.04,281.87) and (91.49,289.88) .. (82.16,289.88) .. controls (72.84,289.88) and (65.28,281.87) .. (65.28,272) -- cycle ;
\draw   (216.28,271) .. controls (216.28,261.13) and (223.84,253.13) .. (233.16,253.13) .. controls (242.49,253.13) and (250.04,261.13) .. (250.04,271) .. controls (250.04,280.87) and (242.49,288.88) .. (233.16,288.88) .. controls (223.84,288.88) and (216.28,280.87) .. (216.28,271) -- cycle ;
\draw   (386.28,158) .. controls (386.28,148.13) and (393.84,140.13) .. (403.16,140.13) .. controls (412.49,140.13) and (420.04,148.13) .. (420.04,158) .. controls (420.04,167.87) and (412.49,175.88) .. (403.16,175.88) .. controls (393.84,175.88) and (386.28,167.87) .. (386.28,158) -- cycle ;
\draw [color={rgb, 255:red, 0; green, 0; blue, 255 }  ,draw opacity=1 ][line width=1.5]  [dash pattern={on 5.63pt off 4.5pt}]  (571.58,160.38) -- (420.04,158) ;
\draw   (571.58,160.38) .. controls (571.58,150.5) and (579.14,142.5) .. (588.46,142.5) .. controls (597.79,142.5) and (605.34,150.5) .. (605.34,160.38) .. controls (605.34,170.25) and (597.79,178.25) .. (588.46,178.25) .. controls (579.14,178.25) and (571.58,170.25) .. (571.58,160.38) -- cycle ;
\draw [color={rgb, 255:red, 255; green, 0; blue, 0 }  ,draw opacity=1 ][line width=1.5]    (403.16,175.88) -- (403.24,247) ;
\draw [color={rgb, 255:red, 255; green, 0; blue, 0 }  ,draw opacity=1 ][line width=1.5]    (420.04,158) -- (588.2,247) ;
\draw   (387.01,266.8) .. controls (387.01,255.86) and (395.88,247) .. (406.81,247) -- (588.2,247) .. controls (599.14,247) and (608,255.86) .. (608,266.8) -- (608,326.2) .. controls (608,337.14) and (599.14,346) .. (588.2,346) -- (406.81,346) .. controls (395.88,346) and (387.01,337.14) .. (387.01,326.2) -- cycle ;
\draw   (397.28,274) .. controls (397.28,264.13) and (404.84,256.13) .. (414.16,256.13) .. controls (423.49,256.13) and (431.04,264.13) .. (431.04,274) .. controls (431.04,283.87) and (423.49,291.88) .. (414.16,291.88) .. controls (404.84,291.88) and (397.28,283.87) .. (397.28,274) -- cycle ;
\draw   (548.28,273) .. controls (548.28,263.13) and (555.84,255.13) .. (565.16,255.13) .. controls (574.49,255.13) and (582.04,263.13) .. (582.04,273) .. controls (582.04,282.87) and (574.49,290.88) .. (565.16,290.88) .. controls (555.84,290.88) and (548.28,282.87) .. (548.28,273) -- cycle ;

\draw (93.93,275.63) node [anchor=north west][inner sep=0.75pt]   [align=left] {$ $};
\draw (63.16,116.25) node [anchor=north west][inner sep=0.75pt]    {$\alpha $};
\draw (250.46,115.5) node [anchor=north west][inner sep=0.75pt]    {$\beta $};
\draw (63.09,321.57) node [anchor=north west][inner sep=0.75pt]   [align=left] {$\displaystyle {\textstyle N^{+}( \alpha ) -N^{+}( \beta )}$};
\draw (161.09,129.38) node [anchor=north west][inner sep=0.75pt]    {$\xi $};
\draw (70.04,295) node [anchor=north west][inner sep=0.75pt]    {$w_{1}$};
\draw (141,262) node [anchor=north west][inner sep=0.75pt]   [align=left] {......};
\draw (195.04,296) node [anchor=north west][inner sep=0.75pt]    {$w_{\theta \cdot \degp{\alpha}}$};
\draw (395.16,118.25) node [anchor=north west][inner sep=0.75pt]    {$\alpha $};
\draw (582.46,116.5) node [anchor=north west][inner sep=0.75pt]    {$w$};
\draw (425.93,277.63) node [anchor=north west][inner sep=0.75pt]   [align=left] {$ $};
\draw (395.09,322.57) node [anchor=north west][inner sep=0.75pt]   [align=left] {$\displaystyle {\textstyle N^{+}( \alpha )}$};
\draw (402.04,297) node [anchor=north west][inner sep=0.75pt]    {$\beta _{1}$};
\draw (473,264) node [anchor=north west][inner sep=0.75pt]   [align=left] {......};
\draw (537.04,298) node [anchor=north west][inner sep=0.75pt]    {$\beta _{\degp{\alpha}}$};
\draw (377.09,203.38) node [anchor=north west][inner sep=0.75pt]    {$\xi _{1}$};
\draw (560.09,203.38) node [anchor=north west][inner sep=0.75pt]    {$\xi _{\degp{\alpha}}$};
\draw (50,68) node [anchor=north west][inner sep=0.75pt]   [align=left] {a). Distribute charges of $\displaystyle \xi $};
\draw (380,68) node [anchor=north west][inner sep=0.75pt]   [align=left] {b). Any $\displaystyle ( \alpha ,w)$ being charged};
\draw (14,203) node [anchor=north west][inner sep=0.75pt]   [align=left] {$\displaystyle ( \alpha ,w_{1})$};
\draw (194,194) node [anchor=north west][inner sep=0.75pt]   [align=left] {$\displaystyle ( \alpha ,w_{\theta \cdot \degp{\alpha}})$};
\draw (101,200) node [anchor=north west][inner sep=0.75pt]   [align=left] {......};
\draw (440,206) node [anchor=north west][inner sep=0.75pt]   [align=left] {......};

\end{tikzpicture}

\caption{\label{fig:bounded-charge-illus} Illustration of the bound of charges on each vertex-edge pair. Focus on pairs that include $\alpha$, we assume w.log. that $\degp{\alpha}\geq \degp{\beta}$, and use $N^{+}(\alpha) - N^{+}(\beta)$ as a special case of $N^{+}(\alpha) \sym N^{+}(\beta)$. There are two types of edges: edges that distribute charges (red solid lines) and edges that are charged (blue dashed lines). For each edge $\xi=(\alpha,\beta)$ that distributes charges,  there are at least $\theta\cdot \degp{v}$ edges to be charged. Therefore, it suffices to distributed $\frac{1}{\theta\cdot \degp{v}}$ charges to each of the edges being charged. For each edge $(\alpha,w)$ to be charged, only edges indent to $\alpha$ can distribute a charge, which is at most $\degp{v}$ many. Therefore, every vertex-edge pair that include $\alpha$ is charged at most $\theta^{-1}$.}
\end{figure}

\subsubsection*{Part I: Sparse Vertices}

We now analyze the cost of  edges incident on the sparse vertices $\Vsparse$. We define $\spcharge{z}{f}$ as the contribution from the sparse vertices to $\charge{z}{f}$. We show that,
\begin{lemma}\label{lem:charging-sparse}
	There exist sets $\chargeset{e}$ for every $\eps$-sparse vertex $v$ and $e \in E(v)$ such that 
	\[
	\text{for all vertex-edge pairs $(z,f)$:} \quad \spcharge{z}{f} = O(\eps^{-2}). 
	\] 
\end{lemma}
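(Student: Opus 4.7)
The strategy is to reduce the problem to Lemma~\ref{lem:helper-charge} wherever possible and to fall back on a witness-based scheme otherwise. First, any $(-)$-edge incident on $v\in\Vsparse$ has $\cost{\ALG}{e}=0$, since $v$ sits alone in its $\ALG$-cluster, so only $(+)$-edges need charge sets. For each sparse $v$, fix the guaranteed witness set $W_v\subseteq N^+(v)$ of size $|W_v|\geq\eta_0\eps\cdot\degp{v}$, each $u\in W_v$ satisfying $|N^+(v)\sym N^+(u)|\geq\eta_0\eps\cdot\max\set{\degp{v},\degp{u}}$. I split the $(+)$-edges at sparse vertices into \emph{witness edges} ($(v,u)$ with $u\in W_v$) and \emph{non-witness edges}.

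The witness edges I hand directly to Lemma~\ref{lem:helper-charge}, applied in $G^+$ to $\EE_W := \set{(v,u)\,:\,v\in\Vsparse,\,u\in W_v}$ with $\theta=\eta_0\eps$; this supplies valid charge sets with $O(\eps^{-1})$ contribution to every vertex-edge pair. For a non-witness $(+)$-edge $e=(v,u)$ with $\cost{\OPT}{e}=1$ the trivial $\chargeset{e}=\set{(v,e)}$ works; otherwise $\OPT(v)=\OPT(u)=:C$ and I assign $e$ to a single witness $w_e\in W_v$. Since $|W_v|=\Omega(\eps\cdot\degp{v})$ and there are at most $\degp{v}$ non-witness edges at $v$, a balanced (pigeonhole) assignment lets me ensure each witness is the target of only $O(1/\eps)$ non-witness edges. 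If $\OPT(w_e)\neq C$, set $\chargeset{e}=\set{(v,(v,w_e))}$; if $\OPT(w_e)=C$, replay the type-$2$ construction of Lemma~\ref{lem:helper-charge} on the witness edge $(v,w_e)$ to fill $\chargeset{e}$ with $\Omega(\eps\cdot\max\set{\degp{v},\degp{w_e}})$ OPT-cost-$1$ vertex-edge pairs drawn from $N^+(v)\sym N^+(w_e)$.

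To bound $\spcharge{z}{f}$: the witness-edge and trivial type-$1$ contributions together give $O(\eps^{-1})$. For non-witness contributions, a fixed $(z,f)$ is charged only by edges $e$ whose assignment produces $(z,f)$, forcing $z\in\set{v,w_e}$. When $\OPT(w_e)\neq C$, the nominated pair takes the form $(v,(v,w_e))$, so the source $v=z$ is unique and at most $O(1/\eps)$ edges share that single pair, contributing $O(1/\eps)$. When $\OPT(w_e)=C$, each contributing edge supplies at most $O(1/(\eps\cdot\degp{z}))$ charge (using $\max\set{\degp{v},\degp{w_e}}\geq\degp{z}$ whenever $z\in\set{v,w_e}$) and is one of $O(1/\eps)$ edges assigned to its witness; summing over the at most $\degp{z}$ candidate sources $v$ yields $\degp{z}\cdot O(1/\eps)\cdot O(1/(\eps\cdot\degp{z})) = O(\eps^{-2})$. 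Combining everything gives $\spcharge{z}{f}=O(\eps^{-2})$.

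The main obstacle is the last accounting step: a single pair $(z,f)$ can be pointed to by many sparse sources (each one having $z$ as an assigned witness), and it is the balanced assignment combined with the degree bound $|N^+(v)\sym N^+(z)|\geq\eps\cdot\max\set{\degp{v},\degp{z}}$ that prevents the sum over sources from blowing up. Carrying this accounting out cleanly — essentially a controlled re-run of the charge-distribution argument inside Lemma~\ref{lem:helper-charge}, case-split on whether $\OPT(w_e)$ equals $C$ — is where the subtlety lies and is the step that will require the most care.
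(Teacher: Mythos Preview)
Your proof is correct and rests on the same core idea as the paper: exploit the witness set $W_v$ guaranteed by the decomposition, assign each $(+)$-edge at $v$ to a witness in a balanced way (so each witness receives $O(1/\eps)$ edges), and invoke Lemma~\ref{lem:helper-charge} with $\theta = \eta_0\eps$ on the witness edges.

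The paper's execution is cleaner, however. Rather than splitting non-witness edges by whether $\OPT(w_e)=C$ and re-running the type-1/type-2 construction of Lemma~\ref{lem:helper-charge} inside this proof, the paper simply \emph{copies} the charge set that Lemma~\ref{lem:helper-charge} has already produced for the assigned witness edge: every $(+)$-edge $e$ at $v$ inherits $\chargeset{e}:=\chargeset{\xi}$ from its assigned witness edge $\xi$. Since at most $1/(\eta_0\eps)$ edges share any single witness, this immediately yields $\spcharge{z}{f} \leq \frac{1}{\eta_0\eps}\cdot\spScharge{z}{f}$, where $\spScharge{\cdot}{\cdot}$ counts only the witness-edge contributions; the helper lemma bounds the latter by $O(1/\eps)$, and the $O(1/\eps^2)$ follows in one line. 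This keeps Lemma~\ref{lem:helper-charge} as a black box and sidesteps the case analysis and ``controlled re-run'' accounting you flag as the subtle step.
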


\begin{proof}
Since we place each $v \in \Vsparse$ in a separate cluster,  $\cost{\ALG}{e}=0$ for every $e\in E^{-}(v)$. Hence, we can focus only on $e \in E^{+}(v)$ where $\cost{\ALG}{e}=1$. 

Fix any vertex $v\in \Vsparse$. By the guarantee of~\Cref{thm:decomposition} for $G^+$, there is a subset $S(v)$ of $N^+(v)$ with size at least $|S(v)| \geq \eta_0 \cdot \eps\cdot \degp{v}$ such that for every vertex $u\in S(v)$, 
\begin{align*}
\card{N^{+}(v) \sym N^{+}(u)} \geq \eta_0 \cdot \eps \cdot \max\set{\degp{u},\degp{v}}. 
\end{align*}


Define the set of $(+)$-edges between $v$ and the vertices in $S(v)$ as $E^+(v,S)$. By the size lower bound of $S(v)$, we can bound the charge of $\spcharge{z}{f}$ by the charge of $E^{+}(v,S)$ with a multiplicative $O({1}/{\eps})$ factor. Formally, let $\spScharge{z}{f}$ denote the part of $\spcharge{z}{f}$ contributed by $E^+(v,S)$. One can arbitrarily set the charge-set of every 
$1/(\eps \cdot \eta_0)$ $(+)$-edges incident on $v$ to be the same as a single edge in $E^+(v,S)$. Thus, 
\begin{equation}
\label{equ:spcharge-spScharge}
\text{for all vertex-edge pairs $(z,f)$:} \quad \spcharge{z}{f} \leq \frac{1}{\eps \cdot \eta_0}\cdot \spScharge{z}{f}.
\end{equation}

Therefore, it suffices to upper bound the charge contributed by $E^+(v,S)$. We construct the charge set and upper bound the charge by \Cref{lem:helper-charge} as follows.

\begin{itemize}
\item We set $\EE$ as the set of all edges between vertices $v \in \Vsparse$ and $S(v) \subseteq N^+(v)$. 
\item For any $v \in \Vsparse$ and $u \in S(v)$, 
\[
\card{N^+(u) \sym N^+(v)} \geq \eta_0 \cdot \eps \cdot \max\set{\degp{u},\degp{v}},
\]
so we can set $\theta = \eta_0 \cdot \eps$. 

\end{itemize}
Therefore, by \Cref{lem:helper-charge},  there exists a charge-set for all edges in $\EE$ such that for any vertex-edge pair $(z,f)$, $\spScharge{z}{f} = O(\eps^{-1})$ (as $\eta_0$ is an absolute constant). 
Combining this with~\Cref{equ:spcharge-spScharge} gives us the desired $O(\eps^{-2})$ bound. 
\end{proof}

\subsubsection*{Part II: Almost-Cliques} 


We now turn to the analysis of the charge contributed by almost-cliques. Here, there are two types of edges to consider: $(+)$-edges between different almost-cliques (intra cluster edges) and $(-)$-edges inside each single almost-clique (inter cluster edges); 
all remaining edges are already handled in the previous part. The following two lemmas handle these cases. 


We first bound the charge of $(+)$-edges between almost-cliques, referred to as $\occharge{z}{f}$, in the following lemma.  

\begin{lemma}\label{lem:charging-out-cliques}
	There exists  $\chargeset{e}$ for every $(+)$-edge between two different almost-cliques such that
	\[
	 \textnormal{for all vertex-edge pairs $(z,f)$:} \quad \occharge{z}{f} = O(1).
	\] 
\end{lemma}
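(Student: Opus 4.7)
The plan is to apply \Cref{lem:helper-charge} to the collection $\EE$ consisting of all $(+)$-edges that run between two distinct almost-cliques. Since any such edge is the only type of positive edge whose endpoints are placed in different $\ALG$-clusters among the dense vertices, this will cover every $\ALG$-costly edge not handled by \Cref{lem:charging-sparse}. Concretely, once we verify a uniform constant lower bound $\theta = \Omega(1)$ on the normalized symmetric difference of the $(+)$-neighborhoods of the endpoints of every $e \in \EE$, the lemma immediately yields $\occharge{z}{f} = O(\theta^{-1}) = O(1)$ for every vertex-edge pair.

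The heart of the argument is thus a quick structural estimate. Fix $e = (u,v) \in \EE$ with $u \in K_i$ and $v \in K_j$ for distinct $i \neq j$, and assume (by symmetry) that $\Delta(K_i) \geq \Delta(K_j)$. Properties \ref{dec:non-neighbors} and \ref{dec:size} of the decomposition in \Cref{thm:decomposition}, applied to $G^+$, give
\[
\card{N^+(u) \cap K_i} \;\geq\; |K_i| - 1 - \eps\,\Delta(K_i) \;\geq\; (1 - 2\eps)\,\Delta(K_i) - 1,
\]
while property \ref{dec:neighbors} applied to $v \in K_j$ gives $\card{N^+(v) \cap K_i} \leq \card{N^+(v) \setminus K_j} \leq \eps\,\Delta(K_j) \leq \eps\,\Delta(K_i)$. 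Therefore,
\[
\card{N^+(u) \sym N^+(v)} \;\geq\; \card{N^+(u) \cap K_i} - \card{N^+(v) \cap K_i} \;\geq\; (1 - 3\eps)\,\Delta(K_i) - 1.
\]
Because both $\degp{u} \leq \Delta(K_i)$ and $\degp{v} \leq \Delta(K_j) \leq \Delta(K_i)$, this bound is at least, say, $\tfrac{1}{2}\max\{\degp{u}, \degp{v}\}$ for small enough $\eps$.

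Having established this inequality for every edge of $\EE$, we invoke \Cref{lem:helper-charge} with $\theta = 1/2$, and conclude that there exist charge-sets for all inter-clique $(+)$-edges so that every vertex-edge pair $(z,f)$ receives $\occharge{z}{f} = O(1)$, as required.

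The only real obstacle is the bookkeeping around the $\max\{\degp{u}, \degp{v}\}$ appearing in the hypothesis of \Cref{lem:helper-charge}: one must decide which endpoint's cluster to exploit for the symmetric-difference bound. The WLOG swap between $K_i$ and $K_j$ by their $\Delta$-values, combined with the fact that both $\degp{u}$ and $\degp{v}$ are upper-bounded by the larger $\Delta$, resolves this cleanly. A trivial technicality remains for almost-cliques with $\Delta(K_i)$ below a small constant, but these can be absorbed into sparse-vertex singletons (or handled by a direct $O(1)$ charging argument), and does not affect the asymptotics.
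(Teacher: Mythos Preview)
Your proposal is correct and follows essentially the same approach as the paper: both set $\EE$ to be all $(+)$-edges between distinct almost-cliques, use the almost-clique properties of \Cref{thm:decomposition} (applied to $G^+$) to show $\card{N^+(u)\sym N^+(v)} \geq (1-3\eps)\cdot\max\{\degp{u},\degp{v}\}$ via the same WLOG on $\Delta(K_i)\geq\Delta(K_j)$, and then invoke \Cref{lem:helper-charge} with $\theta = 1/2$. Your extra $-1$ and the remark about tiny $\Delta(K_i)$ are harmless bookkeeping the paper simply absorbs into the asymptotics.
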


\noindent
And next, we bound the charge of $(-)$-edges inside each almost-clique, referred to as $\iccharge{z}{f}$.  

\begin{lemma}\label{lem:charging-in-cliques}
	There exists  $\chargeset{e}$ for every $(-)$-edge inside any single almost-clique such that 
	\[
	 \textnormal{for all vertex-edge pairs $(z,f)$:} \quad \iccharge{z}{f} \leq 1. 
	\] 
\end{lemma}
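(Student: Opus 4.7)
The plan is to split for each intra-clique $(-)$-edge $e=(u,v)$ in some almost-clique $K_i$ (so that $\cost{\ALG}{e}=1$ by the definition of $\ALG$) into two cases according to whether $\OPT$ keeps or splits $u,v$.

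In Case A, if $\OPT(u)=\OPT(v)$, then $e$ itself has $\cost{\OPT}{e}=1$, and I would simply set $\chargeset{e}=\{(u,e)\}$. Only $e$ can place the pair $(u,e)$ in a charge-set, so this case contributes at most $1$ to $\iccharge{z}{f}$ for any $(z,f)$.

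Case B is the interesting one, where $\OPT(u)\neq\OPT(v)$ and $\cost{\OPT}{e}=0$. I would exploit the almost-clique structure of $K_i$ in $G^+$: by properties $\ref{dec:non-neighbors}$--$\ref{dec:size}$ applied to $G^+$, each of $u,v\in K_i$ has at most $\eps\cdot\Delta(K_i)$ $(-)$-neighbors inside $K_i$, so the common $(+)$-neighborhood
\[
W \;:=\; N^+(u)\cap N^+(v)\cap K_i
\]
has size at least $|K_i|-2\eps\cdot\Delta(K_i)-2\geq(1-O(\eps))\,|K_i|$. For each $w\in W$, both $(w,u)$ and $(w,v)$ lie in $E^+$, and since $\OPT(u)\neq \OPT(v)$, the vertex $w$ disagrees with at least one of them; pick such a $(+)$-edge and call it $f(w)$, with non-$w$ endpoint $z(w)\in\{u,v\}$. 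I would set $\chargeset{e}=\{(z(w),f(w)):w\in W\}$. Since $|\chargeset{e}|=\Omega(|K_i|)$, each included pair receives charge only $O(1/|K_i|)$ from $e$. (This is essentially the same construction as in \Cref{lem:helper-charge}, with common $(+)$-neighbors playing the role of the symmetric difference and the analysis restricted to the almost-clique.)

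The key accounting step is to bound the Case B contribution to a fixed pair $(z,f)$. By construction, any Case B edge $e$ that writes a pair involving $z$ into its charge-set must have $z$ as an endpoint, i.e.\ $e=(z,v')$ with $(z,v')\in E^-$; inside $K_i$, vertex $z$ has at most $\eps\cdot\Delta(K_i)$ such $(-)$-neighbors. Multiplying this bound by the per-edge contribution $O(1/|K_i|)$ and using $\Delta(K_i)=\Theta(|K_i|)$ from property $\ref{dec:size}$, the total Case B charge on $(z,f)$ is $O(\eps)$. Combined with the $\leq 1$ coming from Case A, I obtain $\iccharge{z}{f}\leq 1+O(\eps)$, which is at most $1$ up to a harmless rescaling of $\eps$ (or can be absorbed into the global $O(1)$ approximation factor). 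The main obstacle is precisely this two-sided use of the almost-clique guarantees: I need the strong lower bound on $|W|$ (to spread each edge's charge thinly) \emph{and} the matching upper bound on the $(-)$-degree inside $K_i$ (to limit how many Case B edges can pile up at the same target). Both are exactly what the decomposition of \Cref{thm:decomposition} applied to $G^+$ guarantees, which is why the analysis closes.
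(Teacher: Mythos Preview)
Your proof is correct but takes a genuinely different route from the paper's.

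The paper argues globally: it fixes $K_i$ and splits on the structure of $\OPT$ restricted to $K_i$. Either no cluster of $\OPT$ covers more than $(1-3\eps)\Delta(K_i)$ of $K_i$, in which case \emph{every} vertex of $K_i$ has at least $\eps\Delta(K_i)$ costly $(+)$-edges, giving a one-to-one matching from the at most $\tfrac{\eps}{2}|K_i|\Delta(K_i)$ interior $(-)$-edges to an equal-size pool of costly $(+)$-edges; or some dominant cluster $O_{j^*}$ exists, and then $(-)$-edges inside $O_{j^*}$ are charged to themselves while $(-)$-edges touching a vertex outside $O_{j^*}$ are charged to that vertex's many $(+)$-edges into $O_{j^*}$.

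Your argument is local and edge-by-edge, and is arguably cleaner: for a $(-)$-edge $e=(u,v)$ in $K_i$ you look only at whether $\OPT$ splits $u$ from $v$, and in Case~B you spread the charge over the common $(+)$-neighbors $W=N^+(u)\cap N^+(v)\cap K_i$. This is essentially a ``common-neighbor'' variant of \Cref{lem:helper-charge}, and the almost-clique guarantees give both $|W|=\Omega(|K_i|)$ and the $\eps\Delta(K_i)$ cap on how many Case~B edges are incident to any fixed $z$. The paper's approach buys you the exact constant $1$ directly via a one-to-one map; your approach avoids the global case split entirely.

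One small sharpening: you do not need to settle for $1+O(\eps)$. Your Case~A only ever charges pairs $(z,f)$ with $f\in E^-$ (namely $f=e$ itself), while your Case~B only charges pairs $(z,f)$ with $f\in E^+$ (since every $f(w)$ is a $(+)$-edge). These target disjoint pairs, so the total is $\max\{1,\,O(\eps)\}=1$ rather than the sum, matching the lemma as stated.
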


We now prove each of these two lemmas. 

\begin{proof}[Proof of~\Cref{lem:charging-out-cliques}]
The proof of this lemma is similar to the one for sparse vertices in~\Cref{lem:charging-sparse}. We show that neighborhood of endpoints of $(+)$-intra cluster edges are very different, simply because they belong to different 
almost-cliques, and then apply~\Cref{lem:helper-charge}.

Let $(u,v)$ be a $(+)$-intra cluster edge where $u$ belongs to an almost-clique $K_i$ and $v$ belongs to another almost-clique $K_j$. Without loss of generality, let us assume $\Delta(K_i) \geq \Delta(K_j)$ (where $\Delta(\cdot)$ is the maximum
$(+)$-degree of the almost-clique defined in~\Cref{thm:decomposition}). By properties of almost-cliques in~\Cref{thm:decomposition}, 
\begin{align*}
	\card{N^+(u) - N^+(v)} &\geq \card{(N^+(u) \cap K_i) - (N^+(v) - K_j)} \tag{$K_i$ and $K_j$ are disjoint} \\
	&\geq \card{K_i} - \eps \cdot \Delta(K_i) - \eps \cdot \Delta(K_j) \tag{$u$ has $\leq \eps \cdot \Delta(K_i)$ non-edges in $K_i$ and $v$ has $\leq \eps \cdot \Delta(K_j)$ edges outside $K_j$}\\
	&\geq \Delta(K_i) - \eps \cdot \Delta(K_i) - \eps \cdot \Delta(K_i) - \eps \cdot \Delta(K_j) \tag{size of $K_i$ is $\geq (1-\eps) \cdot \Delta(K_i)$} \\
	&\geq (1-3\eps) \cdot \max\set{\degp{u},\degp{v}}  \tag{as $\Delta(K_i) \geq \Delta(K_j)$ and they both are maximum degree of respective almost-cliques}. 
\end{align*}
As a result, for every $(+)$-edge $(u,v)$ between two different almost-cliques, we have, 
\begin{align}
	\card{N^+(u) \sym N^+(v)} \geq (1-3\eps) \cdot \max\set{\degp{u},\degp{v}} \geq \frac{1}{2} \cdot \max\set{\degp{u},\degp{v}} \label{eq:ac-charge-1}. 
\end{align}
We can now apply \Cref{lem:helper-charge} as follows. We set $\EE$ as the set of all $(+)$-edges $(u,v)$ between vertices $u \in K_{i}$ and $v \in K_{j}$ for any two different almost-cliques $K_i$ and $K_j$. 
Then, by~\Cref{eq:ac-charge-1}, we have $\theta = 1/2$ in~\Cref{lem:helper-charge}.  Thus, $\occharge{z}{f} = O(1)$ for all  valid vertex-edge pairs $(z,f)$ as desired. \Qed{\Cref{lem:charging-out-cliques}}

\end{proof}


\begin{proof}[Proof of~\Cref{lem:charging-in-cliques}]
Proof of this lemma follows a different strategy compared to the previous ones as we now have to handle  $(-)$-edges (inside each almost-clique) as opposed to $(+)$-edges. The main idea of proof is to show that
the best clustering one can do for almost-cliques individually (in absence of all other edges) is to just cluster them one by one exactly as in $\ALG$.

Fix an almost-clique $K_i$. We need to find a charge-set for each $(-)$-edge inside $K_i$ as these are the edges with cost $1$ in $\ALG$. 
%
%
Recall that $\Delta(K_i)$ is the maximum $(+)$-degree of any vertex in $K_i$. We have, 
\begin{align}
	\card{\set{e=(u,v) \in E^- \mid u,v \in K_i}} = \frac12 \cdot \sum_{v \in K_i} \card{N^-(v) \cap K_i} \leq \frac\eps2 \cdot \card{K_i} \cdot \Delta(K_i) \label{eq:cost-Ki},
\end{align}
as number of $(-)$-edges of each vertex in $K_i$ inside $K_i$ is at most $\eps \cdot \Delta(K_i)$ by~\Cref{dec:non-neighbors} of~\Cref{thm:decomposition}. 

Consider the optimal clustering $\OPT$ of the graph $G$.  Let $O_1,\ldots,O_\ell$ for $\ell \geq 1$ denote the set of clusters in $\OPT$ that have non-zero intersection with $K_i$. 
 We have two cases: 
\begin{itemize}
	\item \textbf{Case (1): when $\card{O_j \cap K_i} \leq (1-3\eps) \cdot \Delta(K_i)$ for all $j \in [\ell]$.} Any vertex $v \in K_i$ has at least $\card{K_i} - \eps \cdot \Delta(K_i)$ $(+)$-edges
	inside $K_i$ by~\Cref{thm:decomposition}. Consequently, for any $O_j$, any vertex $v \in O_j \cap K_i$ has 
	\begin{align*}
		\card{N^+(v) - O_j} &\geq \card{N^+(v) \cap K_i} - \card{O_j \cap K_i} \\
		&\geq \card{K_i} - \eps \cdot \Delta(K_i) - \card{O_j \cap K_i}\tag{by the discussion above} \\
		&\geq (1-2\eps) \cdot \Delta(K_i) - (1-3\eps) \cdot \Delta(K_i) \tag{by~\Cref{thm:decomposition} and our assumption on size of $O_j$} \\
		&=\eps \cdot \Delta(K_i). 
	\end{align*}
	All the edges in $N^+(v) - O_j$ are now inter cluster $(+)$-edges and thus have cost $1$ in $\OPT$. Hence, 
	\[
		\card{\set{e=(u,v) \in E^+ \mid u,v \in K_i,~\cost{\OPT}{e}=1}} \geq \frac12 \sum_{v \in K_i} \card{N^+(v) - O_j} \geq \frac\eps2 \cdot \card{K_i} \cdot \Delta(K_i). 
	\]
	Consequently, we can pick the charge-set of edges in LHS of~\Cref{eq:cost-Ki} by an arbitrary one-to-one mapping to (a subset of) the set in the LHS of above equation. This ensures that $\iccharge{z}{f} \leq 1$ in this case as desired. 

	\item \textbf{Case (2): when $\card{O_{j^*} \cap K_i} > (1-3\eps) \cdot \Delta(K_i)$ for some $j^* \in [\ell]$.} For any  $v$ in some $O_j\cap K_i$ for $j \neq j^*$, 
	\begin{align*}
		\card{N^+(v) \cap O_{j^*}} &\geq \card{N^+(v) \cap K_i} - \card{K_i} + \card{K_i \cap O_{j^*}} \\
		&\geq \card{K_i} - \eps \cdot \Delta(K_i) - (1+\eps) \cdot \Delta(K_i) + (1-3\eps) \cdot \Delta(K_i) \tag{by~\Cref{dec:non-neighbors} and \Cref{dec:size} of~\Cref{thm:decomposition} and our assumption on size of $O_j$} \\
		&\geq (1-6\eps) \cdot \Delta(K_i) \tag{by~\Cref{dec:size} of~\Cref{thm:decomposition}}. 
	\end{align*}
	All the edges in $N^+(v) \cap O_{j^*}$ are now inter cluster $(+)$-edges and thus have cost $1$ in $\OPT$. Hence, for any vertex $v \in O_j$, 
	\[
		\card{\set{e=(u,v) \mid u \in N^+(v) \cap O_{j^+},~\cost{\OPT}{e}=1}} \geq (1-6\eps) \cdot \Delta(K_i) \geq \eps \cdot \Delta(K_i),
	\]
	for $\eps < 1/7$. On the other hand, the only edges of $v$ with $\cost{\ALG}{e}=1$ in LHS of~\Cref{eq:cost-Ki} are $(-)$-edges of $v$ in $K_i$ which are at most $\eps \cdot \Delta(K_i)$. Thus, we can arbitrarily 
	pick the charge-set of edges of $v$ in LHS of~\Cref{eq:cost-Ki} from the LHS above. Finally, for $(-)$-edges with both endpoints in $O_{j^*}$, we can simply pick their charge-set to be themselves as $\OPT$ also has cost 
	$1$ for them. This ensures that $\iccharge{z}{f} \leq 1$ in this case as well. 
\end{itemize} 
This concludes the proof. \end{proof}

\medskip

In conclusion, the proof of~\Cref{thm:cc-alg} now follows by~\Cref{lem:charging-sparse} (for handling all edges with at least one sparse endpoint) 
and~\Cref{lem:charging-out-cliques,lem:charging-in-cliques} (for handling all edges between dense vertices).


\newcommand{\degpt}[2]{\textnormal{deg}^{\!+}_{#1}\!(#2)}
\newcommand{\Vhigh}{V_\textnormal{high}}
\newcommand{\Vlow}{V_\textnormal{low}}

\newcommand{\degt}[1]{\textnormal{deg}_{t}(#1)}

\section{Sublinear Algorithms for Correlation Clustering}
\label{sec:alg}

With the sparse-dense decomposition result from \Cref{sec:decomposition} and the correlation clustering scheme in \Cref{sec:cc-approx} that built upon it, we can now describe our sublinear algorithms. We start with our sublinear-time algorithm. 
\begin{theorem}[Formalization of \Cref{res:main-time-alg}]
\label{thm:sub-time-alg}
There exists a randomized algorithm that given a labeled graph $G=(V,E)$, specified via adjacency list of its $(+)$-subgraph $G^+$, 
with high probability finds an $O(1)$-approximation to the correlation clustering problem on $G$ in $O(n\log{n})$ query and $O(n\log^2{n})$ time. 
\end{theorem}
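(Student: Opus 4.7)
The plan is to combine the algorithmic sparse-dense decomposition of~\Cref{thm:decomposition} (applied to $G^+$) with the clustering scheme of~\Cref{thm:cc-alg}. Fix $\eps$ to a small absolute constant (e.g.\ $\eps = \eps_0$ as in~\Cref{thm:decomposition}); then the final approximation ratio $O(\eps^{-2})$ guaranteed by~\Cref{thm:cc-alg} collapses to $O(1)$, and $t = c \cdot \eps^{-2} \cdot \log n = O(\log n)$.

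First, using only adjacency-list access to $G^+$, I would gather the three pieces of information that the recovery algorithm of~\Cref{thm:decomposition} requires. \textbf{(i)~Degrees:} one degree query per vertex, contributing $n$ queries and $O(n)$ time. \textbf{(ii)~Edge samples $\NS{v}$:} for each $v \in V$, draw $t = O(\log n)$ uniformly random indices in $[1,\degp{v}]$ and issue a neighbor query for each, for a total of $O(n\log n)$ queries. \textbf{(iii)~Vertex samples $\sample$:} independently include each $v \in V$ in $\sample$ with probability $p_v = \min\set{c\log n/\degp{v},1}$; for every $v \in \sample$, fetch all of $N^+(v)$ via $\degp{v}$ neighbor queries. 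Since each query under the adjacency-list model costs $O(1)$ time, every query above is executed in constant time.

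Having assembled this input, I would run the recovery algorithm of~\Cref{thm:decomposition} on it to obtain, with high probability in $O(\eps^{-2} n\log^2 n) = O(n\log^2 n)$ time, a sparse-dense decomposition of $G^+$ into the sparse set $\Vsparse$ and the almost-cliques $K_1,\ldots,K_k$. Finally, I would output the clustering prescribed by~\Cref{thm:cc-alg}: each $v \in \Vsparse$ as its own singleton, and each $K_i$ as a separate cluster. This emission step takes $O(n)$ time given the output of the decomposition. By~\Cref{thm:cc-alg}, the resulting clustering is an $O(\eps^{-2})=O(1)$-approximation of correlation clustering on $G$ with high probability, and summing the three contributions yields $O(n\log n)$ queries and $O(n\log^2 n)$ time in total, as claimed.

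The only step that is not immediately $O(n\log n)$ by inspection is part (iii) above, since a priori storing $N^+(v)$ for $v \in \sample$ could require up to $\Theta(n^2)$ edges in the worst case. This is precisely where~\Cref{lem:sample-helper-lem} is needed: because $p_v$ is inversely proportional to $\degp{v}$, the Bernstein-based argument there bounds the total number of such retrieved edges by $O(n\log n)$ with high probability, which is the main (and essentially the only) subtle point in the complexity analysis.
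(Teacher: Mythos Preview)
Your proposal is correct and follows essentially the same approach as the paper: you gather the three inputs required by~\Cref{thm:decomposition} via degree and neighbor queries on $G^+$, run the recovery algorithm, and output the clustering of~\Cref{thm:cc-alg}, invoking~\Cref{lem:sample-helper-lem} for the only subtle bound on $\sum_{v\in\sample}\degp{v}$. The paper's proof is organized the same way (with the same breakdown into correctness, query complexity, and runtime) and singles out exactly the same subtlety.
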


We remind the reader  about our discussion earlier in~\Cref{sec:sublinear-models} on the necessity of access to the adjacency list of $G^+$ as opposed to $G$ in~\Cref{thm:sub-time-alg} (see also~\Cref{app:time-lower}).

The second sublinear algorithm we present is a sublinear-space streaming algorithm.

\begin{theorem}[Formalization of \Cref{res:main-space-alg}]
\label{thm:sub-space-alg}
There exists a randomized single-pass semi-streaming algorithm that given a labeled graph $G=(V,E)$, specified via a stream of edges of $G$ together with their labels, 
with high probability finds an $O(1)$-approximation to the correlation clustering problem on $G$ in $O(n\log{n})$ space. Moreover, the algorithm has $O(\log{n})$ processing time per each element of the stream and $O(n\log^2{n})$ post-processing time. 
\end{theorem}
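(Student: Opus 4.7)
Plan: The approach is to simulate the sampling algorithm of~\Cref{thm:decomposition} (applied to $G^+$) in a single pass over the stream, then invoke that theorem's recovery algorithm and the clustering scheme of~\Cref{thm:cc-alg} as post-processing.

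For each vertex $v\in V$, during the stream I maintain a counter storing $\degp{v}$, a size-$t$ reservoir $\NS{v}$ with $t=\Theta(\log n)$ (since $\eps$ is a constant), an independent uniform rank $r_v\in [0,1]$, a flag $\text{alive}_v$ initially set to true, and a list $N_v$ initially empty. On arrival of a $(+)$-edge $(u,v)$, I bump both degree counters, apply one step of reservoir sampling to $\NS{u}$ and $\NS{v}$, and for each endpoint $w\in\{u,v\}$ that is still alive I append the other endpoint to $N_w$ and then check whether $r_w \le c\log n/\degp{w}$; if this fails, I set $\text{alive}_w$ to false and discard $N_w$. Edges in $G^-$ are ignored because~\Cref{thm:decomposition} is applied only to $G^+$ and~\Cref{thm:cc-alg} needs nothing beyond the decomposition of $G^+$. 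Per-element processing is $O(\log n)$, dominated by the reservoir update and the arithmetic on $O(\log n)$-bit numbers.

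At the end of the stream, $\sample := \{v : \text{alive}_v\}$ equals $\{v : r_v\le c\log n/\degp{v}\}$ and $N_v = N^+(v)$ for every $v\in\sample$, which is exactly the input required by~\Cref{thm:decomposition}. I then run that algorithm in $O(n\log^2 n)$ time to obtain the sparse-dense decomposition of $G^+$, and output the clustering prescribed by~\Cref{thm:cc-alg}, which is an $O(1)$-approximation. The final storage is dominated by $\bigcup_{v\in\sample} N_v$ together with the $\NS{v}$'s, and these form a superset of the set $\Esample$ from~\Cref{lem:sample-helper-lem}, hence of size $O(n\log n)$ with high probability.

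The main obstacle is bounding the \emph{intermediate} storage of $\{N_v\}$ during the stream, not only at its end. I plan to establish this by reusing the Bernstein-based argument of~\Cref{lem:sample-helper-lem} at every intermediate time: for any fixed time $t$, let $d_t(v)$ denote the partial $(+)$-degree of $v$ observed by time $t$; the intermediate storage is $\sum_v d_t(v)\cdot\mathbb{I}[r_v\le c\log n/d_t(v)]$, whose expectation is at most $cn\log n$ and which, by essentially the same Bernstein computation as in~\Cref{lem:sample-helper-lem}, concentrates around its mean with failure probability $n^{-\Omega(c)}$. A union bound over the at most $\binom{n}{2}$ ``after-edge'' snapshots of the stream together with a constant-factor boost of $c$ then yields $O(n\log n)$ space throughout the execution with high probability, completing the proof.
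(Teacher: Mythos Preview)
Your proposal is correct and follows essentially the same approach as the paper: maintain degree counters, do reservoir sampling for the edge samples $\NS{v}$, maintain the vertex sample $\sample$ together with full neighborhoods using a degree-proportional dropout rule, and then feed everything to \Cref{thm:decomposition} and \Cref{thm:cc-alg}; the intermediate-space bound is obtained by applying the Bernstein calculation of \Cref{lem:sample-helper-lem} at every prefix of the stream and union-bounding over the at most $\binom{n}{2}$ prefixes, exactly as the paper does in \Cref{lem:vertex-sample}. The only cosmetic difference is that the paper implements the vertex dropout via a fresh $1/\deg_t(v)$ coin at each degree increment (telescoping to the right marginal), whereas you fix a single uniform rank $r_v$ once and keep $v$ while $r_v\le c\log n/\deg_t(v)$; these are equivalent couplings of the same marginal distribution, and your version arguably makes the ``alive at time $t$ iff $r_v\le c\log n/d_t(v)$'' invariant more transparent.
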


As we discussed in~\Cref{sec:sublinear-models} (and is formally shown in~\Cref{app:stream-upper}), this algorithm can be extended to various other streaming scenarios, such as when only edges of $G^+$ or $G^-$ are streamed, or dynamic (insertion-deletion) and sliding window streams, by increasing the space  with at most a $\poly\!\log{(n)}$ factor.

\subsection{A Sublinear-Time Algorithm: Proof of \Cref{thm:sub-time-alg}}
\label{subsec:sub-time-alg}

The algorithm is a direct implementation of the recovery algorithm of~\Cref{thm:decomposition} for finding the decomposition plus the scheme of~\Cref{thm:cc-alg}. We also need to show that we can provide the recovery algorithm 
of~\Cref{thm:decomposition} with proper information it needs. This is done as follows. 

\begin{Algorithm}\label{alg:cc-sub-time}
{A sublinear-time algorithm for correlation clustering.} 

\begin{itemize}
\item \textbf{Input:} A labeled graph $G=(V,E)$ specified via adjacency list access to $G^+$.
\end{itemize}

\begin{enumerate}[label=$(\roman*)$]
\item Let $\eps > 0$ be a sufficiently small \underline{constant} as prescribed by~\Cref{thm:decomposition}. 

\item \label{line:degree-query} For each vertex $v\in V$, use degree queries to get positive degree $\degp{v}$ of $v$.

\item \label{line:N-query-sample} For each vertex $v$, use neighbor queries to sample $t = (c \cdot \log{n})/\eps^2$ neighbors of $v$ from $N^+(v)$ with repetition to get $\NS{v}$ (for the absolute constant $c > 0$ in~\Cref{thm:decomposition}). 

\item \label{line:S-query-sample} Sample each vertex with probability $p_v := \min\set{\frac{c\log(n)}{\degp{v}},1}$ and call this set $\sample$. Use neighbor queries to get $N^+(v)$ for $v \in \sample$.

\item \label{line:decompose-query} Run the algorithm of~\Cref{thm:decomposition} for sparse-dense decomposition with parameter $\eps$ and  the inputs $\{\NS{v}\}_{v \in V}$ and $\set{N^+(v)}_{v \in \sample}$ to its recovery algorithm. 

\item \label{line:clustering-query} Output clustering $\ALG$ based on the resulting $\Vsparse \sqcup K_1 \sqcup \ldots \sqcup K_k$ as prescribed in \Cref{thm:cc-alg}. 

\end{enumerate}
\end{Algorithm}

We now prove the correctness and the bounds on query and time complexity of~\Cref{alg:cc-sub-time}.

\paragraph{Correctness.} The information provided to the recovery algorithm of~\Cref{thm:decomposition} by~\Cref{alg:cc-sub-time} is exactly as prescribed in the theorem (for the underlying graph $G^+$).  
As such, with high probability, the resulting decomposition is a valid sparse-dense decomposition specified by~\Cref{thm:decomposition}. Conditioned on this event, 
by~\Cref{thm:cc-alg}, the returned answer is an $O(1)$-approximation to the correlation clustering on $G$. 

\paragraph{Query complexity.} The total number of queries made by~\Cref{alg:cc-sub-time} is equal to $n$ degree queries plus the number of edge-samples and neighbors of all vertex-samples. 
By~\Cref{lem:sample-helper-lem}, this is $O(n\log{n})$ edges. 

\paragraph{Runtime analysis.} The runtime of Lines \ref{line:degree-query},\ref{line:N-query-sample}, and \ref{line:S-query-sample} is equal to the query complexity of the algorithm and is thus $O(n\log{n})$ with high probability. 
The runtime of~\Cref{line:decompose-query} is equal to the recovery algorithm of Line~\Cref{thm:decomposition} which is $O(n\log^2{n})$ with high probability. The runtime of Line~\ref{line:clustering-query} is equal to $O(n)$ 
as it only involves a direct partitioning of $n$ vertices as specified in the statement of~\Cref{thm:cc-alg}. This is $O(n\log^2{n})$ time in total. 

\smallskip
This concludes the proof of~\Cref{thm:sub-time-alg}.

\subsection{A  Semi-Streaming Algorithm: Proof of \Cref{thm:sub-space-alg}}
\label{subsec:sub-space-alg}

We now give a single-pass semi-streaming algorithm for correlation clustering in insertion-only streams (over the edges of the input labeled graph $G$).~\Cref{app:stream-upper} contains further extensions of this algorithm to other
streaming models. 

Our semi-streaming algorithm is also a direct implementation of our recovery algorithm in~\Cref{thm:decomposition} and the scheme of~\Cref{thm:cc-alg} (by focusing on edges of $G^+$ in the stream and simply skipping any edge of $G^-$). 
However, compared to the previous section, 
for this algorithm we have to be a bit careful in how we exactly provide the required information to the recovery algorithm of~\Cref{thm:decomposition}. This is primarily because, in a single pass over the stream, 
we will not know degrees of vertices beforehand so that we can sample the set $\sample$ store all their neighbors appropriately. Nevertheless, we show that simple ideas in reservoir sampling~\cite{Vitter85} can be used to address this problem. 
Thus we first start by designing a subroutine for obtaining $\sample$ and $N(v)$ for $v \in \sample$ and then show use to obtain our final semi-streaming algorithm. 

\paragraph{Sampling vertices and storing their neighbors.} We present the following lemma and algorithm for sampling vertices inversely proportional to their degree and storing all neighbors of sampled vertices (We note that we shall apply
the following lemma to the underlying graph $G^+$). The idea behind this lemma seems standard to us and we present it here for completeness. 

\begin{lemma}\label{lem:vertex-sample}
	Let $\beta_0 > 0$ be a sufficiently large constant. 
	There is a semi-streaming algorithm that given any arbitrary graph $G=(V,E)$ (not necessarily a labeled graph) specified via a stream of its edges and a parameter $\beta > \beta_0$, 
	at every point of time $t$ during the stream: 
	\begin{enumerate}[label=$\roman*).$]
	\item Maintains a collection $S_t$ of vertices together with $N_t(v)$ for all $v \in S_t$ so that each vertex is sampled independently and with
	probability $\min\set{(\beta \cdot \log{n})/\degt{v},1}$ in $S_t$ (here, $N_t(v)$ and $\degt{v}$ refer to the set of neighbors of $v$ and degree of $v$ among the edges up to time $t$ in the stream);
	\item With high probability, uses space of $O(\beta \cdot n\log{n})$ throughout the stream and $O(1)$ time per update. 
	\end{enumerate} 
	(We note that the independence guarantee of the algorithm is across the vertices and \emph{not} time steps.) 
\end{lemma}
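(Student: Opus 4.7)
The plan is to implement the sampling through a \emph{threshold trick} that decouples the random decision for each vertex from the unknown final degrees. For each vertex $v\in V$, sample a single uniform random variable $X_v\in[0,1]$ independently (lazily, the first time an edge touches $v$) and set the threshold
\[
	T_v := \frac{\beta \log n}{X_v}.
\]
Declare $v$ to be in $S_t$ if and only if $\degt{v}\le T_v$. This is exactly equivalent to the stated sampling rule, since
\[
	\Pr\Paren{X_v \le \min\set{\tfrac{\beta\log n}{\degt{v}},1}} = \min\set{\tfrac{\beta\log n}{\degt{v}},1},
\]
and independence across vertices follows from independence of the $X_v$'s. Because degrees are monotone non-decreasing along the stream, each vertex flips from active to inactive at most once, which is what makes the scheme implementable in one pass.

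The algorithm maintains, for each vertex $v$ that has been touched so far: the random value $X_v$ (and hence $T_v$), the current degree $\degt{v}$, a flag for whether $v$ is currently active, and if active, the full list $N_t(v)$ of neighbors seen so far. When an edge $(u,v)$ arrives we $(i)$ lazily sample $X_u$ and $X_v$ if unseen, $(ii)$ increment $\degt{u}$ and $\degt{v}$, $(iii)$ if $u$ is active, append $v$ to $N_t(u)$ and then mark $u$ inactive and discard $N_t(u)$ if $\degt{u}>T_u$, and analogously for $v$. Each of these is clearly $O(1)$ time, giving the claimed per-update cost.

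The main obstacle is the high-probability space bound. For each $v\in V$, let $Y_v:=\min\set{\degt{v},T_v}$ denote the number of neighbors we would store for $v$ at time $t$. A direct integration over $X_v\sim\text{Uniform}(0,1)$ (splitting at $X_v=(\beta\log n)/\degt{v}$) gives $\expect{Y_v}=O(\beta\log n)$ and, carrying out the same calculation for the second moment, $\expect{Y_v^2}=O(\beta\log{n}\cdot \degt{v})$, while deterministically $Y_v\le n$. Summing, $\sum_v \expect{Y_v^2}=O(\beta\log n\cdot \sum_v \degt{v})=O(\beta n^2\log n)$. A Chernoff bound is too weak here because individual $Y_v$'s can be as large as $n$, so I will instead apply Bernstein's inequality (\Cref{prop:bernstein}) to the zero-mean variables $Y_v-\expect{Y_v}$, exactly as in the proof of \Cref{lem:sample-helper-lem}. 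Plugging in $M=n$, $m\le n$, and the above variance bound with deviation $t=\Theta(\beta n\log n)$ yields that
\[
	\sum_{v\in V} Y_v = O(\beta n\log n)
\]
with probability at least $1-n^{-\Omega(\beta)}$, which is high probability for $\beta$ larger than a sufficiently large absolute constant $\beta_0$. Adding the $O(n\log n)$ bits needed to store the $X_v$'s (and the activity flags and degree counters) gives the stated $O(\beta n\log n)$ total space, completing the proof sketch.
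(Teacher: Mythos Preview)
Your proposal is correct and takes a slightly different—though entirely standard—route to the same result. The paper implements the sampling by sequential coin flips: when the $d$-th edge incident to $v$ arrives (for $d>\beta\log n$), it drops $v$ with probability $1/d$, so that the survival probability telescopes to $(\beta\log n)/\degt{v}$. You instead use a priority/threshold scheme: draw $X_v$ once and keep $v$ while $\degt{v}\le\beta\log n/X_v$. Both mechanisms yield identical marginals and independence across vertices; your version is arguably cleaner to reason about since all randomness is fixed up front. For the space bound both proofs appeal to Bernstein's inequality exactly as in \Cref{lem:sample-helper-lem}, so there is no real difference there.

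One glitch worth fixing: your $Y_v:=\min\{\degt{v},T_v\}$ is \emph{not} the storage at time $t$—once $v$ is dropped you store $0$, not $T_v$—and with that definition the first moment is actually $\Theta(\beta\log^2 n)$ rather than $O(\beta\log n)$, because $\int_{a}^{1}(\beta\log n)/x\,dx$ contributes an extra logarithm. The moment bounds you state are the right ones for the \emph{actual} storage $Y_v=\degt{v}\cdot\mathbb{1}[\degt{v}\le T_v]$, so just use that. Finally, to make the bound hold ``throughout the stream'' you need a union bound over the at most $\binom{n}{2}$ time steps (the paper makes this explicit); this is absorbed by taking $\beta_0$ large enough, as you already allow.
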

\begin{proof}

The algorithm is as follows. 

\begin{tbox}
Sampling algorithm of~\Cref{lem:vertex-sample}. 

\begin{enumerate}[label=$(\roman*)$]
\item Let $S_1 = V$, $N_1(v) = \emptyset$, $\textnormal{deg}_1(v) = 0$ for $v \in V$.
\item For each arriving edge $e_t = (u_t,v_t)$: 
\begin{enumerate}
	\item Update $N_t(w)$ and $\degt{w}$ for $w \in S_{t-1}$ by adding $u_t$ and $v_t$ to the neighborhood of respective vertices and increasing their degree and keeping other neighbor-sets intact. 
	\item Update $S_t$ from $S_{t-1}$ by keeping all vertices other than $u_t$ and $v_t$ in $S_t$ and {removing} $z \in \set{u_t,v_t}$ from $S_t$ with probability $1/\degt{z}$ if $\degt{z} > (\beta \cdot \log{n})$; if $z$ is removed 
	from $S_t$ we also discard $N_t(z)$ from the memory. 
\end{enumerate}
\end{enumerate}
\end{tbox}

Fix a vertex $v \in V$ and let $t_0(v)$ denote the first time step $t$ such that $\degt{v} > (\beta \cdot \log{n})$. For any time $t < t_0(v)$, we have $v$ in $S_t$ as it cannot be removed by the algorithm. For a time 
$t_1 \geq t_0(v)$, we have, 
\begin{align*}
	\Pr\paren{v \in S_{t_1}} &= \prod_{t=t_0}^{t_1} \Pr\paren{v \in S_{t} \mid v \in S_{t-1}} \\
	&= \prod_{t=t_0}^{t_1} (1-\frac{1}{\degt{v}}) \tag{as $v$ is removed from $S_t$ w.p. $1/\degt{v}$} \\
	&= {\frac{\textnormal{deg}_{t_0(v)}-1}{\degt{v}} } \tag{by a simple cancelation of intermediate terms} \\
	&= \frac{(\beta \cdot \log{n})}{\degt{v}}.
\end{align*}
Thus, each vertex $v$ belongs to $S_t$ with probability $\min\set{(\beta\,\log{n})/\degt{v},1}$. Moreover, the choice of inclusion or exclusion of different vertices in $S_t$ is independent, proving the first part of the lemma. 

For the second part, given the correctness of the first, at each time step $t$, the information stored by the algorithm consists a random subset $S_t$ of vertices where each vertex is included with probability $\min\set{(\beta\,\log{n})/\degt{v},1}$ plus
all edges incident on vertices. As such, we can apply the same argument of~\Cref{lem:sample-helper-lem} to the graph at time $t$ to get that the total number of stored edges is $O(\beta \cdot n\log{n})$ with high probability (the argument is virtually identical and we do not repeat it here). A union bound on at most ${{n}\choose{2}}$ steps concludes the proof (the bound of $O(1)$ on the update time is immediate). 
\end{proof}

\paragraph{The semi-streaming algorithm.} We can now present our semi-streaming algorithm for~\Cref{thm:sub-space-alg}. 

\begin{Algorithm}\label{alg:cc-sub-space}
{A single-pass semi-streaming algorithm for correlation clustering.} 

\begin{itemize}
\item \textbf{Input:} A labeled graph $G=(V,E)$ specified via an arbitrarily ordered stream of edges $E$. 
\end{itemize}

\begin{enumerate}[label=$(\roman*)$]
\item Let $\eps > 0$ be a sufficiently small \underline{constant} as prescribed by~\Cref{thm:decomposition}. 

\item \label{line:stream-degree} For each vertex $v\in V$, use a counter over edges of $E^+(v)$ to maintain $\degp{v}$. 

\item \label{line:stream-neighbor} For each vertex $v$, use reservoir sampling to sample $t = (c \cdot \log{n})/\eps^2$ neighbors of $v$ from $N^+(v)$ with repetition to get $\NS{v}$ (for the absolute constant $c > 0$ in~\Cref{thm:decomposition}). 

\item \label{line:stream-sample} Run the algorithm of~\Cref{lem:vertex-sample} on the graph $G^+$  with parameter $\beta = c$. Let $\sample$ be the final set of vertices 
maintained by the algorithm and note we have $N^+(v)$ for $v \in \sample$. 

\item \label{line:decompose-query} Run the algorithm of~\Cref{thm:decomposition} for sparse-dense decomposition with parameter $\eps$ and the inputs $\{\NS{v}\}_{v \in V}$ and $\set{N^+(v)}_{v \in \sample}$ to its recovery algorithm. 

\item \label{line:clustering-query} Output clustering $\ALG$ based on the resulting $\Vsparse \sqcup K_1 \sqcup \ldots \sqcup K_k$ as prescribed in \Cref{thm:cc-alg}. 

\end{enumerate}
\end{Algorithm}

\paragraph{Correctness.} The information provided to the recovery algorithm of~\Cref{thm:decomposition} by~\Cref{alg:cc-sub-time} is exactly as prescribed in the theorem (for the underlying graph $G^+$).  
As such, with high probability, the resulting decomposition is a valid sparse-dense decomposition specified by~\Cref{thm:decomposition}. Conditioned on this event, 
by~\Cref{thm:cc-alg}, the returned answer is an $O(1)$-approximation to the correlation clustering on $G$. 

\paragraph{Space complexity.} Line~\ref{line:stream-degree} requires storing $O(n)$ numbers. Line~\ref{line:stream-neighbor} requires storing $O(\log{n})$ neighbors of each vertex for $O(n\log{n})$ space in total. 
Line~\ref{line:stream-sample} uses $O(n\log{n})$ space with high probability by~\Cref{lem:vertex-sample}. The algorithms of~\Cref{thm:decomposition,thm:cc-alg} require space proportional to their input which is $O(n\log{n})$ in total. 
Thus, overall space of the algorithm is $O(n\log{n})$. 

\paragraph{Update time and post-processing time.} The update time  is $O(1)$ for Lines~\ref{line:stream-degree} and~\ref{line:stream-sample} and $O(\log{n})$ for Line~\ref{line:stream-neighbor}. 
 Thus, the update time is $O(\log{n})$. The post-processing time is $O(n\log^2{n})$ time by~\Cref{thm:decomposition}. 

\smallskip
This concludes the proof of~\Cref{thm:sub-space-alg}.

\bibliographystyle{alpha}
\bibliography{new}

\appendix

\clearpage


\newcommand{\wrap}[2]{\subsubsection*{Proof of~\Cref{#1}.} \emph{#2}}

\section{Missing Proofs of~\Cref{sec:decomposition}}\label{app:decomposition}

\wrap{p:light}
{Any light vertex $v$ has at least $\delta \cdot \deg{v}$ neighbors $u$ such that
	\[
		\card{N(u) - N(v)} \geq \frac\eps{(1+\eps)} \cdot \deg{u} = \frac\eps{(1+\eps)} \cdot \max\set{\deg{u},\deg{v}}. 
	\]
}
\begin{proof}
	We argue that set $N(v) - \low{v}$ has the required property. Consider $u \in N({v})-\low{v}$. We have,
	\[
	\deg{u} - \deg{v} \geq  \frac\eps{(1+\eps)} \cdot \deg{u} = \frac\eps{(1+\eps)} \cdot \max\set{\deg{u},\deg{v}},
	\]
	as $\deg{u} > \deg{v}$. As such, $N(u) - N(v)$ satisfies the equation in the property, simply because size of $N(u)$ is sufficiently larger than that of $N(v)$. Given that $\card{N(v) - \low{v}} > \delta \cdot \deg{v}$ by the definition of $v$ being a light vertex, 
	we are done. 
\end{proof}

\wrap{p:isolated}{
Any low-sparse vertex $v$ has at least $\delta \cdot \deg{v}$ neighbors $u$ such that
	\[
		\card{N(v) - N(u)} \geq \eps \cdot \deg{v} \geq \frac{\eps}{(1+\eps)} \cdot \max\set{\deg{u},\deg{v}}. 
	\]
}
\begin{proof}
	We argue that the set $\isolated{v} \cap \low{v}$ has the required property. Firstly, size of this set is at least $\delta \cdot \deg{v}$ by~\Cref{def:isolated}. Moreover, 
	for any $u \in \isolated{v} \cap \low{v}$, 
	\[
		\card{N(v) - N(u)} \geq \card{\low{v} - N(u)} \geq \eps \cdot \deg{v} \geq \frac{\eps}{(1+\eps)} \cdot \deg{u},
	\]
	where the last inequality is because $u \in \low{v}$ and thus $\deg{u} \leq (1+\eps) \cdot \deg{v}$ by~\Cref{def:light}.
\end{proof}

\wrap{p:dense}{
	For every dense vertex $v \in \Dense{G}{\eps}{\delta}$:
	\begin{enumerate}[label=$(\roman*)$]
		\item the number of non-edges inside $\low{v}$ is at most $\frac{\eps+\delta}{2} \cdot \deg{v}^2$;
		\item the number of edges going out of $\low{v}$ is at most $2\,(\eps+\delta) \cdot \deg{v}^2$;
	\end{enumerate} 
}
\begin{proof}[Proof of Part $(i)$.]
	We have, 
	\begin{align*}
		\text{\# of non-edges in $\low{v}$} &= \frac12 \cdot \sum_{u \in \low{v}} \hspace{-10pt} \text{\# of non-edges of $u$ in $\low{v}$} \\
		&= \frac12 \cdot \sum_{u \in \low{v} \cap \isolated{v}} \hspace{-35pt} \text{\# of non-edges of $u$ in $\low{v}$} \\
		&\hspace{1.5cm} + \frac12 \cdot \sum_{u \in \low{v} - \isolated{v}} \hspace{-35pt} \text{\# of non-edges of $u$ in $\low{v}$}.
	\end{align*}
	We bound the first sum by $\delta \cdot \deg{v}^2$ and the second with $\eps \cdot \deg{v}^2$ which concludes the proof.
	
	For the first term, 
	\[
		\sum_{u \in \low{v} \cap \isolated{v}} \hspace{-35pt} \text{\# of non-edges of $u$ in $\low{v}$} \leq \sum_{u \in \low{v} \cap \isolated{v}} \deg{v} \leq (\delta \cdot \deg{v}) \cdot \deg{v}, 
	\]
	where the last inequality is because $v$ is not low-sparse and thus $\card{\low{v} \cap \isolated{v}} \leq \delta \cdot \deg{v}$. 
	
	For the second term, 
	\[
		\sum_{u \in \low{v} - \isolated{v}} \hspace{-35pt} \text{\# of non-edges of $u$ in $\low{v}$} \leq \sum_{u \in \low{v} - \isolated{v}} \eps \cdot \deg{v} \leq \deg{v} \cdot (\eps \cdot \deg{v}),
	\] 
	where the first inequality is because $v$ is not isolated from vertex $u$ in the sum. 
\end{proof}
\begin{proof}[Proof of Part $(ii)$.]
	We have, 
	\begin{align*}
		\text{\# of edges going out of $\low{v}$} &= \sum_{u \in \low{v}} \hspace{-10pt} \text{\# of edges $u$ going out of $\low{v}$} \\
		&= \sum_{u \in \low{v} \cap \isolated{v}} \hspace{-35pt} \text{\# of edges of $u$ going out of $\low{v}$} \\
		&\hspace{1.5cm} + \sum_{u \in \low{v} - \isolated{v}} \hspace{-35pt} \text{\# of edges of $u$ going out of $\low{v}$} .
	\end{align*}
	We bound each of these sums. For the first term, 
	\begin{align*}
		\sum_{u \in \low{v} \cap \isolated{v}} \hspace{-35pt} \text{\# of edges of $u$ going out of $\low{v}$} &\leq \sum_{u \in \low{v} \cap \isolated{v}} \deg{u} \\
		&\leq  \sum_{u \in \low{v} \cap \isolated{v}} (1+\eps) \cdot \deg{v} \tag{by~\Cref{def:light} as $u \in \low{v}$} \\
		&\leq (\delta \cdot \deg{v}) \cdot (1+\eps) \cdot \deg{v} \tag{by~\Cref{def:isolated} as $v$ is not low-sparse}.
	\end{align*}
	
	For the second term, 
	\begin{align*}
		\sum_{u \in \low{v} - \isolated{v}} \hspace{-35pt}\text{\# of edges of $u$ going out of $\low{v}$} &= \sum_{u \in \low{v} - \isolated{v}} \card{N(u) - \low{v}} \\
		&= \sum_{u \in \low{v} - \isolated{v}} \card{N(u)} - \card{\low{v}} + \card{\low{v} - {N(u)}} \\
		&\leq \sum_{u \in \low{v} - \isolated{v}} \deg{u} - \deg{v} + \eps \cdot \deg{v} \tag{by~\Cref{def:isolated} as $v$ is not isolated from vertex $u$} \\
		&\leq \sum_{u \in \low{v} - \isolated{v}} 2\eps \cdot \deg{v} \tag{by~\Cref{def:light} as $u \in \low{v}$} \\
		&\leq \deg{v} \cdot (2\eps \cdot \deg{v)}. 
	\end{align*}
	Summing up the previous two bounds together, plus the fact $\eps < 1$, concludes the proof. 
\end{proof}

\wrap{p:kernel}{
For every dense vertex $v \in \Dense{G}{\eps}{\delta}$, $\kernel{v}$ satisfies the following properties: 
	\begin{enumerate}[label=$(\roman*)$]
		\item $\kernel{v}$ is a subset of $\low{v}$ with size at least $(1-2\delta) \cdot \deg{v}$; 
		\item every vertex $u \in \kernel{v}$ has at least $(1-\eps-\delta) \cdot \deg{v}$ neighbors in $\low{v}$.
	\end{enumerate} 
}
\begin{proof}
	By definition, $\kernel{v} \subseteq \low{v}$ and since $v$ is not low-sparse, by~\Cref{def:isolated}, we have 
	\[
	\card{\low{v} \cap \isolated{v}} < \delta \cdot \deg{v},
	\]
	and since $v$ is not light, by~\Cref{def:light}, we have $\card{\low{v}} > (1-\delta) \cdot \deg{v}$. Putting these two together proves part $(i)$. 
	
	Similarly, since for any $u \in \kernel{v}$, $u$ is not an $\eps$-isolated neighbor of $v$, by~\Cref{def:isolated}, 
	\[
		\card{N(u) \cap \low{v}}= \card{\low{v}} - \card{\low{v} - N(u)} \geq \card{\low{v}} - \eps \cdot \deg{v}.
	\]
	Again, since $v$ is  not light, we have $\card{\low{v}} > (1-\delta) \cdot \deg{v}$, which concludes the proof. 
\end{proof}

\wrap{p:kernel-dense}{
	For every $(\eps,\delta)$-dense vertex $v$, any vertex $u \in \kernel{v}$ is $(4\eps+2\delta,2\eps+2\delta)$-dense. 
}
\begin{proof}
	Fix a vertex $u \in \kernel{v}$; we shall prove that $u$ is neither light nor low-sparse (for the given parameters) which concludes the proof. 
	
	Consider a vertex $w \in \low{v}$. We have, 
	\[
		\deg{w} \leq (1+\eps) \cdot \deg{v} \leq \frac{1+\eps}{1-\eps - \delta} \cdot \deg{u} \leq (1+4\eps+2\delta) \cdot \deg{u};
	\]
	 the first inequality is by~\Cref{def:light} as $w \in \low{v}$, the second is by~\Cref{p:kernel} as $u \in \kernel{v}$, and the last one is by a simple calculation assuming $\eps,\delta < 1/8$. 
	 This in turn implies that any vertex in $N(u) \cap \low{v}$ belongs to $\Low{u}{4\eps+2\delta}.$ As such, we have, 
	\begin{align}
		\card{\Low{u}{4\eps+2\delta}} \geq \card{N(u) \cap \low{v}} \geq (1-\eps-\delta) \cdot \deg{v} \geq \frac{1-\eps-\delta}{1+\eps} \cdot \deg{u} \geq (1-(2\eps+\delta)) \cdot \deg{u}, \label{eq:kernel-in}
	\end{align}
	where the second inequality is by~\Cref{p:kernel}, and the third inequality is because $u \in \low{v}$. This implies that $u$ is \emph{not} an $(4\eps+2\delta,2\eps+\delta)$-light vertex. 

	We now prove that $u$ cannot be low-sparse either. Define 
	\[
		S(u) := N(u) \cap \low{v} - \isolated{v}.
	\]
	Note that, by the discussion above, $S(u)$ is a subset of $\Low{u}{4\eps+2\delta}$. 
	
	Firstly, we have that, 
	\begin{align*}
		\card{S(u)} &= \card{(N(u) \cap \low{v}) - (\low{v} \cap \isolated{v})} \\
		&\geq \card{N(u) \cap \low{v}} - \card{\low{v} \cap \isolated{v}} \\
		&\geq \card{N(u) \cap \low{v}} - \delta \cdot \deg{v} \tag{as $v$ is not $(\eps,\delta)$-isolated} \\
		&\geq (1-(2\eps+2\delta)) \cdot \deg{u} \tag{by~\Cref{eq:kernel-in}}. 
	\end{align*}
	Secondly, for any vertex $w \in S(u)$, 
	\begin{align*}
		\card{\Low{u}{4\eps+2\delta} - N(w)} &\leq \card{\low{v} - N(w)} + \card{\Low{u}{4\eps+2\delta} - \low{v}} \\
		&\leq \eps \cdot \deg{v} +  \card{\Low{u}{4\eps+2\delta} - \low{v}} \tag{by~\Cref{def:isolated} for $w \in \low{v}$}\\
		&= \eps \cdot \deg{v} +  \card{\Low{u}{4\eps+2\delta}} - \card{\Low{u}{4\eps+2\delta} \cap \low{v}} \\
		&= \eps \cdot \deg{v} +  \card{\Low{u}{4\eps+2\delta}} - \card{N(u) \cap \low{v}} \tag{as $N(u) \cap \low{v} \subseteq \Low{u}{4\eps+2\delta}$ as discussed above} \\
		&\leq \eps \cdot \deg{v} + \deg{u} - (1-(2\eps+\delta)) \cdot \deg{u} \tag{by~\Cref{eq:kernel-in}}  \\
		&\leq \frac{\eps}{1-\eps - \delta} \cdot \deg{u} + (2\eps+\delta) \cdot \deg{u} \tag{by~\Cref{p:kernel}, $\deg{u} \geq (1-\eps -\delta) \cdot \deg{v}$} \\
		&\leq (4\eps + \delta) \cdot \deg{u}. 
	\end{align*}
	This means that no vertex $S(u)$ is $(4\eps+\delta)$-isolated from $u$. Thus, by the bound on the size of $S(u)$, there are at most $(2\eps+2\delta)$ vertices in $\Low{u}{4\eps+2\delta}$ that 
	are $(4\eps+\delta)$-isolated from $u$. Thus, $u$ is also not a $(4\eps+2\delta,2\eps+2\delta)$-low-sparse vertex. 
\end{proof}

\wrap{p:kernel-monotone}{
	For every $(\eps,\delta)$-dense vertex $v$, any vertex in $\Kernel{v}{\eps}{\delta}$ also belongs to $\Kernel{v}{\eps'}{\delta'}$ for any $\eps' > \eps+\delta$ and arbitrary $\delta' > 0$. 
}
\begin{proof}
	Let $u$ be in $\kernel{v}$. By definition, we have 
	\[
	\deg{u} \leq (1+\eps) \cdot \deg{v} \qquad \text{and} \qquad \card{\low{v} - N(u)} < \eps \cdot \deg{v}.
	\]
	This obviously implies that $\deg{u} \leq (1+\eps') \cdot \deg{v}$ so $u \in \Low{v}{\eps'}$ also (a necessary condition to be in $\Kernel{v}{\eps'}{2\delta}$). However, 
	we also need to have that 
	\[
		\card{\Low{v}{\eps'} - N(u)} < \eps \cdot \deg{v},
	\]
	a guarantee which is not immediate since $\Low{v}{\eps'}$ can potentially contain many more vertices compared to $\low{v}$. However, since $v$ is also $(\eps,\delta)$-dense, 
	then we know that 
	\[
		\card{\Low{v}{\eps'} - \Low{v}{\eps}} < \delta \cdot \deg{v}.
	\]
	Combining the above two equations implies that $u \notin \Isolated{v}{\eps'}$, thus finalizing the proof. 
\end{proof}

\wrap{p:1}{
	$\kernel{v}$ belongs to $C_v$ and thus $\card{\low{v} \cap C_v} \geq (1-2\delta) \cdot \deg{v}$. 
}
\begin{proof}
	By~\Cref{p:kernel}, every vertex $u \in \kernel{v}$ satisfies $\card{N(u) \cap \low{v}} \geq (1-\eps-\delta) \cdot \deg{v}$. Thus by Rule (1), $u$ will be included in $C_v$. The second part of 
	the property now follows immediately from the size of $\kernel{p}$ and that it is a subset of $\low{v}$. 
\end{proof}

\wrap{p:2}{
	Every vertex $u \in C_v$ satisfies $\card{N(u) \cap C_v} \geq (1-7\eps-9\delta) \cdot \deg{v}$. 
}
\begin{proof}
	By Rule (2), every vertex $u \in C_v$ satisfies $\card{N(u) \cap \low{v}} \geq (1-7\eps-7\delta) \cdot \deg{v}$. As such, 
	\begin{align*}
		\card{N(u) \cap C_v} &\geq \card{N(u) \cap \low{v}} - \card{\low{v} - S_v} \\
		&\geq (1-7\eps-7\delta) \cdot \deg{v} - \card{\low{v}} + \card{\low{v} \cap S_v}  \\
		&\geq (1-7\eps-7\delta) \cdot \deg{v} - \deg{v} + (1-2\delta) \cdot \deg{v} \tag{by~\Cref{p:1}},
	\end{align*}
	which is $(1-7\eps-9\delta) \cdot \deg{v}$ as desired. 
\end{proof}

\wrap{p:3}{
	Every vertex $u \in C_v$ satisfies $\card{N(u) - C_v} \leq (9\eps+11\delta) \cdot \deg{v}$. 
}
\begin{proof}
	By Rule (2), every vertex $u \in C_v$ satisfies $\deg{u} \leq (1+2\eps+2\delta) \cdot \deg{v}$. Thus, 
	\[
		\card{N(u) - C_v} = \card{N(u)} - \card{N(u) \cap C_v} \leq (1+2\eps+2\delta) \cdot \deg{v} - (1-7\eps-9\delta) \cdot \deg{v} = (9\eps + 11\delta) \cdot \deg{v} 	
	\]
	as desired. 
\end{proof}

\wrap{p:4}{
	 $\card{C_v - \low{v}} \leq (3\eps+3\delta) \cdot \deg{v}$.  
}
\begin{proof}
	By~\Cref{p:dense}, there are at most $(2\eps+2\delta) \cdot \deg{v}^2$ edges going out of $\low{v}$. On the other hand, by Rule (2), for any  $u \in S_v - \low{v}$, there is a dedicated set of at least $(1-7\eps-7\delta) \cdot \deg{v}$ edges among 
	the outgoing edges of $\low{v}$. For $\eps,\delta < 1/42$, we have $(1-7\eps-7\delta) \cdot \deg{v} > 2/3 \cdot \deg{v}$. Thus, 
	\[
		\card{C_v - \low{v}} \leq (2\eps+2\delta) \cdot \deg{v}^2/(2/3 \cdot \deg{v}) = (3\eps+3\delta) \cdot \deg{v},
	\]
	concluding the proof. 
\end{proof}

\wrap{p:4'}{
	Every vertex $u \in C_v$ satisfies $\card{C_v - N(u)} \leq (10\eps+10\delta) \cdot \deg{v}$. 
}
\begin{proof}
	By Rule (2), every vertex $u \in C_v$ satisfies $\deg{u} \leq (1+2\eps+2\delta) \cdot \deg{v}$. Thus, 
	\begin{align*}
		\card{C_v-N(u)} &\leq \card{C_v - \low{v}} + \card{\low{v} - N(u)} \\
		&\leq  (3\eps+3\delta) \cdot \deg{v} + \card{\low{v}} - \card{\low{v} \cap N(u)} \tag{by~\Cref{p:4}} \\
		&\leq  (3\eps+3\delta) \cdot \deg{v} + \deg{v}  - (1-7\eps-7\delta) \cdot \deg{v} \tag{by Rule (2)} \\	
		&= (10\eps+10\delta) \cdot \deg{v},
	\end{align*}
	as desired. 
\end{proof}

\wrap{p:5}{
	Every dense vertex $v$ belongs to its candidate set $C_v$. 
}
\begin{proof}
	Since $v$ is a not a light vertex, $\low{v}$ has size at least $(1-\delta) \cdot \deg{v}$. Since $\low{v}$ is a subset of $N(v)$, by Rule (1), vertex $v$ should be included in $C_v$. 
\end{proof}

\wrap{p:7}{
	If $C_u \cap C_v \neq \emptyset$ and $\deg{u} \leq \deg{v}$, then $C_u \subseteq C_v$. 
}
\begin{proof}
	By~\Cref{p:6}, we know that $u \in C_v$. Now, consider any vertex $w \in C_u$. By Rule (2), we have, 
	\begin{align}
	\deg{w} \leq (1+2\eps+2\delta) \cdot \deg{u} \leq (1+2\eps+2\delta) \cdot \deg{v}. \label{eq:w-degree-1}
	\end{align}
	Given this degree bound, it remains to prove that $C_w \cap C_v \neq \emptyset$, so that we can apply~\Cref{p:6} and have $w$ also belongs to $S_v$. Given that $w$ can be any arbitrary vertex in $C_u$, this will imply that $C_u \subseteq C_v$. 
	
	Let us now prove that $C_w \cap C_v$ is non-empty. Firstly, since $u \in C_v$, by Rule (2), we have that 
	\begin{align}
		\card{N(u) - \low{v}} = \deg{u} - \card{N(u) \cap \low{v}} \leq \deg{u} - (1-7\eps-7\delta) \cdot \deg{v}. \label{eq:p7-1}
	\end{align}
	Similarly, since $w \in S_u$ and $\low{u} \subseteq N(u)$, by Rule (2), we also have that, 
	\begin{align}
		\card{N(w) \cap N(u)} \geq \card{N(w) \cap \low{u}} \geq (1-7\eps-7\delta) \cdot \deg{u}. \label{eq:p7-2}
	\end{align}
	Combining these, we have that
	\begin{align*}
		\card{N(w) \cap \low{v}} &\geq \card{N(w) \cap N(u)} - \card{N(u) - \low{v}} \\
		&\geq (1-7\eps-7\delta) \cdot \deg{u} - \deg{u} + (1-7\eps-7\delta) \cdot \deg{v} \tag{by~\Cref{eq:p7-2} for the first term and~\Cref{eq:p7-1} for the second} \\
		&\geq -(14\eps+14\delta) \cdot \deg{v} + (1-7\eps-7\delta) \cdot \deg{v} \tag{by Rule (2), $\deg{u} = \card{N(u)} \geq \card{N(u) \cap \low{v}} \geq (1-7\eps-7\delta) \cdot \deg{v}$}  \\
		&= (1-21\eps-21\delta) \cdot \deg{v}. 
	\end{align*}
	Using the above equation, we have 
	\begin{align*}
		\card{N(w) \cap C_v} &\geq \card{N(w) \cap \low{v}} - \card{\low{v} - S_v} \\
		&\geq (1-21\eps-21\delta) \cdot \deg{v} - 2\delta \cdot \deg{v} \tag{by~\Cref{p:1}, $\card{\low{v} \cap C_v} \geq (1-2\delta) \cdot \deg{v}$}  \\
		&= (1-21\eps-23\delta) \cdot \deg{v}. 
	\end{align*}
	Finally, the above equation in turn implies that
	\begin{align*}
		\card{C_w \cap C_v} &\geq \card{N(w) \cap C_v} - \card{N(w) - C_w} \\
		&\geq (1-21\eps-23\delta) \cdot \deg{v} - \card{N(w) - C_w} \tag{by the equation above} \\
		&\geq (1-21\eps-23\delta) \cdot \deg{v} - 2\delta \cdot \deg{w} \tag{by~\Cref{p:1}, $\card{N(w) \cap C_w} \geq (1-2\delta) \cdot \deg{w}$} \\
		&\geq (1-21\eps-23\delta) \cdot \deg{v} - 2\delta \cdot (1+2\eps+2\delta) \cdot \deg{v} \tag{by~\Cref{eq:w-degree-1}} \\
		&\geq (1-21\eps-25\delta) \cdot \deg{v} \\
		&> 0,
	\end{align*}
	as long as $\eps,\delta < 1/46$. As such $C_w \cap C_v$ is non-empty, which concludes the proof as argued earlier. 
\end{proof}

\wrap{clm:should-in}{
Any vertex $u$ in $\Low{v}{\eps} - \isolated{v}$ \underline{will not be} included in $I(v)$ with high probability.
}
\begin{proof}
Since $u$ is not $\eps$-isolated for $v$ and $v$ is not $(\eps,\eps)$-light, by~\Cref{def:isolated,def:light}, we know that 
\begin{align}
 \card{\Low{v}{\eps} \cap N({u})} =  \card{\low{v}} - \card{\low{v} - N({u})} \geq (1-2\eps) \cdot \deg{v}. \label{eq:should-in-1}
\end{align}
For any vertex $w \in \NS{u}$, let $X_w \in \set{0,1}$  be an indicator random variable for the event  `$w \in \low{v}$'. Define $X := \sum_{w \in \NS{u}} X_w$. We have, 
\begin{align*}
	\Ex\card{\low{v} \cap \NS{u}} &= \expect{X} = \sum_{w \in \NS{u}} \Ex{[X_w]} \\
	&\geq (1-2\eps) \cdot t \cdot \frac{\deg{v}}{\deg{u}} \tag{by the random choice of $\NS{u}$ and~\Cref{eq:should-in-1}} \\
	&\geq (1-3\eps) \cdot t \tag{as $u \in \low{v}$ and thus $\deg{u} \leq (1+\eps) \cdot \deg{v}$}. 
\end{align*}
 As such, by Chernoff bound (\Cref{prop:chernoff}),  we have 
 \begin{align*}
 \Pr\paren{\card{\low{v} \cap \NS{u}} < (1-4\eps) \cdot t} &\leq \Pr\paren{\card{X - \expect{X}} > \eps \cdot t} \\
 &\leq \exp\paren{-\frac{2\eps^2 \cdot t^2}{t}} \\
 &\leq \exp\paren{-2\eps^2 \cdot c \cdot \eps^{-2} \cdot \log{n}} = n^{-2c}. 
 \end{align*}
  Given  $\low{v} \subseteq \Low{v}{7\eps}$, and $\deg{u} \geq (1-2\eps) \cdot \deg{v}$ (\Cref{eq:should-in-1}),
 $u$ will not be included in $I(v)$. 
\end{proof}

\wrap{clm:should-out}{
Any vertex $u$ in $\Low{v}{7\eps} \cap \Isolated{v}{7\eps}$ \underline{will be} included in $I(v)$ with high probability.
}
\begin{proof}
Without loss of generality, in the following, we can assume $\deg{u} \geq (1-2\eps) \cdot \deg{v}$, as otherwise, $u$ will be included in $I(v)$ just because of the first condition of the test. 

Since $u$ is  $(7\eps)$-isolated, by~\Cref{def:isolated}, we know that 
\begin{align}
\card{\Low{v}{7\eps} \cap N({u})} \leq	 \deg{v} - \card{\Low{v}{7\eps} - N({u})} \leq (1-7\eps) \cdot \deg{v}. \label{eq:should-out-1}
\end{align}
For any vertex $w \in \NS{u}$, let $X_w \in \set{0,1}$  be an indicator random variable for the event  `$w \in \Low{v}{7\eps}$'. Define $X := \sum_{w \in \NS{u}} X_w$. We have, 
\begin{align*}
	\Ex\card{\Low{v}{7\eps} \cap \NS{u}} &= \expect{X} = \sum_{w \in \NS{u}} \Ex{[X_w]} \\
	&\leq (1-7\eps) \cdot t \cdot \frac{\deg{v}}{\deg{u}} \tag{by the random choice of $\NS{u}$ and~\Cref{eq:should-out-1}} \\
	&\geq (1-5\eps) \cdot t \tag{by the assumption that $\deg{u} \geq (1-2\eps) \cdot \deg{v}$ and $\eps < 1/2$}. 
\end{align*}
 As such, by Chernoff bound (\Cref{prop:chernoff}),  we have 
 \begin{align*}
 \Pr\paren{\card{\Low{v}{7\eps} \cap \NS{u}} \geq (1-4\eps) \cdot t} &\leq \Pr\paren{\card{X - \expect{X}} > \eps \cdot t} \\
 &\leq \exp\paren{-\frac{2\eps^2 \cdot t^2}{t}} \\
 &\leq \exp\paren{-2\eps^2 \cdot c \cdot \eps^{-2} \cdot \log{n}} = n^{-2c}. 
 \end{align*}
Thus, $u$ will be included in $I(v)$.
\end{proof}

\wrap{lem:form-Cv}{
	Let $v$ be any $(\eps',\delta')$-dense vertex in $V$. Then, with high probability, 
	\begin{enumerate}[label=$(\roman*)$]
		\item Every vertex $u \in V$ satisfying the following  is included in $\tC{v}$: 
		\[
			\card{N({u}) \cap \Low{v}{\eps'}} \geq (1-6\eps'-6\delta' ) \cdot \deg{v} \quad \text{and} \quad \deg{u} \leq (1+2\eps'+2\delta') \cdot \deg{v};
		\]
		\item No vertex $u \in V$ satisfying the following  is included in $\tC{v}$:
		\[
			\card{N({u}) \cap \Low{v}{\eps'}} < (1-7\eps'-7\delta' ) \cdot \deg{v} \quad \text{and} \quad \deg{u} > (1+2\eps'+2\delta') \cdot \deg{v}.
		\]
	\end{enumerate}
	Thus, $\tC{v}$ is a valid choice of $(\eps',\delta')$-candidate set $C_v$ by~\Cref{def:candidate-sets}.  
}
\begin{proof}
	Given the degrees of vertices is computed accurately by the algorithm, the conditions on the degrees are certainly satisfied in this lemma. We thus focus on the first conditions in each part. 
	
	Fix any vertex $u \in V$ and for any vertex $w \in \NS{u}$, define $X_w \in \set{0,1}$ as an indicator random variable for the event `$w \in \Low{v}{\eps'}$'. Define $X := \sum_{w \in \NS{u}} X_w$. 
	We have, 
	\begin{align}
		\Ex\card{\NS{u} \cap \Low{v}{\eps'}} = \expect{X} = \sum_{w \in \NS{u}} \expect{X_w} = t \cdot \frac{\card{N(u) \cap \low{v}}}{\deg{u}}. \label{eq:expectation-Cv}
	\end{align}
	
	Consider a vertex $u$ that should be included in part $(i)$. By~\Cref{eq:expectation-Cv}, we have,
	\[
		\Ex\card{\NS{u} \cap \Low{v}{\eps'}}  \geq (1-6\eps'-6\delta' ) \cdot t \cdot \frac{\deg{v}}{\deg{u}}
	\]
	As such, by Chernoff bound (\Cref{prop:chernoff}), 
	\begin{align*}
		\Pr\paren{\card{\NS{u} \cap \Low{v}{\eps'}} < (1-6\eps'-6\delta' - \eps) \cdot t \cdot \frac{\deg{v}}{\deg{u}}} &\leq \Pr\paren{\card{X-\expect{X}} > \eps \cdot t \cdot \frac{\deg{v}}{\deg{u}}} \\
		&\leq \exp\paren{-\frac{2t^2 \cdot \deg{v}}{t \cdot \deg{u}}} \\
		&\leq \exp\paren{-\eps^2 \cdot t} \tag{by the bound of $\deg{u} \leq (1+2\eps'+2\delta') \cdot \deg{v}$} \\
		&= \exp\paren{-\eps^2 \cdot c \cdot \eps^{-2} \cdot \log{n}} < n^{-c}. 
	\end{align*}
	As such, with high probability, all the vertices in part $(i)$ will be included in $\tC{v}$. 
	
	Now consider a vertex $u$ that should be not included in part $(ii)$. By~\Cref{eq:expectation-Cv}, we have,
	\[
		\Ex\card{\NS{u} \cap \Low{v}{\eps'}} \leq (1-7\eps'-7\delta' ) \cdot t \cdot \frac{\deg{v}}{\deg{u}}
	\]
	As such, by Chernoff bound (\Cref{prop:chernoff}), 
	\begin{align*}
		\Pr\paren{\card{\NS{u} \cap \Low{v}{\eps'}} \geq (1-6\eps'-6\delta' - \eps) \cdot t \cdot \frac{\deg{v}}{\deg{u}}} &< \Pr\paren{\card{X-\expect{X}} > \eps \cdot t \cdot \frac{\deg{v}}{\deg{u}}} \\
		&\leq \exp\paren{-\frac{2t^2 \cdot \deg{v}}{t \cdot \deg{u}}} \\
		&\leq \exp\paren{-\eps^2 \cdot t} \tag{by the bound of $\deg{u} \leq (1+2\eps'+2\delta') \cdot \deg{v}$} \\
		&= \exp\paren{-\eps^2 \cdot c \cdot \eps^{-2} \cdot \log{n}} < n^{-c}. 
	\end{align*}
	As such, with high probability, not vertex of part $(ii)$ will be included in $\tC{v}$. 
\end{proof}

\wrap{lem:dense-finding-1}{
	Suppose $v$ is an $(\eps'',\delta'')$-dense vertex. Then, $v \in C_u$ for every $u \in \Kernel{v}{\eps''}{\delta''}$ (where $C_v$ is also computed as a $(\eps',\delta')$-candidate set of $u$ by the algorithm in the previous part). 
}
\begin{proof}
	$\Kernel{v}{\eps''}{\delta''} \subseteq \Kernel{v}{\eps'}{\delta'}$ by~\Cref{p:kernel-monotone} since $\eps' > \eps'' + \delta''$, and $\Kernel{v}{\eps'}{\delta'} \subseteq C_v$ by~\Cref{p:1}. 
	At the same time, for any vertex $u \in \Kernel{v}{\eps''}{\delta''}$, we also have $u \in C_u$ also by~\Cref{p:5}. Thus, we have that $C_u \cap C_v \neq \emptyset$. We will now prove that $\deg{v} \leq (1+2\eps'+2\delta') \cdot \deg{u}$ 
	so that we can apply~\Cref{p:6} and get that $v \in C_u$. 
	
	By~\Cref{p:kernel-dense}, we have each $u \in \Kernel{v}{\eps''}{\delta''}$ has 
	\[
		\deg{u} \geq (1-\eps''-\delta'') \cdot \deg{v} \geq \frac{1}{(1+2\eps''+2\delta'')} \cdot \deg{v},
	\]
	as desired. 
\end{proof}

\wrap{lem:dense-finding-2}{
	With high probability, for every $(\eps'',\delta'')$-dense vertex $v$, there is at least one vertex $u \in \Kernel{v}{\eps''}{\delta''}$ that is sampled in $\sample$. 
}
\begin{proof}
	Size of $\Kernel{v}{\eps''}{\delta''}$ is at least $\deg{v}/2$ by~\Cref{p:kernel} and each vertex $u \in \Kernel{v}{\eps''}{\delta''}$ has degree at most $2 \cdot \deg{v}$ by definition. 
	Since we are sampling each vertex $u \in \Kernel{v}{\eps''}{\delta''}$ in $\sample$ independently with probability $p_u$, we have, 
	\begin{align*}
		\Pr\paren{\text{no vertex of $\Kernel{v}{\eps''}{\delta''}$ is sampled}} &\leq \prod_{u \in \Kernel{v}{\eps''}{\delta''}} (1-p_u) \\
		&\leq (1-p_u)^{\deg{v}/2} \leq \paren{1-\frac{c \cdot \log{n}}{2\,\deg{v}}}^{\deg{v}/2} \\
		&\leq \exp\paren{-\frac{c}{4} \cdot \log{n}} < n^{-c/4}.
	\end{align*}
	This concludes the proof. 
\end{proof}


\newcommand{\ORn}{\ensuremath{\textnormal{OR}_{N}}}
\newcommand{\istar}{\ensuremath{i^{*}}}
\newcommand{\tseq}[1]{\ensuremath{\langle #1\rangle \xspace}}

\section{Standard Variants of Sublinear-Time and Streaming Algorithms}
\label{app:models}
In this section, we provide the following complementary results beyond the main models we used for the algorithms.
\begin{itemize}
\item Impossibility results for sublinear-time algorithms such that neither the query access to a graph $G$ nor the adjacency list of the $(-)$-edges $G^{-}$ is sufficient for \emph{any} multiplicative approximation algorithm for correlation clustering with $o(n^{2})$ time. 
\item Additional arithmetic results to show that a variation of our sublinear-space algorithm also works under the dynamic graph streams. 
\end{itemize} 

Note that under the adjacency list model, any query lower bound automatically implies a time lower bound since each query takes $O(1)$ time. Therefore, we prove \emph{query} lower bounds for the impossibility results on sublinear-time algorithms.

\subsection{Impossibility Results for Sublinear-time Algorithms on Models}
\label{app:time-lower}
We first show two negative results under query models other than the adjacency list of $(+)$-graph we adopted for our sublinear-time algorithm. These results indicate that the choice of our model with the $(+)$-graph is a natural one for correlation clustering.

\subsubsection*{Lower bound for correlation clustering algorithms with adjacency list access of $G$}
Our first negative result shows that it is impossible to get \emph{any} multiplicative approximation for correlation clustering with $o(n^{2})$ queries with \emph{only} the query access of the adjacency list of $G$. Formally, we have the following result
\begin{proposition}
\label{prop:lb-adjacency-G}
Suppose an algorithm is given an input labeled graph $G$ specified via the adjacency list of the graph (but \emph{not} the $(+)$-subgraph). Then, any algorithm that finds an $\alpha$-approximation to the correlation clustering problem on $G$ for any finite $\alpha\geq 1$ with probability at least $\frac{99}{100}$ requires $\Omega\paren{n^2}$ queries.
\end{proposition}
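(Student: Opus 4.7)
The plan is to apply Yao's minimax principle with a standard ``hidden edge'' hard distribution. Define two distributions on labeled complete graphs on $V$ with $|V|=n$: under $\mathcal{D}_Y$, every edge is labeled $(-)$; under $\mathcal{D}_N$, a single edge $e^* = (u^*,v^*)$ is drawn uniformly at random from the ${n \choose 2}$ possible edges and labeled $(+)$, while every other edge is labeled $(-)$. The crucial observation is that $\scost{\OPT} = 0$ in both cases. On $\mathcal{D}_Y$, the all-singleton clustering is the unique clustering achieving cost $0$, because placing any two vertices together would incur cost $1$ on the $(-)$ edge between them. On $\mathcal{D}_N$, the unique cost-$0$ clustering is the one consisting of the single pair-cluster $\{u^*,v^*\}$ together with $n-2$ singletons. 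Since $\alpha$ is finite, any $\alpha$-approximation algorithm must output a clustering of cost at most $\alpha \cdot 0 = 0$ with probability $\geq 99/100$, and hence must produce the $\mathcal{D}_Y$-specific clustering on $\mathcal{D}_Y$ and the $\mathcal{D}_N$-specific clustering on $\mathcal{D}_N$. These two clusterings are distinct, so the algorithm is forced to distinguish $\mathcal{D}_Y$ from $\mathcal{D}_N$ with high probability.

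I then bound the query complexity of this distinguishing task. In the adjacency list access model for $G$ (as opposed to $G^+$), the paper already observes that degree queries are useless (they always return $n-1$) and that each neighbor query reveals the label of at most one edge, so $q$ queries yield at most $q$ distinct edge labels. Fix any deterministic algorithm $A$ making at most $q$ queries and consider its behavior on $\mathcal{D}_Y$: since every response is $(-)$, both the sequence of queried edges $Q_Y$ and the final output are completely determined. When we run $A$ instead on an instance from $\mathcal{D}_N$, the transcripts of $\mathcal{D}_Y$ and $\mathcal{D}_N$ agree step by step until (if ever) $A$ queries $e^*$; because $e^*$ is uniform over the ${n \choose 2}$ edges and $|Q_Y| \leq q$, we have $\Pr_{\mathcal{D}_N}[e^* \in Q_Y] \leq q/{n \choose 2}$. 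Thus with probability at least $1 - q/{n \choose 2}$, $A$ produces the same output on $\mathcal{D}_N$ as on $\mathcal{D}_Y$, and so necessarily fails on one of the two distributions. An averaging argument over the uniform mixture of $\mathcal{D}_Y$ and $\mathcal{D}_N$ gives a total success probability of at most $1/2 + q/(2{n \choose 2})$; for this to exceed $99/100$ we need $q = \Omega(n^2)$, and Yao's minimax principle extends the bound to randomized algorithms.

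There is no substantial obstacle to overcome. The one conceptual step is recognizing that we should choose two distributions \emph{both} satisfying $\scost{\OPT} = 0$: this turns a multiplicative-approximation requirement into an exact-recovery requirement and reduces the problem to hidden-edge-search, at which point the remainder is routine bookkeeping about transcripts of adaptive algorithms on near-identical inputs.
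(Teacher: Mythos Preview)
Your proof is correct and follows essentially the same approach as the paper: both use the hard instance where all edges are $(-)$ except possibly one hidden $(+)$ edge, observe that $\scost{\OPT}=0$ forces exact recovery, and conclude that any approximation algorithm must locate the hidden edge. The only cosmetic difference is that the paper packages this as a reduction from the $\ORn$ problem (citing its known $\Omega(N)$ query lower bound as a black box), whereas you give the equivalent self-contained Yao argument directly.
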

\begin{proof}
We prove \Cref{prop:lb-adjacency-G} by a reduction from the simple $\ORn$ problem. 
\begin{problem}[$\ORn$]
Given $N$ Boolean variables $\{x_{i}\}_{i=1}^{N}$ such that $(N-1)$ of them are $0$ and one of them is either $0$ or $1$, output $f(\{x_{i}\}_{i=1}^{N}):=x_{1} \vee x_{2} \cdots \vee x_{N}$.
\end{problem}
It is well known that any algorithm that solves $\ORn$ with probability at least $\frac{49}{50}$ requires $\Omega\paren{N}$ queries \cite{BuhrmanW02}. We now show that this implies the lower bound in \Cref{prop:lb-adjacency-G}. To see this, consider the following graph $G$
\begin{tbox}
A family of instances to prove the lower bound in \Cref{prop:lb-adjacency-G}.

\smallskip
\begin{enumerate}
\item Order the pairs of vertices arbitrarily among $[{n \choose 2}]$. Pick a special index $\istar \in {n \choose 2}$. 
\item For all other indices \emph{except} $\istar$, add $(-)$ edges between the corresponding vertex pairs.
\item For the index $\istar$, add \emph{either} a $(+)$ or a $(-)$ edge between the corresponding vertex pair.
\end{enumerate}
\end{tbox}

We now show that if we can find an $\alpha$-approximation to the correlation clustering on the above instances probability at least $99/100$ in $o(n^2)$ queries, it would mean an algorithm to solve $\ORn$ in $o(N)$ queries with probability at least $49/50$, which forms a contradiction. Observe that 
\begin{itemize}
\item If the edge corresponds to $\istar$ is positive, the optimal clustering is to put the vertex pair corresponds to $\istar$ in a cluster, and all other vertices in other clusters;
\item On the other hand, if the edge corresponds to $\istar$ is negative, the optimal clustering is to put all the vertices to separate clusters.
\end{itemize}
Note that in both cases, the optimal clustering cost is $0$. Therefore, any algorithm that provides $\alpha$-multiplicative approximation for some finite $\alpha$ must \emph{recover the optimal clustering}. 

Now suppose such an approximation algorithm $\textsf{ALG}$ for correlation clustering exists and the success probability is at least $\frac{99}{100}$ and $o(n^2)$ queries. For any given $\ORn$ instance, one can use $\textsf{ALG}$ in the following way: construct a graph with $n$ vertices where $N = {n \choose 2}$, and treat each query on the $i$-th edge slot as the query of the $i$-th element $x_{i}$. Return queries as $(-)$ edge if $x_{i}=0$, and as $(+)$ edge if $x_{i}=1$. In this way, we can run $\textsf{ALG}$, and output the results of $\ORn$ based on the resulting correlation clustering. Specifically, we can output $f(\{x_{i}\}_{i=1}^{N})=0$ if the total number of clusters is $n$, and output $f(\{x_{i}\}_{i=1}^{N})=1$ if the total number of clusters is $n-1$. By the guarantee of $\textsf{ALG}$, this algorithm solves $\ORn$ with probability at least $\frac{99}{100}>\frac{49}{50}$ with $o(n^2)=o(N)$ queries. This is a contradiction. Therefore, any such correlation clustering algorithm must use at least $\Omega\paren{n^2}$ queries. \Qed{\Cref{prop:lb-adjacency-G}}

\end{proof}

We remark that our lower bound in this section is similar to the one proved by Bonchi et al. \cite{BonchiGK13}, but the subtle difference makes the two lower bounds incomparable. The lower bound of \cite{BonchiGK13} rules out any algorithm with $o(n^2)$ queries to achieve \emph{constant}  multiplicative approximation (even) with an extra \emph{constant} additive error. In the \emph{additive} sense, their lower bound is stronger than ours. However, our lower bound rules out any algorithm with $o(n^2)$ queries to achieve \emph{any} multiplicative approximation, even with polynomial or exponential factors. Therefore, in the \emph{multiplicative} sense, our lower bound is stronger.

\subsubsection*{Lower bound for correlation clustering algorithms with adjacency list of $G^{-}$}
We now turn to the stronger negative result, which shows that even if we are given the access of the adjacency list of the $(-)$-subgraph $G^{-}$, any algorithm to provide multiplicative approximation to correlation clustering with high probability still needs $\Omega\paren{n^2}$ queries. Formally, there is 
\begin{proposition}
\label{prop:lb-adjacency-G-minus}
Suppose an algorithm is given an input labeled graph $G$ specified via the adjacency list of the $(-)$-subgraph $G^-$. Then, any algorithm that finds an $\alpha$-approximation to the correlation clustering problem on $G$ for any finite $\alpha\geq 1$ with probability at least $\frac{99}{100}$ requires $\Omega\paren{n^2}$ queries.
\end{proposition}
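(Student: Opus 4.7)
The plan is to adapt the $\text{OR}_N$-based reduction used in the proof of Proposition B.1 to the $G^-$-access model. The challenge is that with access to $\deg^-(v)$, the family used in B.1 (one hidden $(+)$-edge among all $(-)$-edges) becomes trivially distinguishable by a single degree query, so we must use a family in which all $G^-$-degrees are constant across all inputs.

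\textbf{Construction.} Fix the canonical matching $M_0 = \{(2i-1, 2i) : i \in [n/2]\}$ on $V$. For each pair $1 \le a < b \le n/2$, let $M_{a,b}$ be obtained from $M_0$ by replacing the edges $(2a-1, 2a)$ and $(2b-1, 2b)$ by $(2a-1, 2b-1)$ and $(2a, 2b)$. Given an $\text{OR}_N$ input $x \in \{0,1\}^N$ with $N = \binom{n/2}{2} = \Theta(n^2)$ coordinates indexed by such pairs, construct the labeled graph $G_x$ whose $(+)$-subgraph is $M_0$ if $x$ is all-zero, and $M_{a^\star,b^\star}$ if exactly one coordinate $x_{a^\star,b^\star}$ equals one. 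In every instance, $G_x^+$ is a disjoint union of $n/2$ size-two cliques, so $\text{OPT} = 0$ and the unique optimal clustering encodes the underlying matching; moreover $\deg^-(v) = n-2$ for every vertex $v$ in every instance, making degree queries useless.

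As in the proof of Proposition B.1, because $\text{OPT} = 0$, any algorithm achieving any finite multiplicative-factor approximation with probability $\ge 99/100$ must output the unique optimal clustering, and hence must determine which matching underlies $G_x^+$ --- which is precisely the task of solving $\text{OR}_N$ on $x$.

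\textbf{Main obstacle.} The remaining step is to simulate each $G_x^-$ adjacency-list query using only $O(1)$ queries to $x$, at which point the $\Omega(N) = \Omega(n^2)$ query lower bound for $\text{OR}_N$ transfers to our setting. A priori this appears hard because the answer at position $i$ of $G_x^-[v]$ for $v$ in pair $a$ can depend on which (if any) of the $n/2 - 1$ potential swaps touching $a$ is active --- that is, on $\Theta(n)$ bits of $x$ --- so a naive simulation is too expensive. The plan is to order each $G_x^-[v]$ adversarially so that the first $n/2 - 1$ positions hold the vertices that belong to $G_x^-[v]$ regardless of $x$ (for $v \in \{2a-1, 2a\}$, these are $\{2i : i \neq a\}$), and thus can be answered without any $x$-queries. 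The remaining ``tail'' positions, whose answers genuinely depend on $x$, are handled by restricting $\text{OR}_N$ to a bucketed promise sub-family in which the $N$ coordinates are partitioned into $K = \Theta(n)$ disjoint blocks of $\Theta(n)$ bits --- one block per pair index --- so that each tail position of $G_x^-[v]$ depends only on $O(1)$ bits within the block for $v$'s pair. Running the reduction independently on each of the $K$ blocks and stitching the per-block $\Omega(n)$ lower bound together via a direct-sum-style argument then yields the claimed $\Omega(n^2)$ query lower bound for multiplicative-factor approximate correlation clustering in the $G^-$-adjacency-list model.
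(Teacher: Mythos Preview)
Your construction and the reduction target ($\text{OR}_N$) differ from the paper's, and your route can be completed, but the last paragraph---the bucketing and direct-sum plan---is both unnecessary and not well-posed as stated.

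The paper sidesteps the simulation issue by reducing from a different source problem: recovering a hidden random perfect matching given adjacency-list access to the full labeled graph $G$, which requires $\Omega(n^2)$ queries. The matching edges are labeled $(+)$ and all others $(-)$; since each vertex has exactly one $(+)$-neighbor, any $G^-$ neighbor query can be answered from the $G$ adjacency list at the cost of at most one extra ``skip'' query, so only $O(n)$ extra queries in total across the entire run. No per-query dependence on many hidden bits ever arises.

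As for your approach: first, ``one block per pair index'' is not a partition of the coordinates, since each $x_{\{a,b\}}$ is associated with two pair indices, so the blocks overlap. Second, and more to the point, once you have your own observation that each tail position of $G_x^-[v]$ can be made to depend on a \emph{single} bit of $x$, you are already done---no bucketing, no direct sum. Concretely, for $v=2a-1$, after the fixed prefix $\{2i:i\neq a\}$, let tail position $j$ (for $j\neq a$) return $2j-1$ if $x_{\{a,j\}}=0$ and $2a$ if $x_{\{a,j\}}=1$; under the $\text{OR}_N$ promise this is a valid listing of $N^-(2a-1)$ in every instance, and the symmetric rule works for $v=2a$. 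Every $G^-$ neighbor query is then simulated with at most one query to $x$ (and degree queries with zero), so a $Q$-query correlation-clustering algorithm gives an $O(Q)$-query $\text{OR}_N$ algorithm, forcing $Q=\Omega(N)=\Omega(n^2)$ directly. The direct-sum step you propose is not needed, and as written---stitching per-block $\Omega(n)$ bounds when the global instance has at most one nonzero bit across all blocks---it is not a standard direct-sum setting and would itself require justification you have not supplied.
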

\begin{proof}
We prove the statement by a reduction from the recovery of a perfect matching. Formally, the problem is defined as follows.
\begin{problem}[Matching Recovery]
\label{prb:matching-recover}
Suppose there are two players, Alice and Bob. Alice picks a random perfect matching among $n$ vertices, and mark the matching edges as \emph{special}. Furthermore, Alice adds other edges to make the graph complete. Bob is given the adjacency list of the the complete graph constructed by Alice, and Bob tries to find all the \emph{special} edges.
\end{problem}
By an argument from \cite{CzumajS10}, it can be shown that solving \Cref{prb:matching-recover} with probability at least $\frac{49}{50}$ requires $\Omega\paren{n^2}$ queries from Bob. Now, suppose we have a correlation clustering algorithm $\textsf{ALG}$ that returns an $\alpha$-approximation to correlation clustering for some finite $\alpha$. We use it to solve \Cref{prb:matching-recover}. Specifically, we can arrange the labels of the edges and use $\textsf{ALG}$ as follows.

\begin{tbox}
Solving \Cref{prb:matching-recover} with correlation clustering algorithm $\textsf{ALG}$.

\smallskip
\begin{enumerate}
\item Alice picks a random perfect matching as the \emph{special} edges, and label the \emph{special} edges as $(+)$. Furthermore, Alice add $(-)$-edges between all other vertex pairs.
\item Let $G$ denote the graph constructed by Alice, and give the adjacency list of $G$ to Bob.
\item Bob calls $\textsf{ALG}$, and simulate adjacency list of $G^-$ by answering $\textsf{ALG}$ in the following way:
\begin{itemize}
\item Degree queries: simply return $(n-1)$.
\item Pair queries: return `$(-)$ edge' if there is a $(-)$ edge; otherwise, return `no $(-)$ edge'.
\item Neighbor queries:
\begin{enumerate}
\item If the neighbor is connected by a $(-)$ edge, return the neighbor.
\item If the neighbor is connected by a $(+)$ edge, skip the neighbor and return the next vertex instead; for all the later queries, return the neighbor with the index increased by $1$.
\end{enumerate}
\end{itemize}
\end{enumerate}
\end{tbox}

Observe that the optimal solution of the correlation clustering in a graph $G$ constructed as above is to put the vertices that are connected by the $(+)$ edges in separate clusters. In this way, the optimal cost of correlation clustering on $G$ is $0$; and again, any algorithm that provides an $\alpha$-approximation to correlation clustering for some finite $\alpha$ must \emph{recover this clustering}. Therefore, if $\textsf{ALG}$ succeeds, which is with probability at least $\frac{99}{100}$, Bob can recover the special matching edges.

We now show that the query complexity in the above procedure is $o(n^2)$. Note that for the degree queries, pair queries, and neighbors queries for $(-)$ edges, the number of total queries in $o(n^2)$ by the guarantee of $\textsf{ALG}$. On the other hand, if the neighbor query includes a $(+)$ edge, we pay a \emph{single} extra query. This can happen at most $O(n)=o(n^2)$ times, which means the total query complexity is $o(n^2)$.

The above reduction gives an algorithm that can solve \Cref{prb:matching-recover} with probability $\frac{99}{100}$ and $o(n^2)$ queries from Bob, which forms a contradiction. Therefore, any such correlation clustering algorithm must use $\Omega\paren{n^2}$ queries. \Qed{\Cref{prop:lb-adjacency-G-minus}}

\end{proof}

\subsection{Algorithms for Other Streaming Models}
\label{app:stream-upper}
On the side of sublinear space algorithms, we have additional positive results on other models. We first note that an immediate observation is the streaming algorithm works with access to $G^{+}$:
\begin{proposition}
\label{prop:ub-streaming-G-plus}
Suppose a labeled graph $G=(V,\, E)$ is specified via a stream of edges of the positive subgraph $G^+$ of $G$. Then, \Cref{alg:cc-sub-space} with high probability finds an $O(1)$-approximation to the correlation clustering problem on $G$ in $O\paren{n\log(n)}$ space. 
\end{proposition}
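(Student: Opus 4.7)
The plan is to observe that Algorithm~\ref{alg:cc-sub-space} as written already accesses only the positive edges of the input stream, so that restricting the stream to contain only edges of $G^+$ makes no difference to its execution. Indeed, the text preceding the algorithm explicitly notes that the algorithm handles streams of $G$ ``by focusing on edges of $G^+$ in the stream and simply skipping any edge of $G^-$''. Thus the two input models induce identical traces of execution after the $(-)$-edges have been discarded, and the proof will consist of formalizing this correspondence.

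Concretely, I would walk through the algorithm line by line to verify this. Line~\ref{line:stream-degree} only updates $\degp{v}$ counters, which are driven by $(+)$-edges. Line~\ref{line:stream-neighbor} performs reservoir sampling on $N^+(v)$, again touching only $(+)$-edges. Line~\ref{line:stream-sample} invokes Lemma~\ref{lem:vertex-sample} \emph{applied to the graph $G^+$}, so it already takes as input only the stream of positive edges. Finally, Lines~\ref{line:decompose-query} and~\ref{line:clustering-query} are offline post-processing steps that operate purely on the sampled data $\{\NS{v}\}_{v\in V}$, $\{N^+(v)\}_{v\in \sample}$, and the degrees, all of which are functions of $G^+$ alone. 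Hence, executing the algorithm on a stream consisting solely of the edges of $G^+$ produces exactly the same distribution over output clusterings as executing it on any stream of $G$ whose $(+)$-edges appear in the same order.

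Given this equivalence, correctness follows immediately from the analysis already carried out for Theorem~\ref{thm:sub-space-alg}: conditioned on the high-probability events of Theorem~\ref{thm:decomposition}, the recovery algorithm produces a valid sparse-dense decomposition of $G^+$, and Theorem~\ref{thm:cc-alg} then guarantees that the resulting clustering is an $O(1)$-approximation of the correlation clustering on $G$. Crucially, Theorem~\ref{thm:cc-alg} constructs the clustering using only the decomposition of $G^+$, so we never need to have observed any $(-)$-edge in order to output a valid clustering of the full labeled graph. The space bound of $O(n\log n)$ is likewise inherited verbatim, since the bound in the proof of Theorem~\ref{thm:sub-space-alg} already charged storage only against positive degrees and positive neighborhoods.

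There is essentially no obstacle here; the only thing worth being slightly careful about is that we do not need to assume the identity of the $(-)$-edges is available at output time, since the clustering prescribed by Theorem~\ref{thm:cc-alg} is entirely determined by the partition $V = \Vsparse \sqcup K_1 \sqcup \ldots \sqcup K_k$ of $G^+$. Thus the proposition reduces to citing the identical bounds already established in the proof of Theorem~\ref{thm:sub-space-alg}.
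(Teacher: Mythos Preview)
Your proposal is correct and follows exactly the same idea as the paper's proof, which is the one-line observation that \Cref{alg:cc-sub-space} only utilizes the positive edges to perform clustering. Your line-by-line verification is more detailed than what the paper provides, but the underlying argument is identical.
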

\begin{proof}
The proof is immediate as \Cref{alg:cc-sub-space} only utilizes the positive edges to perform clustering. \Qed{\Cref{prop:ub-streaming-G-plus}}

\end{proof}
Next, we show that a variate of our streaming algorithm also works under a \emph{dynamic} stream with only (multiplicative) poly-logarithm space overhead. Since correlation clustering deals with labeled graphs, we first extend the definition of the dynamic streams to graphs as such.

\paragraph{The dynamic stream of labeled graphs.} For a labeled graph $G$, its dynamic stream consists a length-$T$ sequence of tuples $\tseq{\sigma_{1}, \sigma_{2},\cdots,\sigma_{T}}$, where each $\sigma_{i}$ is consist of
\[\sigma_{i} = (u_{i}, v_{i}, \Delta_{i}),\]
such that $u_{i}$ and $v_{i}$ are a pair of vertices, and $\Delta_{i}$ is the update of the label. There are only four types of updates allowed: insertion of $(+)$, removal of $(+)$, insertion of $(-)$, and removal of $(-)$. Furthermore, for the vertex pair $(u,v)$, any removal of the label can only appear \emph{after} the insertion, and any insertion cannot happen if there is already a labeled edge. Finally, by the end of the stream, every edge should have a label in $\{-1, +1\}$. 

We remark that the model is a natural extension of the general dynamic graph streams. The model essentially assigns two types of edge weights (`$+$' and `$-$'), and ensures that the change of types must be followed by the removal of the other type. To see intuitively why there exists an algorithm to handle dynamic streams as such, note that all the operations on the graph in \Cref{alg:cc-sub-space} (Lines~\ref{line:stream-degree} to \ref{line:stream-sample}) are based on \emph{sampling} from the graph. Therefore, it is possible to use the celebrated $\ell_{0}$ sampler to sample edges in a dynamic stream in the same manner. 

We now formally give the statement of the algorithm.

\begin{proposition}
\label{prop:ub-dynamic-streaming}
Suppose a labeled graph $G=(V,\, E)$ is specified via a \emph{dynamic stream} of edges of $G$. Then, there exists a randomized algorithm that with high probability finds an $O(1)$-approximation to the correlation clustering problem on $G$ in $\tilde{O}(n)$ space. 
\end{proposition}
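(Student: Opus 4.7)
The plan is to mimic \Cref{alg:cc-sub-space} by replacing each of its three data-collection primitives with a dynamic-stream analogue, and then feed the resulting samples unchanged into the recovery algorithm of \Cref{thm:decomposition} and the clustering scheme of \Cref{thm:cc-alg}. Concretely, under $(+)$-edge insertions and deletions we must maintain: (i) the degree $\degp{v}$ of every vertex $v$, (ii) a set $\NS{v}$ of $t = O(\eps^{-2}\log n)$ independent uniform samples (with repetition) from $N^{+}(v)$ for every $v$, and (iii) a set $\sample \subseteq V$ in which each $v$ is included independently with probability $\min\set{(c\log n)/\degp{v},1}$ together with the full set $N^{+}(v)$ for each $v \in \sample$. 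Given these three objects at the end of the stream, the rest of the argument proceeds exactly as in the proof of \Cref{thm:sub-space-alg}.

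The first two primitives are immediate under dynamic streams. A single $O(\log n)$-bit counter per vertex, incremented on $(+)$-insertions and decremented on $(+)$-deletions, maintains $\degp{v}$ exactly in $O(n\log n)$ total space. For $\NS{v}$, we attach $t$ independent $\ell_0$-samplers to the characteristic vector of $E^{+}(v)$ in the stream; each $\ell_0$-sampler uses $O(\log^2 n)$ bits and returns a uniformly random element of $N^{+}(v)$ with high probability, so the total space here is $\tilde{O}(n)$ as well.

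The main obstacle is the third primitive: the reservoir-style removal in \Cref{lem:vertex-sample} relies on the running degree in a way that cannot be ``undone'' once an edge is later deleted. The plan is to replace it by a level-based construction combined with dynamic sparse recovery. Before reading the stream, draw independent levels $L(v) \in \set{0,1,\ldots,\lceil\log n\rceil}$ for each $v \in V$ with $\Pr\paren{L(v) \geq i} = 2^{-i}$. For every level $i$ and every $v$ with $L(v) \geq i$, maintain a linear $k_i$-sparse recovery sketch of the characteristic vector of $E^{+}(v)$, where $k_i := c'\log n \cdot 2^i$; such sketches exist in $O(k_i\log n)$ bits via standard CountSketch-type constructions and support dynamic updates. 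At the end of the stream, set $i^{*}(v) := \lceil \log_2(\degp{v}/(c\log n))\rceil$ for each $v$ so that $\degp{v} \leq k_{i^{*}(v)}$; declare $v \in \sample$ iff $L(v) \geq i^{*}(v)$, and then reconstruct $N^{+}(v)$ from the level-$i^{*}(v)$ sketch. A direct check gives $\Pr\paren{v \in \sample} = \Theta\paren{(c\log n)/\degp{v}}$ as required (the constant-factor slack is absorbed into $c$), and the sparse-recovery guarantee ensures $N^{+}(v)$ is recovered exactly whenever $v \in \sample$.

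For the space, at each level $i$ the expected number of vertices with $L(v) \geq i$ is $n/2^i$ and each such vertex contributes a sketch of $O(2^i\log^2 n)$ bits, so the per-level cost is $O(n\log^2 n)$ bits and the total across $O(\log n)$ levels is $O(n\log^3 n)$ bits; a Chernoff bound lifts the expectations to high-probability bounds at the lower levels, while the trivial bound of $n$ vertices handles the top $O(\log\log n)$ levels. Combining the three primitives gives $\tilde{O}(n)$ total space, and feeding $\set{\degp{v}}_{v \in V}$, $\set{\NS{v}}_{v \in V}$, and $\set{(v, N^{+}(v))}_{v \in \sample}$ into the recovery algorithm of \Cref{thm:decomposition} and then into the clustering scheme of \Cref{thm:cc-alg} yields an $O(1)$-approximation with high probability, as required.
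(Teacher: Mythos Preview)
Your plan matches the paper's high-level strategy: replace the three data-collection primitives of \Cref{alg:cc-sub-space} with dynamic-stream analogues and feed the results into \Cref{thm:decomposition} and \Cref{thm:cc-alg}. The degree counters and the $\ell_0$-samplers for $\NS{v}$ are handled identically to the paper. For the vertex-sampling primitive, your geometric-level construction with $k$-sparse recovery differs from the paper's \Cref{lem:dynamic-vertex-sample}, which instead pre-samples vertices (with replacement) into $\log n$ buckets of fixed sizes $\min\{n,\,O(n\log n/2^{i})\}$ and recovers $N^{+}(v)$ via $\Theta(2^{i}\log n)$ independent $\ell_0$-samplers using a coupon-collector argument. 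Both implementations are valid; yours is arguably more direct, while the paper's has the advantage that its space usage is deterministic rather than depending on random level assignments.

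There is, however, a real gap in your space analysis. At a level $i$ close to $\log n$ each sketch occupies $\Theta(2^{i}\log^{2} n) = \tilde{\Theta}(n)$ bits, so invoking ``the trivial bound of $n$ vertices'' at those levels gives $\tilde{O}(n^{2})$ bits, not $\tilde{O}(n)$. The fix is easy: since $\degp{v} \leq n$, you always have $i^{*}(v) \leq i_{\max} := \lceil\log_2(n/(c\log n))\rceil$, so sketches at levels above $i_{\max}$ are never consulted and need not be maintained. At every remaining level $i \leq i_{\max}$ the mean count $n/2^{i}$ is at least $c\log n$, so the multiplicative Chernoff bound already yields concentration and the per-level space is $O(n\log^{2} n)$ bits with high probability; summing over $O(\log n)$ levels gives the claimed $\tilde{O}(n)$. (Alternatively, keep all levels but use the additive Chernoff tail to bound the level-$i$ count by $O(n/2^{i}+\log n)$ with high probability; the extra $\log n$ term contributes $O(2^{i}\log^{3} n)$ per level, summing to $O(n\log^{3} n)$.)
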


Before diving into the formal proof, we first introduce the standard tools that will be used. In a dynamic stream, we use the powerful $\ell_{0}$ sampler to sample edges, specified as below:
\begin{proposition}[$\ell_{0}$ sampler]
\label{fact:l0-sampler}
For any graph $G=(V,E)$, suppose the value $\paren{\sum_{i:(u_{i},v_{i})=e}\Delta_{i}}$ for every edge $e$ is $\poly(n)$ bounded. 
Then, there exists a randomized algorithm such that given access to a dynamic stream, returns an edge $e$ along with value $\paren{\sum_{i:(u_{i},v_{i})=e}\Delta_{i}}$ in $O\paren{\log^{2}(n)\log(\frac{1}{\delta})}$ space with probability at least $(1-\delta)$, where $e$ is uniformly distributed among the edges with non-zero values by the end of the stream. 
\end{proposition}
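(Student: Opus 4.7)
\subsubsection*{Proof plan for~\Cref{fact:l0-sampler}}

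The plan is to implement a textbook $\ell_0$-sampler of Jowhari-Saglam-Tardos style, built on top of a $1$-sparse recovery primitive and geometric subsampling. Throughout, identify the edge set with $[N]$ for $N = \binom{n}{2}$ and write $f_e := \sum_{i:(u_i,v_i)=e} \Delta_i$ for the net value of edge $e$ at the end of the stream; the hypothesis gives $\card{f_e} \leq \poly(n)$.

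First, I would build a \emph{$1$-sparse recovery sketch} that uses only $O(\log n)$ bits and either reports $\bot$ or recovers the unique non-zero edge together with its value, assuming the substream has at most one non-zero entry. Concretely, over the stream restricted to some index set $I \subseteq [N]$, maintain the three counters
\begin{align*}
    \phi_0 \;=\; \sum_{e \in I} f_e, \qquad \phi_1 \;=\; \sum_{e \in I} f_e \cdot e, \qquad \phi_2 \;=\; \sum_{e \in I} f_e \cdot r^{e} \pmod{p},
\end{align*}
where $p = \Theta(n^{c})$ is a prime for a large enough constant $c$ and $r$ is a uniformly random element of $\mathbb{F}_p$ chosen at the start. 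If exactly one edge $e^\star \in I$ has $f_{e^\star} \neq 0$, then $\phi_0 = f_{e^\star}$, $\phi_1/\phi_0 = e^\star$ is an integer in $[N]$, and the identity $\phi_2 \equiv \phi_0 \cdot r^{\phi_1/\phi_0} \pmod p$ holds. If more than one edge is non-zero, a Schwartz-Zippel style argument (the polynomial $\sum f_e r^e - \phi_0 r^{\phi_1/\phi_0}$ has degree $< N$ and is non-zero) shows the verification passes with probability at most $N/p = n^{-\Omega(1)}$. Each counter has magnitude $\poly(n)$, so the sketch uses $O(\log n)$ bits and supports stream updates in $O(\log n)$ time.

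Next, I would lift this to arbitrary sparsity via geometric subsampling. Draw a pairwise (or $3$-wise) independent hash function $h : [N] \to [2N]$ and, for each level $j = 0, 1, \ldots, L$ with $L = \lceil \log(2N) \rceil$, maintain a $1$-sparse recovery sketch $S_j$ of the substream restricted to
\begin{align*}
    I_j \;=\; \set{e \in [N] : h(e) \leq 2N/2^j}.
\end{align*}
Let $k$ be the final number of edges with $f_e \neq 0$. For each $j$, the expected number of non-zero edges landing in $I_j$ is $k/2^j$, so at the geometric level $j^\star$ with $k/2^{j^\star} \in [\tfrac12, 2]$ a standard second-moment (Chebyshev) calculation using pairwise independence shows that $S_{j^\star}$ contains \emph{exactly one} non-zero edge with some constant probability $\gamma > 0$. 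Conditioned on that event, the recovered edge is uniformly distributed among the $k$ non-zero edges, because $h$ is applied symmetrically (the probability each fixed edge is the unique survivor is the same by pairwise independence, up to lower-order terms absorbed by the verification's failure probability).

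Finally, I would scan the levels $j = 0, 1, \ldots, L$ at query time and output the edge recovered by the lowest $j$ whose sketch verifies successfully; if none verifies, output $\bot$. One copy of this data structure uses $O(L \cdot \log n) = O(\log^2 n)$ bits and succeeds with probability at least $\gamma$. Running $T = O(\log(1/\delta))$ independent copies in parallel and returning the first successful output brings the failure probability below $\delta$, for a total space of $O(\log^2(n) \log(1/\delta))$ bits as required. The main obstacles I expect are $(i)$ making the verification step rigorous so that false positives in the $1$-sparse test cost only $n^{-\Omega(1)}$ probability, and $(ii)$ carefully arguing near-uniformity of the output distribution when using only bounded independence for $h$; the second can be handled either by increasing the independence of $h$ to $3$-wise and paying only an additive $n^{-\Omega(1)}$ bias, or by assuming access to a Nisan-style pseudorandom generator, but either way the bias is absorbed into the overall failure probability $\delta$.
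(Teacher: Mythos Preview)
The paper does not actually prove \Cref{fact:l0-sampler}; it is stated there as a standard primitive from the streaming literature (note the label \texttt{fact:l0-sampler}) and is simply invoked as a black box in the dynamic-stream algorithm. Your proposal, by contrast, supplies an honest construction along the lines of Jowhari--Sa\u{g}lam--Tardos: a $1$-sparse recovery sketch with a polynomial fingerprint test, stacked across $O(\log N)$ geometric subsampling levels, then amplified by $O(\log(1/\delta))$ parallel copies. This is the canonical construction and your outline is correct.

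Two small remarks. First, the statement asks for the output to be \emph{uniformly} distributed among the non-zero edges; with only pairwise (or even $O(1)$-wise) independent hashing, what you get is uniform up to an additive $n^{-\Omega(1)}$ bias, exactly as you flag in obstacle~$(ii)$. In the paper's downstream application this bias is harmless (it is folded into the overall failure probability and the union bound over $\tilde O(n)$ samplers), so the discrepancy is cosmetic, but strictly speaking your construction delivers near-uniformity rather than exact uniformity. Second, your rule of returning the \emph{lowest} level $j$ whose sketch verifies is fine for correctness, but the cleaner uniformity argument usually looks at a single level $j^\star \approx \log k$ and argues symmetry there; scanning from $j=0$ upward and stopping at the first verification mixes levels of different sparsity and makes the exact-uniformity argument messier (the survivor at a denser level need not be uniform conditioned on verification passing). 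Either tweak the output rule to report from a randomly chosen verifying level, or just accept the $n^{-\Omega(1)}$ bias you already budgeted for.
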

In the following, we refer the algorithm in \Cref{fact:l0-sampler} as a $\ell_{0}$ sampler. Note that the $\ell_{0}$ sampler works for \emph{any} graph, and it is possible to sample from a subgraph $G'\subseteq G$ (e.g. the neighboring edges of a certain vertex). With this standard tool, we show in the following lemma that it is possible to simulate the sampling of vertices and all the adjacent edges in the same manner of \Cref{lem:vertex-sample}. 
\begin{lemma}
\label{lem:dynamic-vertex-sample}
There exists a streaming algorithm that given any arbitrary graph $G=(V,E)$ (not necessarily a labeled graph) specified via any dynamic stream of its edges, by the end of the stream, with high probability outputs a collection $S$ of vertices together with $N(v)$ (the final edges indent to $v$) for all $v \in S$, such that each vertex is sampled independently and with probability $\min\set{(\beta \cdot \log{n})/\deg{v},1}$ in $V$ for some constant $\beta$. Furthermore, the space complexity of the algorithm is $\tilde{O}\paren{n}$. 
\end{lemma}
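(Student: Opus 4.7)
The strategy is to promote \Cref{lem:vertex-sample} to dynamic streams by replacing its per-vertex reservoir---which crucially exploits the monotonicity of degrees under insertions only---with a per-vertex \emph{sparse-recovery} linear sketch built on top of the $\ell_0$ samplers of \Cref{fact:l0-sampler}. The high-level design is the following. For every $v\in V$, pick an independent uniform priority $r_v\in[0,1]$ (discretized to $O(\log n)$ bits), set
\[
    k_v \;:=\; \min\!\bigl\{\lceil \beta\log(n)/r_v\rceil,\;n\bigr\},
\]
and throughout the stream maintain (a) an exact counter for $\deg_t(v)$, updated by $\pm 1$ on each edge update touching $v$, and (b) a $k_v$-sparse recovery linear sketch $\Sigma_v$ of the characteristic vector of $N(v)$. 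A standard construction built from $O(k_v\log n)$ independent $\ell_0$ samplers (peeled via invertible Bloom-filter decoding) stores $\Sigma_v$ in $\Ot(k_v)$ bits, admits $\pm 1$ coordinate updates in $\Ot(1)$ time, and at decoding time recovers $N(v)$ exactly with probability $1-n^{-c}$ whenever $\deg(v)\leq k_v$ (and reports failure otherwise).

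At the end of the stream I would, for every $v$, attempt to decode $N(v)$ from $\Sigma_v$ and include $v$ in $S$ iff decoding succeeds. By the sparse-recovery guarantee this event coincides, up to $1/\poly(n)$ failure probability, with $\deg(v)\leq k_v$, equivalently $r_v\leq \beta\log(n)/\deg(v)$. Since the $r_v$'s are independent and uniform, the resulting $S$ has exactly the distribution claimed in the lemma: each vertex is included independently and with probability $\min\{\beta\log(n)/\deg(v),\,1\}$, and the recovered sketch yields $N(v)$ exactly. A union bound over the $n$ sketches drives the overall decoding failure probability down to $n^{-c+1}$.

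For the space bound, the only nontrivial step is controlling $\sum_v k_v$. A direct computation gives
\[
    \Ex\bigl[k_v\bigr]=\int_{0}^{1}\min\bigl\{\beta\log(n)/r,\,n\bigr\}\,dr \;=\; \beta\log(n) + \beta\log(n)\cdot\ln\!\Bigl(\tfrac{n}{\beta\log n}\Bigr) \;=\; O(\log^{2} n),
\]
so $\Ex\bigl[\sum_v k_v\bigr] = O(n\log^{2}n)$. For the high-probability bound I would apply a Bernstein-style argument in the spirit of the proof of \Cref{lem:sample-helper-lem}, using that the $k_v$'s are independent, almost-surely bounded by $n$, and have a controllable second moment $\Ex[k_v^2]=O(n\log n)$. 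Combined with the $\Ot(1)$ per-entry overhead of sparse recovery and the $O(n\log n)$ bits of exact degree counters, this gives total space $\Ot(n)$ and update time $\Ot(1)$.

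The main obstacle will be the polynomial upper tail of $k_v$, which can be as large as $n$ whenever $r_v$ is tiny: a naive Chernoff bound is far too weak because of the large per-variable range. I plan to dispose of this tail by first conditioning on the high-probability event that $r_v \geq n^{-2}$ for every $v$ simultaneously (a union bound on the independently chosen priorities), which forces $k_v \leq O(n\log n)$ deterministically, and only then invoking Bernstein on the truncated variables to obtain the desired $\Ot(n)$ concentration of $\sum_v k_v$.
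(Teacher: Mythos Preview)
Your proposal is correct and takes a genuinely different route from the paper.

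The paper's construction is bucket-based: it creates $\log n$ degree buckets, pre-samples a fixed number of vertex ``slots'' into each bucket (enough that a vertex of the corresponding degree range lands in its bucket with probability $\approx\beta\log n/\deg v$), and equips every slot in bucket $i$ with $\Theta(2^i\log n)$ independent $\ell_0$ samplers so that the full neighborhood can be recovered by coupon-collector. Because bucket sizes and sampler counts are fixed in advance, the paper's space is \emph{deterministically} $\Ot(n)$, with no concentration argument needed. On the other hand, the ``independence'' claim in the lemma is only loosely met by the paper (sampling with replacement into shared buckets induces mild correlations), though this is harmless for the downstream application in \Cref{thm:decomposition}.

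Your priority-plus-sparse-recovery scheme is arguably cleaner: the $r_v$'s are genuinely independent, so the event $\{r_v\le \beta\log n/\deg(v)\}$ gives exactly the sampling law in the lemma statement, and the per-vertex $k_v$-sparse sketch recovers $N(v)$ whenever that event holds. The cost is that the total space $\sum_v \Ot(k_v)$ is now random, so you need the Bernstein step; your computation $\E[k_v]=O(\log^2 n)$, $\E[k_v^2]=O(n\log n)$ is correct and suffices. Two minor clean-ups: (i) since you already maintain exact $\deg(v)$, it is simpler and more precise to define $v\in S$ by the test $r_v\le \beta\log n/\deg(v)$ rather than ``decoding succeeds'' (otherwise rare sketch failures perturb the sampling law); (ii) your final truncation step is superfluous, as you already cap $k_v$ at $n$, so Bernstein applies directly with $M=n$.
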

\begin{proof}
The algorithm is as simple as follows.
\begin{tbox}
Sampling algorithm of~\Cref{lem:dynamic-vertex-sample}. 

\begin{enumerate}[label=$(\roman*)$]
\item Maintain $\log(n)$ buckets $\{B_{i}\}_{i=1}^{\log(n)}$, each bucket is with size $|B_{i}|:=\min\{n, 2\beta\cdot \frac{n\log(n)}{2^{i-1}}\}$. 
\item For each $B_{i}$, sample vertex uniformly at random with replacement, and put the sampled vertices to the bucket until it is full (also with replacement across buckets). 
\item For each vertex $v$ in bucket $i$, maintain $ \paren{100\cdot 2^{i}\cdot \log(n)}$ $\ell_{0}$ samplers to sample its neighbors.
\end{enumerate}
\end{tbox}
We now analyze the correctness and the space complexity.

\paragraph{Correctness.} For any vertex $v\in V$, there exists an index $i'$ such that $\deg{v}\in [2^{i'-1}, 2^{i'}]$. Fix this interval, we show that the probability for $v$ to be sampled in $B_{i'}$ is at least $\frac{\beta\cdot\log(n)}{\deg{v}}$. To see this, note that if $|B_{i'}|=n$, $v$ will be sampled surely; on the other hand, if $|B_{i'}|<n$, the probability for $v$ not to be sampled in \emph{one} sampling of $B_{i'}$ is $(1-\frac{1}{n})$. Therefore, the probability for $v$ to be sampled at least once can be calculated as
\begin{align*}
\Pr\paren{\text{$v$ is sampled at least once}} &= 1-\paren{1-\frac{1}{n}}^{|B_{i'}|}\\
&\geq 1-\exp\paren{-\frac{2\beta\log(n)}{2^{i'-1}}} \tag{$1-x\leq \exp(-x)$}\\
&\geq 1-\paren{1-\frac{\beta\log(n)}{2^{i'-1}}} \tag{$\exp(-x)\leq 1-\frac{x}{2}$ for $x\in [0,1]$}\\
&= \frac{\beta\log(n)}{2^{i'-1}} \geq \frac{\beta\log(n)}{\deg{v}}. \tag{$\deg{v}\geq 2^{i'-1}$}
\end{align*}
Conditioning on vertex $v$ is sampled in bucket $i'$, we show that all the edges indent to $v$ can be sampled with high probability. To see this, note that for one edge $e$ that is indent to $v$, the probability for it not to be sampled by one $\ell_{0}$ sampler is $(1-\frac{1}{\deg{v}})$. Therefore, the probability for it not to be sampled by any $\ell_{0}$ sampler is at most 
\[(1-\frac{1}{\deg{v}})^{100\cdot 2^{i'}\cdot \log(n)}\leq \frac{1}{n^{10}},\] 
where the inequality is obtained by using $(1-x)\leq \exp(-x)$ and $\deg{v}\leq 2^{i'}$. Hence, we can apply a union bound and conclude the correctness for \emph{all} vertices and \emph{all} the adjacent edges.
 
\paragraph{Space complexity.} 
There are two sources of space complexity: the vertices we store and the $\ell_{0}$ samplers. For the vertex we store, the total number is less than $2\beta\cdot \sum_{i=1}^{\infty} \frac{n\log(n)}{2^{i-1}} = O(n\log(n))$. The number of $\ell_{0}$ samplers can be divided into two parts. For the buckets with $i$ such that $|B_{i}|\geq n$ (`low-index buckets'), we have $2^{i-1}\leq 2\beta\log(n)$. Therefore, the number of $\ell_{0}$ samplers one can maintain for each low-index bucket is at most $O(n\cdot 2^{i}\cdot \log(n)) = O(n\cdot \log^2(n))$. Furthermore, for the buckets with size such that $|B_{i}|< n$ (`high-index buckets'), the number of vertices to be stored is at most $2\beta\cdot \frac{n\log(n)}{2^{i-1}}$. Therefore, the number of $\ell_{0}$ samplers one can maintain for each high-index bucket is at most $O(\frac{n\log(n)}{2^{i-1}} \cdot 2^{i}\cdot \log(n)) = O(n\cdot \log^2(n))$. Each vertex takes only $O(1)$ words to store, and each $\ell_{0}$ sampler can be implemented with $O(\log^3(n))$ space to ensure high probability. Therefore, the total space cost is $\tilde{O}(n)$. \Qed{\Cref{lem:dynamic-vertex-sample}} 

\end{proof}

We are now ready to prove \Cref{prop:ub-dynamic-streaming} with the machinery we developed above.

\begin{proof}[Proof of \Cref{prop:ub-dynamic-streaming}]
By some modification of \Cref{alg:cc-sub-space}, the algorithm is as follows.
\begin{Algorithm}\label{alg:cc-dym-stream}
{A single-pass semi-streaming algorithm for correlation clustering in dynamic streams.} 

\begin{itemize}
\item \textbf{Input:} A labeled graph $G=(V,E)$ specified via a dynamic stream of the edges. 
\end{itemize}

\begin{enumerate}[label=$(\roman*)$]

\item Let $\eps > 0$ be a sufficiently small \underline{constant} as prescribed by~\Cref{thm:decomposition}. 

\item \underline{Pre-processing}: For each every tuple $\sigma_{i} = (u_{i}, v_{i}, \Delta_{i})$, \emph{ignore} $\sigma_{i}$ if $\Delta_{i}$ is an update for $(-)$ edges.

\item \label{line:dynamic-stream-degree} For each vertex $v\in V$, use a counter over edges of $E^+(v)$ to maintain $\degp{v}$. 

\item \label{line:dynamic-stream-neighbor} For each vertex $v$, maintain $\paren{t = \frac{c \cdot \log{n}}{\eps^2}}$ $\ell_{0}$ samplers to sample neighbors of $v$ from $N^+(v)$ (with repetition) to get $\NS{v}$ . 

\item \label{line:dynamic-stream-sample} Run the algorithm of~\Cref{lem:dynamic-vertex-sample} on the dynamic stream with parameter $\beta = c$. Let $\sample$ be the final set of vertices maintained by the algorithm. 

\item \label{line:dynamic-decompose-query} Run the algorithm of~\Cref{thm:decomposition} for sparse-dense decomposition with parameter $\eps$ and the inputs $\{\NS{v}\}_{v \in V}$ and $\set{N^+(v)}_{v \in \sample}$ to its recovery algorithm. 

\item \label{line:dynamic-clustering-query} Output clustering $\ALG$ based on the resulting $\Vsparse \sqcup K_1 \sqcup \ldots \sqcup K_k$ as prescribed in \Cref{thm:cc-alg}. 

\end{enumerate}
\end{Algorithm}
We now analyze the correctness and the space complexity of the algorithm.

\paragraph{Correctness.} Note that by the \underline{pre-processing} step, the sampling only happens for the $G^+$ subgraph. 
Therefore, by setting $\delta=1/\poly(n)$, with high probability, Line~\ref{line:dynamic-stream-neighbor} returns the exact random edge samples as prescribed by \Cref{thm:decomposition}. Furthermore, by the guarantee of \Cref{lem:dynamic-vertex-sample}, Lines~\ref{line:dynamic-stream-degree} to \ref{line:dynamic-stream-sample} give the exact information for the algorithm of \Cref{thm:decomposition}. Therefore, we can get a valid sparse-dense decomposition of the positive subgraph $G^+$ with high probability, and returns an $O(1)$-approximation of correlation clustering as shown in \Cref{thm:cc-alg}.

\paragraph{Space complexity.} Line~\ref{line:dynamic-stream-neighbor} requires storing $O(n)$ numbers. Line~\ref{line:dynamic-stream-sample} requires using $\paren{n\cdot t}$ $\ell_{0}$ samplers, and each of them takes $O\paren{\polylog(n)}$ space (setting $\delta = 1/\poly(n)$). This leads to a total $\tilde{O}(n)$ space for Line~\ref{line:dynamic-stream-sample}. Finally, by~\Cref{lem:dynamic-vertex-sample}, the space complexity of Line~\ref{line:dynamic-clustering-query} is $\tilde{O}(n)$ with high probability. Therefore, by the same argument used in \Cref{alg:cc-sub-space}, the total space complexity is $\tilde{O}(n)$. \Qed{\Cref{prop:ub-dynamic-streaming}} 

\end{proof}

There are two main implications of \Cref{prop:ub-dynamic-streaming}. Firstly, since the algorithm can deal with dynamic streams, it is possible to obtain $O(1)$-approximation correlation clustering in $\tilde{O}(n)$ space when the labels of the edges might \emph{change}. This extends the power of our algorithm beyond the memory efficiency, and paves the way for applications with \emph{temporal} relationships between data points. The second implication is on streaming correlation clustering with access of graph $G^-$. By the algorithm for dynamic stream, we can show that there exists a semi-streaming algorithm even the stream only contains the edges of $G^-$. 
\begin{proposition}
\label{prop:ub-streaming-G-minus}
Suppose a labeled graph $G=(V,\, E)$ is specified via a stream of edges of the positive subgraph $G^-$ of $G$. Then, there exists an algorithm that with high probability finds an $O(1)$-approximation to the correlation clustering problem on $G$ in $\tilde{O}\paren{n}$ space. 
\end{proposition}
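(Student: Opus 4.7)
The plan is to reduce directly to Proposition~\ref{prop:ub-dynamic-streaming} by manufacturing on the fly a dynamic stream whose terminal state is the $(+)$-subgraph $G^+$. Since $G$ is complete and labeled, $E^+ = \binom{V}{2} \setminus E^-$, so the vector $x \in \{0,1\}^{\binom{V}{2}}$ with $x_e := \mathbf{1}[e \in E^+]$ can be realized as follows. We virtually initialize $x$ to the all-ones vector $\mathbf{1}$ (corresponding to every pair being a $(+)$-edge) and, for each $(-)$-edge $(u,v)$ read from the input stream, apply a $-1$ update to coordinate $\{u,v\}$ of $x$. The resulting (virtual) dynamic stream has exactly $E^+$ as its final support, so invoking Algorithm~\ref{alg:cc-dym-stream} on it returns the desired $O(1)$-approximation clustering of $G$ with high probability by Proposition~\ref{prop:ub-dynamic-streaming}.

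The only non-trivial issue is that we cannot afford to actually feed $\binom{n}{2}$ explicit insertions into the sketches used by Algorithm~\ref{alg:cc-dym-stream}. Fortunately, all the randomized sampling primitives invoked there---the $\ell_0$-samplers of Fact~\ref{fact:l0-sampler} and the bucket-based subroutine of Lemma~\ref{lem:dynamic-vertex-sample}---are \emph{linear sketches} of the dynamic-stream vector. For any linear sketch $Sx$, the initial state $S\mathbf{1}$ corresponding to the all-ones input is a deterministic function of the sketch's internal random coins, and it can be computed and stored once, up front, within the same $\tilde{O}(n)$ space budget as Proposition~\ref{prop:ub-dynamic-streaming}. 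After this initialization, each arriving $(-)$-edge in the input stream triggers one $-1$ linear update per relevant sketch, exactly as in the dynamic-stream algorithm.

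Two implementation notes close the argument. First, the degrees $\degp{v}$ required by the recovery algorithm of Theorem~\ref{thm:decomposition} are obtained via $\degp{v} = (n-1) - \degn{v}$, where $\degn{v}$ is maintained by a single counter per vertex over the input stream, for $O(n)$ total space. Second, we never need to materialize any explicit $(+)$-edge: all neighborhood sets $\NS{v}$ and all sampled vertex-neighborhoods $N^+(v)$ for $v \in \sample$ are extracted at the end from the linear sketches just as in Proposition~\ref{prop:ub-dynamic-streaming}. Combining these two observations with the reduction above, we inherit the $\tilde{O}(n)$ space bound, and the correctness guarantee follows verbatim from Proposition~\ref{prop:ub-dynamic-streaming}: the sparse-dense decomposition of $G^+$ is computed correctly with high probability, and Theorem~\ref{thm:cc-alg} then produces the $O(1)$-approximation for $G$.

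The main obstacle is the upfront linear-sketch initialization step: one has to verify explicitly that the $\ell_0$-sampler instantiation used in Fact~\ref{fact:l0-sampler}, as well as the per-bucket sketches of Lemma~\ref{lem:dynamic-vertex-sample}, admit a closed-form ``all-ones'' initial state computable inside the per-sketch space budget. This is standard for the textbook $\ell_0$-sampler of Jowhari--Sa\u{g}lam--Tardos and its relatives (whose sketches are linear over $x$ with hash-defined coefficients, so $S\mathbf{1}$ is a precomputable column-sum), but it deserves to be spelled out; modulo this routine verification, the remainder of the proof is a direct invocation of earlier results.
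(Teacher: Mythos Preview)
Your approach is essentially the same as the paper's: reduce to the dynamic-stream algorithm of Proposition~\ref{prop:ub-dynamic-streaming} by virtually starting from the all-$(+)$ complete graph and then deleting a $(+)$-edge for every $(-)$-edge that arrives. The paper carries this out more simply by literally feeding $\binom{n}{2}$ virtual $(+)$-insertions into Algorithm~\ref{alg:cc-dym-stream} before the $G^-$ stream begins (the vertex set is known in advance), then for each arriving $(-)$-edge issuing a $(+)$-removal followed by a $(-)$-insertion; since the proposition bounds only space, not time or pass count over the \emph{actual} input, the $\binom{n}{2}$ preprocessing updates are free and the $\tilde O(n)$ space bound is inherited directly.

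Your linear-sketch initialization trick is a valid refinement that would additionally avoid the $\Theta(n^2)$ preprocessing \emph{time}, but it is not needed for the stated claim, and your worry that ``we cannot afford to actually feed $\binom{n}{2}$ explicit insertions'' is misplaced here: the sketches process updates one at a time in $\tilde O(n)$ space regardless of how many updates there are. So your proof is correct, just slightly over-engineered relative to what the statement requires.
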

\begin{proof}
The proof follows from the result of \Cref{prop:ub-dynamic-streaming}. Note that if we are given the stream of $G^-$, we can add a stream with ${n \choose 2}$ tuples that cover every edge slot and every $\Delta_{i}$ is an insertion of $(+)$. Then, we pad the stream with the edges in $G^-$ with two $\Delta$ updates for each edge: the first for removal of $(+)$, the second for insertion of $(-)$. One can then solve this dynamic stream with \Cref{alg:cc-dym-stream} in $\tilde{O}(n)$ space, which the answer is exactly the $O(1)$-correlation clustering of $G$ with high probability.  \Qed{\Cref{prop:ub-streaming-G-minus}}

\end{proof}

\end{document}